\definecolor{myblue}{HTML}{3D6288}
\definecolor{myred}{HTML}{9F1D2B}
\definecolor{mygreen}{HTML}{5B892D}
\title{Probabilistic Databases with an\texorpdfstring{\\}{ }%
	Infinite Open-World Assumption}
\author[1]{Martin Grohe}
\affil[1]{\bgroup\large\url{grohe@informatik.rwth-aachen.de}\egroup}
\author[2]{Peter Lindner}
\affil[2]{\bgroup\large\url{lindner@informatik.rwth-aachen.de}\egroup}
\affil[\bgroup\empty\egroup]{\bgroup\large RWTH Aachen University\egroup}
\newcommand{\vleft}[1]
	{\mathopen{\raisebox{-4pt}{\ensuremath{\mkern4mu\Bigg#1}}}}
\newcommand{\vright}[1]
	{\mathclose{\raisebox{-4pt}{\ensuremath{\Bigg#1}}}}
\DeclareMathOperator*{\sprod}{\mathsmaller{\prod}}
\newcommand{\sfrac}[2]{\textstyle\frac{#1}{#2}}
\let\cal\mathcal
\let\fr\mathfrak
\let\scr\mathscr
\let\bf\mathbf
\let\ds\mathbb
\let\tt\mathtt
\newcommand{\cc}{\scr C}
\newcommand{\dd}{\scr D}
\newcommand{\NN}{\ds N}
\newcommand{\RR}{\ds R}
\newcommand{\UU}{\ds U}
\newcommand{\FO}{\textsf{\upshape FO}}
\let\ol\overline
\newcommand{\SubsetsOf}[1]{2^{#1}}
\renewcommand*{\setminus}{-}
\newcommand{\under}{\ensuremath{\,{\vert}\,}}
\newcommand{\bigunder}{\ensuremath{\,{\big\vert}\,}}
\DeclareMathOperator{\Ex}{E}
\newcommand{\adom}{\operatorname{adom}}
\newcommand{\ar}{\ensuremath{\operatorname{ar}}}
\newcommand{\dbinst}[1][\tau,\UU]{{\bf D[#1]}}
\newcommand{\Facts}[1][\tau,\UU]{F[#1]}
\newcommand*{\true}{\ensuremath{\textsc{True}}}
\newcommand*{\false}{\ensuremath{\textsc{False}}}
\renewcommand{\phi}{\varphi}
\newcommand*{\bid}{b.\kern1pt i.\kern1pt d.}
\newcommand*{\ti}{t.\kern1pt i.}
\renewcommand{\epsilon}{\varepsilon}
\newcommand{\EMPTY}{\ensuremath{\text{\textsc{Empty}}}}
\newtheorem{theorem}{Theorem}[section]
\newtheorem{proposition}[theorem]{Proposition}
\newtheorem{lemma}[theorem]{Lemma}
\newtheorem{corollary}[theorem]{Corollary}
\newtheorem{fact}[theorem]{Fact}
\theoremstyle{definition}
\newtheorem{example}[theorem]{Example}
\newtheorem{definition}[theorem]{Definition}
\newtheorem{remark}[theorem]{Remark}
\theoremstyle{plain}
\newtheorem*{apxclaim}{Claim}
\newenvironment{proofsketch}{\begin{proof}[Proof sketch]}{\end{proof}}
\begin{document}

\maketitle 

\begin{abstract}
Probabilistic databases (PDBs) introduce uncertainty into relational databases
by specifying probabilities for several possible instances.  Traditionally,
they are  \emph{finite} probability spaces over database instances. Such finite
PDBs inherently make a closed-world assumption: non-occurring facts are assumed
to be impossible, rather than just unlikely. As convincingly argued by Ceylan
et al. (KR '16), this results in implausibilities and clashes with intuition.
An open-world assumption, where facts not explicitly listed may have a small
positive probability can yield more reasonable results. The corresponding
open-world model of Ceylan et al., however, assumes that all entities in the
PDB come from a fixed finite universe.\par

In this work, we take one further step and propose a model of \enquote{truly}
open-world PDBs with an infinite universe. This is natural when we consider
entities from typical domains such as integers, real numbers, or strings. While
the probability space might become infinitely large, all instances of a PDB
remain finite. We provide a sound mathematical framework for infinite PDBs
generalizing the existing theory of finite PDBs. Our main results are concerned
with countable, tuple-independent PDBs; we present a generic construction
showing that such PDBs exist in the infinite and provide a characterization of
their existence. This construction can be used to give an open-world semantics
to finite PDBs. The construction can also be extended to so-called
block-independent-disjoint probabilistic databases.\par

Algorithmic questions are not the focus of this paper, but we show how query
evaluation algorithms can be lifted from finite PDBs to perform approximate
evaluation (with an arbitrarily small additive approximation error) in
countably infinite tuple-independent PDBs.
\end{abstract}

\section{Introduction}
Probabilistic databases (PDBs) are uncertain databases where uncertainty is
quantified in terms of probabilities. The current standard model of
probabilistic databases
\cite{Aggarwal+2009,Green2009,Suciu+2011,VandenBroeck+2017} is an extension of
the relational model that associates probabilities to the facts appearing in a
relational database. Formally, it is convenient to view such a probabilistic
database, which we shall call a \emph{finite PDB} here, as a probability
distribution over a finite set of database instances of the same schema. A very
important basic class of finite PDBs is the class of \emph{tuple-independent}
finite PDBs, in which all facts, that is, events of the form \enquote{tuple $t$
appears in relation $R$}, are assumed to be stochastically independent. This
independence assumption implies that the whole probability distribution of the
PDB is fully determined by the marginal distributions of the individual facts
and that the probability of all instances can easily be calculated from these
marginal probabilities. Thus, a tuple-independent PDB can be represented as a
table (resp. as tables) of all possible facts annotated with their respective
marginal probabilities.  According to \cite{Lukasiewicz+2016}, well-known
systems that operate under the assumption of tuple-independence are, among
others, Google's Knowledge Vault \cite{Dong+2014}, the NELL project
\cite{Mitchell+2015} and DeepDive \cite{Niu+2012}.  The focus on
tuple-independent finite PDBs can be further justified by the fact that all
finite PDBs can be represented by first-order (or relational calculus) views
over tuple-independent finite PDBs (see~\cite{Suciu+2011}).\par

Modeling uncertainty by finite PDBs entails an implicit \emph{closed-world
assumption} (CWA) \cite{Reiter1978}:
\begin{itemize}
\item entities not appearing in the finitely many instances with positive 
       probability do not exist and
\item facts not appearing in these instances are strictly impossible, rather
       than just unlikely.
\end{itemize}
In tuple-independent finite PDBs, this means that facts that are not explicitly
listed with a positive probability are impossible. As has already been argued
by Ceylan, Darwiche, and Van den Broeck \cite{Ceylan+2016}, operating under the
CWA can be problematic. For example, consider a database that collects
temperature measurements in the author's offices. Due to unreliable sensors,
these measurements are inherently imprecise and the database may be regarded as
uncertain and modeled as a PDB. Now suppose that the database never records a
temperature between \SI{20.2}{\celsius} and \SI{20.5}{\celsius}. \emph{Is it
reasonable to derive that such a temperature is impossible?} Or suppose that
the data show that the temperature in the first author's office is always at
least \SI{0.1}{\celsius} below the temperature of the second author's office.
\emph{Should we conclude that it is impossible that the temperature in the
first author's office is higher than the temperature in the second author's
office?} Given the uncertainty of the data, we would rather say it is unlikely
(has low probability, where of course the exact probability depends on the
distribution modeling the uncertainty in the data). Moreover, we would expect
that the event \enquote{the temperature in the first author's office is
\SI{0.05}{\celsius} below that in the second author's office} has a higher
probability than the event \enquote{the temperature in the first author's
office is \SI{10}{\celsius} above that in the second author's office}. In a
closed-world model however, both events have the exact same
probability~$0$.\par

Considerations like these led Ceylan et al. \cite{Ceylan+2016} to proposing a
model of open-world probabilistic databases. Their model is tuple-independent,
but instead of probability $0$, facts not appearing in the database are assumed
to have a small positive probability (below some threshold $\lambda$). However,
Ceylan et al. still assume that all entities in the database come from a fixed
finite universe. Hence their model is \enquote{open-world} with respect to
facts, but not with respect to entities or values.\par

In this work, we take one step further and propose a model of \enquote{truly}
open-world probabilistic databases modeled by an infinite supply of entities.
Formally, we define a \emph{probabilistic database (PDB)} to be a probability
space over a sample space consisting of database instances of the same schema
and with entities from the same infinite universe. Note that every instance in
such a PDB is still finite; it is only the probability space and the universe
of potential entities that may be infinite.\par

There are various ways in which such probabilistic databases may arise in
practice: collecting data from unreliable sources, completing incomplete
databases by using statistical or machine learning models, or even having
datasets entirely represented by machine learning models.  This is not very
different from finite PDBs, except that often it is more natural to allow
infinite domains, for example for numerical values or for strings. One may
argue that in practice the domains are always finite (such as 64-bit integers,
64-bit floating point numbers, or strings with fixed maximum length), but
conceptually it is still much more natural to use models with an idealized
infinite domain, as it is common in most other areas of computer science and
numerical mathematics.\par

In this paper, we explore the mathematical foundations of infinite (relational)
probabilistic databases. The general definition of PDBs is given and discussed
in \cref{sec:pdb}. In \cref{sec:ti} we consider tuple-independent infinite PDBs
and show how to construct a countable, tuple-independent PDB from specified
fact probabilities. Unfortunately, the nice result, that every finite PDB can
be represented by a finite tuple-independent PDB does \emph{not} carry over to
infinite PDBs (\cref{pro:FO-def}). In addition to our investigation of
tuple-independence, we provide an extension of their existence results results
to the practically important block-independent-disjoint PDBs
(\cref{thm:bidnecsuff}). Next, in \cref{sec:completions}, we study the
\enquote{open-world} aspect of PDBs. We start from a given discrete PDB and
construct a countable \enquote{completion} that specifies probabilities for
\emph{every} imaginable instance. The key requirement for such a completion to
be reasonable is that the probability measure is faithfully extended: the new
probability measure should coincide with the old one, when conditioned over old
instances. We extend the construction of countable tuple-independent PDBs to a
construction of tuple-independent completions (\cref{thm:indfactcompl}). Albeit
query evaluation is not the focus of this paper, in \cref{sec:query-evaluation}
we hint that it is algorithmically not completely out of reach even in the
infinite setting. Using a na\"{i}ve truncation procedure, we show how to lift
query evaluation for finite PDBs to obtain approximate query answer
probabilities in the case of countably infinite tuple-independent PDBs. Note,
that this is only the very first step towards the algorithmic investigation of
our infinite PDBs.

\subsection*{Related Work}
We rely foundationally on the extensive work on finite PDBs (see, for example
\cite{Aggarwal+2009,Suciu+2011,VandenBroeck+2017}). Although some system-oriented
approaches are capable of dealing with continuous PDBs (like MCDB
\cite{Jampani+2008}, PIP \cite{Kennedy+2010}, ORION \cite{Singh+2008} and the
extended Trio system \cite{Agrawal+2009}), these systems typically model
probabilistic databases with an a priori bounded number of facts. In the case
of probabilistic XML \cite{Abiteboul+2009,Kimelfeld+2013} (that is,
probabilistic tree databases) a continuous extension with solid theoretical
foundations has been proposed \cite{Abiteboul+2011}, which however also only
allows a bounded number of facts (resp. \emph{leaf nodes}) in its instances. On
the other hand, a proposed extension of probabilistic XML that allows for
unbounded tree structures does not account for continuous distributions
\cite{Benedikt+2010}.\par

Work on \emph{incomplete databases} \cite{Abiteboul+1995,Green2009,
Imielinski+1984,vanderMeyden1998}, which is also an important source of
motivation for our work, has always naturally assumed potentially infinite
domains, but not treated them probabilistically.\par

In the context of probabilistic databases we want to emphasize again the impact
of the OpenPDB model \cite{Ceylan+2016} to our investigations. More recently,
it has been proposed to extend OpenPDBs using domain knowledge in the form of
ontologies \cite{Borgwardt+2017,Borgwardt+2018}, yielding more intuitive query
results with respect to the open-world assumption in OpenPDBs.\par

In the AI community and in probabilistic programming, open-universe models have
been considered before. Inference in probabilistic models is closely related to
query answering in probabilistic databases \cite{VandenBroeck+2017}. In this
area, some related work has been conducted, although with different backgrounds
and aims. Languages respectively models like BLOG \cite{Milch+2005}, Infinite
Domain Markov Logic \cite{Singla+2007}, probabilistic logic programming
\cite{DeRaedt+2016}, and Probabilistic Programming Datalog \cite{Barany+2017}
are capable of describing infinite probability spaces of structures. It is also
worth mentioning, that weighted first order model counting has been previously
considered in an open-universe setting with given, relation-level probabilities
\cite{Belle2017}.\par

Finally, let us point out a fundamental difference between our notion of
\emph{countable} tuple-independent PDBs and notions of limit probabilities in
asymptotic combinatorics (for example, \cite{Bollobas2001,Spencer2001}). For
example, the classical Erd\H{o}s-R\'enyi model $\scr G(n,p)$ of random graphs
is also what we would call a tuple-independent model: the edges of an
$n$-vertex graph are drawn independently with probability $p$. However, the
sample space is finite, it consists of all $n$-vertex graphs. Then the behavior
of these spaces as $n$ goes to infinity is studied. This means that the
properties of very large graphs dominate the behavior observed here. This
contrasts our model of infinite tuple-independent PDBs, which is dominated by
the behavior of PDBs whose size is close to the expected value (which for
tuple-independent PDBs is always finite). Both views have their merits, but we
believe that for studying probabilistic databases our model is better suited.

\section{Preliminaries}
By $\ds N$ we denote the set of positive integers, and by $\RR$ the set of real
numbers. We denote open, closed and half-open intervals of reals by
$(r,s),[r,s], [r,s),(r,s]$. If $M$ is a set, then $2^M$ denotes the power set
of $M$, that is, the set of all subsets of $M$.

\subsection{Relational Databases and Logic}\label{ssec:pre-dbl}
We start out by introducing basic notions of relational databases and logic
(see \cite{Abiteboul+1995}), leading us towards the definition of the standard
model of probabilistic databases of \cite{Suciu+2011} as it will be introduced
in \cref{sec:pdb}.\par

We fix an arbitrary (possibly uncountable) set $\UU$ to be the \emph{universe}
(or \emph{domain}). A \emph{database schema} $\tau = \{R_1,\dots,R_m\}$
consists of relation symbols where each relation symbol $R\in\tau$ has an
associated \emph{arity} $\ar(R)\in\NN$. A \emph{database instance} $D$ of
schema $\tau$ over $\UU$ (for short: \emph{$(\tau,\UU)$-instance}) consists of
\emph{finite} relations $R^D\subseteq\UU^{\ar(R)}$ for all $R\in\tau$. We
denote the set of all $(\tau,\UU)$-instances by $\dbinst$.\par

In terms of logic, a $(\tau,\UU)$-instance is hence a relational structure of
vocabulary $\tau$ with universe $\UU$ in which all relations are finite.\par

It will often be convenient for us (and is quite common in database theory) to
identify database instances as collections of \emph{facts} of the form
$R(a_1,\dots,a_k)$ where $R\in\tau$ is $k$-ary and $(a_1,\dots,a_k)\in \UU^k$.
By $\Facts$ we denote the set of all facts of schema $\tau$ with universe
$\UU$. Then $\dbinst$ is the set of all finite subsets of $\Facts$.  The size
$\lVert D\rVert$ of an instance $D\in\dbinst$ is the number of facts it
contains, that is, $\lVert D\rVert = \sum_{R\in\tau}\lvert R^D\rvert$. The
\emph{active domain} $\adom(D)$ of a $(\tau,\UU)$-instance is the set of all
elements of $\UU$ occurring in the relations of $D$.\par

We use standard first-order logic $\FO$ over our relational vocabulary $\tau$,
which we may expand by constants from $\UU$. By $\FO[\tau,\UU]$ we denote the
set of all first-order formulas of vocabulary $\tau\cup\UU$. Note that in our
notation we do not distinguish between an element $a\in\UU$ and the
corresponding constant and hence between a fact $R(a_1,\ldots,a_k)$ and the
corresponding atomic first-order formula. For an \FO{}-formula
$\phi(x_1,\dots,x_k)\in \FO[\tau,\UU]$ with free variables $x_1,\dots,x_k$ and
an instance $D\in \dbinst$, by $\phi(D)$ we denote the set of all tuples
$(a_1,\dots,a_k)\in\UU^k$ such that $D$ satisfies $\phi(a_1,\dots,a_k)$
(written as $D\models \phi(a_1,\dots,a_k)$).\par

\begin{fact}\label{fact:adom}
Suppose that\/ $\UU$ is infinite. Let $\phi$ be an $\FO$-formula with $k$ free
variables, i.\,e. $\phi(x_1,\dots,x_k)\in\FO[\tau,\UU]$ and let $D\in\dbinst$
such that $\phi(D)$ is finite. Then $\phi(D) \subseteq (\adom(D)\cup
\adom(\phi))^k$, where $\adom(\phi)$ denotes the set of all constants from
$\UU$ occurring in $\phi$.
\end{fact}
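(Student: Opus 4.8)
The plan is to argue by contradiction: suppose some tuple $(a_1,\dots,a_k)\in\phi(D)$ has a component $a_i$ that lies outside $\adom(D)\cup\adom(\phi)$. Since $\UU$ is infinite and both $\adom(D)$ and $\adom(\phi)$ are finite (an instance has finitely many facts, a formula finitely many constants), there are infinitely many elements of $\UU$ that occur neither in $D$ nor in $\phi$. The intuition is that such ``fresh'' elements are indistinguishable from one another from the point of view of $D$ and $\phi$, so if one fresh value can be plugged into the $i$-th coordinate to satisfy $\phi$, then all of them can, contradicting the finiteness of $\phi(D)$.

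To make this precise I would use an automorphism (or just a partial-isomorphism/back-and-forth) argument. Fix the offending tuple $\bar a=(a_1,\dots,a_k)$ and a coordinate $i$ with $a_i\notin\adom(D)\cup\adom(\phi)$. For any element $b\in\UU$ that also avoids $\adom(D)\cup\adom(\phi)\cup\{a_1,\dots,a_k\}$, consider the map $\sigma_b\colon\UU\to\UU$ that swaps $a_i$ and $b$ and fixes everything else. Because neither $a_i$ nor $b$ occurs in any fact of $D$, $\sigma_b$ is an automorphism of the structure $D$ (it maps $R^D$ to $R^D$ for every $R\in\tau$); because neither occurs in $\phi$, $\sigma_b$ fixes all constants appearing in $\phi$. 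Hence, by invariance of first-order truth under isomorphisms that fix the relevant constants, $D\models\phi(\bar a)$ implies $D\models\phi(\sigma_b(\bar a))$, i.e.\ the tuple obtained from $\bar a$ by replacing $a_i$ with $b$ also lies in $\phi(D)$. Since there are infinitely many admissible choices of $b$, and these yield pairwise distinct tuples in $\phi(D)$, we contradict the hypothesis that $\phi(D)$ is finite. Therefore every component of every tuple in $\phi(D)$ lies in $\adom(D)\cup\adom(\phi)$, which is exactly the claim $\phi(D)\subseteq(\adom(D)\cup\adom(\phi))^k$.

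I do not expect any real obstacle here; the statement is a standard locality-type observation. The only points requiring a little care are (i) checking that $\sigma_b$ is genuinely an automorphism of $D$ — this is immediate since $D$'s relations are finite subsets of $(\adom(D))^{\ar(R)}$ and $\sigma_b$ restricts to the identity on $\adom(D)$ — and (ii) invoking the right invariance lemma: first-order satisfaction is preserved by isomorphisms of $\tau$-structures, and here the structures with constants interpreted are isomorphic via $\sigma_b$ precisely because $\sigma_b$ fixes $\adom(\phi)$ pointwise. One could alternatively phrase the whole argument without automorphisms, by a direct induction on the structure of $\phi$ showing that the truth value of $\phi(\bar a)$ in $D$ is unchanged when a value not occurring in $D$ or $\phi$ is replaced by another such value; the automorphism phrasing is shorter and I would prefer it.
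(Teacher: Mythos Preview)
Your argument is correct. The paper actually states this as a \emph{Fact} in the preliminaries and does not supply a proof at all, treating it as a standard observation; your automorphism-based contradiction argument is exactly the canonical way to justify it and would serve perfectly well as the missing proof.
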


A \emph{view} of source schema $\tau$ and target schema $\tau'$ is a mapping
$V\colon\dbinst\to\dbinst[\tau',\UU]$. A ($k$-ary) query is a view $Q$ whose
target schema consists of a single ($k$-ary) relation symbol $R_Q$. Slightly
abusing notation, we usually denote the relation $\smash{R_Q^{Q(D)}}$ of the
image of an instance $D$ under $Q$ by $Q(D)$. For $0$-ary (\emph{Boolean})
queries, we identify the answer $\emptyset$ with \false{} and the answer
$\{()\}$ with \true{}. We defined queries in terms of views, but of course
views can also be regarded as finite sets of queries.\par

A view $V\colon\dbinst\to\dbinst[\tau',\UU]$ is an \emph{$\FO$-view} if for
every $k$-ary relation symbol $R\in\tau'$ there exists a first order formula
$\phi_R(x_1,\dots,x_k)\in\FO[\tau,\UU]$ such that for all $D\in\dbinst$ it
holds that $R^{V(D)} = \phi_R(D)$. 

\subsection{Series and Infinite Products}
In the analysis of independence in infinite probabilistic databases, infinite
products naturally occur. Therefore, we summarize a few important classical
results from the theory of infinite products in the following. For details we
refer the reader to chapter 7 of \cite{Knopp1996}.\par

Let $(x_i)_{i\geq 1}$ be a sequence of real numbers. Consider the series
$\sum_{i\geq 1} x_i$. If the range of the summation is clear, we might simply
write $\sum_i x_i$. The \emph{value} of $\sum_i x_i$ is the limit
$\lim_{n\to\infty}\sum_{i=1}^n x_i$ of its partial sums, given that this limit
exists. $\sum_i x_i$ \emph{converges}, if its value is existent and finite (and
\emph{diverges} otherwise). The series is called \emph{absolutely convergent},
if $\sum_i\lvert x_i\rvert$ converges. Being absolutely convergent is
equivalent to the condition that the value of the series is invariant to
reorderings of its summands.\par

An infinite product $\prod_i x_i$ \emph{converges} if there exists $i_0$ such
that 
\begin{equation}\label{eq:i0}
\lim_{n\to\infty}\sprod_{i=i_0}^n x_i
\end{equation}
exists, is finite and non-zero (and \emph{diverges} otherwise). Note, how this
definition differs from the definition of convergence of a series; see
\cite{Knopp1996} for the technical rationale. The \emph{value} of $\prod_i x_i$
is given by \eqref{eq:i0} for $i_0=0$, if it exists. Note that in particular,
diverging products may have value $0$ (which is the case if all the limits
\eqref{eq:i0} are $0$) but also converging products may have value $0$ (which
happens whenever the product contains a finite number of $0$s and the rest of
the product converges).\par

A necessary condition for infinite products to converge is that its factors
approach $1$. In analogy to series, where the corresponding criterion is that
the summands approach $0$, infinite products are commonly written in the form
$\prod_i (1+a_i)$. An infinite product $\prod_i(1+a_i)$ \emph{converges
absolutely}, if $\prod_i(1+\lvert a_i\rvert)$ converges.\par

\begin{fact}[\cite{Knopp1996}, pp. 229 and 234]\label{fac:knopp}\leavevmode
\begin{enumerate}
\item An infinite product $\prod_i(1+a_i)$ converges (absolutely) if and only
       if\/ $\sum_i a_i$ converges (absolutely).
\item An infinite product $\prod_i(1+a_i)$ converges to the same
       value under arbitrary reorderings of its factors if and only if it is
       absolutely convergent.
\end{enumerate}
\end{fact}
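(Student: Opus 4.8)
The plan is to reduce both statements to the corresponding classical facts about series by passing to logarithms. First I would record the elementary observation, already noted above, that a convergent infinite product must have factors tending to $1$; hence in either direction I may assume $a_i\to 0$ and fix an index $i_0$ with $1+a_i>0$ for all $i\geq i_0$. For $n\geq i_0$ we then have
\[
\sprod_{i=i_0}^{n}(1+a_i)=\exp\Bigl(\sum_{i=i_0}^{n}\log(1+a_i)\Bigr),
\]
and since $\exp\colon\RR\to(0,\infty)$ is a continuous bijection, the partial products on the left converge to a finite, nonzero limit precisely when the partial sums of $\sum_{i\geq i_0}\log(1+a_i)$ converge; moreover this correspondence is compatible with reorderings of the index set. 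Thus the convergence \emph{and} the reordering behaviour of $\prod_i(1+a_i)$ coincide with those of the series $\sum_i\log(1+a_i)$.

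The second ingredient is a comparison between $\sum_i\log(1+a_i)$ and $\sum_i a_i$. From the Taylor expansion of $\log(1+x)$ (or the mean value theorem) one obtains, for $\lvert x\rvert\leq\tfrac12$, an estimate of the shape $\tfrac12\lvert x\rvert\leq\lvert\log(1+x)\rvert\leq\tfrac32\lvert x\rvert$; since $a_i\to 0$, this already implies that $\sum_i\lvert\log(1+a_i)\rvert$ and $\sum_i\lvert a_i\rvert$ converge together. Applying the first step to the nonnegative sequence $(\lvert a_i\rvert)_i$ and recalling that, by definition, $\prod_i(1+a_i)$ converges absolutely iff $\prod_i(1+\lvert a_i\rvert)$ converges, this yields the ``absolutely'' half of part~(1). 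Part~(2) then follows by applying, to the series $\sum_i\log(1+a_i)$, the classical rearrangement criterion recalled above --- a real series has the same value under all reorderings of its summands iff it converges absolutely --- and transporting the conclusion back through $\exp$ and the comparison estimate.

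It remains to handle the non-``absolutely'' half of part~(1), and this is the step I expect to require the most care, since it is exactly where conditional convergence becomes relevant. Here I would use the logarithm criterion once more together with $\log(1+a_i)-a_i=O(a_i^{\,2})$. In the case that is actually needed in this paper --- where the $a_i$ are eventually of one fixed sign, e.g.\ $a_i=-p_i$ for fact probabilities $p_i\in[0,1)$ --- convergence of $\sum_i a_i$ forces $a_i\to 0$ and hence $\sum_i a_i^{\,2}<\infty$, so $\sum_i\bigl(\log(1+a_i)-a_i\bigr)$ converges absolutely and $\sum_i\log(1+a_i)$ converges iff $\sum_i a_i$ does; for arbitrary real $a_i$ I would refer to \cite{Knopp1996}. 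The logarithmic reduction and the comparison estimates are otherwise entirely routine.
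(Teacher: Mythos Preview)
The paper does not prove this statement; it is stated as a Fact with a bare citation to \cite{Knopp1996}, so there is no in-paper argument to compare against. Your logarithmic reduction together with the two-sided estimate $\tfrac12|x|\le|\log(1+x)|\le\tfrac32|x|$ for $|x|\le\tfrac12$ is exactly the standard textbook route and handles the absolute half of part~(1) and all of part~(2) cleanly.

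The gap is in your last paragraph. The plain (non-absolute) equivalence in part~(1), read for arbitrary real $a_i$, is \emph{false}. With $a_i=(-1)^i/\sqrt{i}$ the alternating series $\sum_i a_i$ converges, while $\log(1+a_i)=a_i-\tfrac12 a_i^{2}+O(|a_i|^3)$ gives $\sum_i\log(1+a_i)=\sum_i a_i-\tfrac12\sum_i\tfrac1i+O(1)\to-\infty$, so the product diverges to~$0$. In the other direction, taking $a_{2n-1}=1/\sqrt{n}$ and $a_{2n}=-a_{2n-1}/(1+a_{2n-1})$ yields $(1+a_{2n-1})(1+a_{2n})=1$ for every $n$, hence a convergent product, while $a_{2n-1}+a_{2n}=1/(n+\sqrt{n})\sim 1/n$, so $\sum_i a_i$ diverges. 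Your single-sign argument is correct and is all the paper ever uses (there $a_f=-p_f\le 0$, so convergence and absolute convergence of $\sum_f p_f$ coincide), but your closing clause ``for arbitrary real $a_i$ I would refer to \cite{Knopp1996}'' promises something that is not there: Knopp does not assert the general equivalence either. The right reading of the parenthetical ``(absolutely)'' in the Fact is as a restriction of the claim rather than as two parallel statements; you should say so explicitly instead of deferring.
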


Later on, we use arbitrary countably infinite index sets $I$ in the
consideration of infinite products and series. In that case, we fix an
arbitrary order on $I$ for the summation. Since in all of these cases, the
corresponding series will be absolutely convergent, this won't cause any
problems.\par

We will at some point use the following relationship between infinite products
and series. Its proof can be found in the \hyperref[app:23]{appendix}.

\newcounter{techinr}
\begin{lemma}[\cite{Simmons2013}]\label{lem:tech1}
Let $(a_i)_{i\in I}$ be a countably infinite sequence of real numbers such
that $\sum_i a_i$ is absolutely convergent. Then
\begin{equation}\setcounter{techinr}{\value{equation}}\label{eq:subsetsum}
  \prod_{i\in I}\big(1+a_i\big)
= \sum_{\substack{J\subseteq I\\\text{\normalfont finite}}}
  \prod_{i\in J} a_i
\end{equation}
and both sides of \eqref{eq:subsetsum} are absolutely convergent.
\end{lemma}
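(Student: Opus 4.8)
The plan is to prove the identity by showing that the right-hand side equals the limit of the partial products on the left. First I would fix the chosen enumeration $i_1, i_2, \dots$ of the countable index set $I$ and write $I_n = \{i_1,\dots,i_n\}$. The elementary observation is that for a \emph{finite} index set $I_n$, expanding the product gives exactly
\begin{equation*}
\prod_{i\in I_n}\bigl(1+a_i\bigr) = \sum_{J\subseteq I_n} \prod_{i\in J} a_i,
\end{equation*}
where the empty product is $1$; this is just the distributive law and needs no convergence hypothesis. So the task reduces to showing that, as $n\to\infty$, the left-hand side converges to $\prod_{i\in I}(1+a_i)$ and the right-hand side converges to $\sum_{J\subseteq I,\ \text{finite}}\prod_{i\in J}a_i$, and that the latter series is absolutely convergent (so that its value does not depend on how we exhaust $I$ by the $I_n$).

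For the left-hand side: since $\sum_i a_i$ is absolutely convergent, \cref{fac:knopp}(1) tells us that $\prod_i(1+a_i)$ converges absolutely, and by \cref{fac:knopp}(2) its value is invariant under reorderings; in particular $\prod_{i\in I_n}(1+a_i) \to \prod_{i\in I}(1+a_i)$. For the right-hand side I would first establish absolute convergence of $\sum_{J}\prod_{i\in J}a_i$ over all finite $J\subseteq I$. The clean way is to apply the same finite distributive identity to the nonnegative numbers $\lvert a_i\rvert$: for each $n$,
\begin{equation*}
\sum_{J\subseteq I_n}\ \prod_{i\in J}\lvert a_i\rvert = \prod_{i\in I_n}\bigl(1+\lvert a_i\rvert\bigr) \ \leq\ \prod_{i\in I}\bigl(1+\lvert a_i\rvert\bigr) < \infty,
\end{equation*}
the finiteness again coming from \cref{fac:knopp}(1) applied to $\sum_i\lvert a_i\rvert$. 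Thus the partial sums of $\sum_J\prod_{i\in J}\lvert a_i\rvert$ are bounded, so this series of nonnegative terms converges; hence $\sum_J\prod_{i\in J}a_i$ is absolutely convergent, and its value is independent of the enumeration. Consequently $\sum_{J\subseteq I_n}\prod_{i\in J}a_i$, which is a partial sum of this absolutely convergent series along the increasing exhaustion $I_1\subseteq I_2\subseteq\cdots$ with $\bigcup_n I_n = I$, converges to $\sum_{J\subseteq I,\ \text{finite}}\prod_{i\in J}a_i$. Combining the two limits with the finite identity for each $n$ yields \eqref{eq:subsetsum}, and both sides have been shown absolutely convergent along the way.

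The main obstacle, such as it is, is bookkeeping rather than depth: one has to be careful that $\{J : J\subseteq I_n\}$ really does exhaust the collection of all finite subsets of $I$ as $n\to\infty$ (true because any finite $J$ is contained in some $I_n$), and that partial summation of an absolutely convergent series along \emph{this particular} nested exhaustion — not an arbitrary one — converges to the full sum; this is where absolute convergence is essential, since it lets us ignore the order in which the finitely many new subsets appear at each step. Everything else is the distributive law and the two parts of \cref{fac:knopp}.
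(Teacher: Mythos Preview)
Your proposal is correct and follows essentially the same approach as the paper: both arguments take the finite distributive identity $\prod_{i\in I_n}(1+a_i)=\sum_{J\subseteq I_n}\prod_{i\in J}a_i$, apply it first with $\lvert a_i\rvert$ in place of $a_i$ to establish absolute convergence of the subset-sum (using \cref{fac:knopp} for the product side), and then pass to the limit in $n$. The paper's version is more compressed---it writes the whole thing as a single chain of equalities for the absolute values and then says ``the same calculation without $\lvert\cdot\rvert$''---whereas you spell out the exhaustion argument more carefully, but the mathematical content is the same.
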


\subsection{Probability Theory}\label{ssec:pre-pdb}
We review a few basic definitions from probability theory. Recall that a
\emph{$\sigma$-algebra} over a set $\Omega$ is a set $\fr A\subseteq
\SubsetsOf\Omega$ such that $\Omega\in\fr A$ and $\fr A$ is closed under
complementation and countable unions. A \emph{probability space} is a triple
$\scr S = (\Omega,\fr A,P)$ consisting of

\begin{itemize}
\item a non-empty set $\Omega$ (the \emph{sample space});
\item a $\sigma$-algebra $\fr A$ on $\Omega$  (the \emph{event space});
       and
\item a function $P\colon\fr A\to [0,1]$ (the \emph{probability
       measure}) satisfying
       \begin{enumerate}
       \item $P(\Omega) = 1$ and
       \item for every sequence $A_1,A_2,\ldots$ of 
       mutually disjoint events $A_i\in\fr A$ ($i\geq 1$)
         \[P\vleft(\bigcup_{i\ge 1}A_i\vright) = \sum_{i\geq 1}P(A_i)\text.\]
       \end{enumerate}
       (condition (2) is called \emph{$\sigma$-additivity}).
\end{itemize}

A probability space is called \emph{discrete} or \emph{countable}, if its
sample space $\Omega$ is at most countably infinite, and \emph{uncountable}
otherwise. It is called \emph{finite}, if $\Omega$ is finite. In discrete
probability spaces, $\fr A$ is usually the power set $\SubsetsOf{\Omega}$.
Then, defining $P(\{\omega\})$ for all $\omega \in \Omega$ already completely
determines the whole probability distribution due to the $\sigma$-additivity of
$P$.\par

If the components of a probability space $\scr S$ are anonymous,
$\Pr_{S\sim\scr S}$ is the probability distribution of the random variable
associated with drawing a sample from $\scr S$ (and we may omit the subscript,
if it is clear from the context). We let $\Ex(X)$ denote the expectation of a
random variable (RV) $X$.

\begin{example}\label{exa:ps}
Suppose we take a universe $\UU=\Sigma^*\cup\RR$, where $\Sigma$ is a finite
alphabet, as our sample space. To define a $\sigma$-algebra $\fr A$ on $\UU$,
we let $\fr A_1:=\SubsetsOf{\Sigma^*}$ and let $\fr A_2$ be a standard
$\sigma$-algebra over the reals $\RR$, say, the Borel sets. Then we let
$\mathfrak A$ be the set of all sets $A\subseteq \UU$ such that $A\cap\RR\in\fr
A_2$ (note that we automatically have $A\cap \Sigma^*\in\fr A_1$). To define a
probability distribution $P$, we take a distribution $P_1$ on $\fr A_1$, for
example the distribution defined by
\begin{equation*}
P_1\big(\{w\}\big)\coloneqq\sfrac{6}{\pi^2 (n+1)^2\lvert\Sigma\rvert^n}
\end{equation*}
for all words $w\in\Sigma^*$ of length $\lvert w\rvert =n$ and a distribution
$P_2$ on $\fr A_2$, say, the normal distribution $N(0,1)$ with mean $0$ and
variance $1$, and let $\smash{P(A)\coloneqq\sfrac{1}{2}P_1(A\cap\Sigma^*)+
\sfrac{1}{2}P_2(A\cap\RR)}$ for all $A\in\fr A$.\par

Note that in the definition of $P_1$ we use that $\smash{\sum_{n\ge 1}
\sfrac{1}{n^2}=\sfrac{\pi^2}{6}}$.
\end{example}\par\bigskip

A collection $(A_i)_{i\in I}$ (with arbitrary index set $I$) of events of a
probability space $(\Omega,\fr A,P)$ is called \emph{independent}, if
\[P\vleft(\bigcap_{i\in M} A_i\vright) = \prod_{i\in M}P(A_i) \qquad\text{for
every finite $M\subseteq I$.}\]\par 

If $(A_i)_{i\in I}$ is independent, then so is the sequence 
$(\ol{A_i})_{i\in I}$ of the complements of the $A_i$. If $(A_1,A_2,\dots)$ is
a \emph{countably infinite} sequence of independent events, then
\[P\vleft(\bigcap_{i\geq 1} A_i\vright) = \prod_{i\geq 1} P(A_i)\text.\]\par

We use a variant of an important classical result, known as the \emph{(Second)
Borel-Cantelli Lemma} (see, for example, \cite{Fristedt+1997}).

\begin{lemma}\label{lem:bc}
If $(\Omega,\fr A,P)$ is a probability space and $A_1,A_2,\dots$ a
sequence of pairwise independent events. If $\sum_{i\geq 1}P(A_i)=\infty$,
then \[P\vleft(\bigcap_{i\geq 1}\bigcup_{j\geq i} A_i\vright) = 1\text,\]
that is, the probability that infinitely many events $A_i$ occur is $1$.
\end{lemma}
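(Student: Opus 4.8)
The statement to prove is the Second Borel--Cantelli Lemma (Lemma \ref{lem:bc}), in the version requiring only \emph{pairwise} independence.

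\medskip

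The plan is to show that the complementary event --- that only finitely many $A_i$ occur --- has probability $0$, by a second-moment (Chebyshev-type) argument on the counting random variables. First I would fix the indicator random variables $X_i := \mathbf 1_{A_i}$ and, for each $n$, set $S_n := \sum_{i=1}^n X_i$, so that $S_n$ counts how many of $A_1,\dots,A_n$ occur. Writing $\mu_n := \Ex(S_n) = \sum_{i=1}^n P(A_i)$, the hypothesis $\sum_i P(A_i) = \infty$ gives $\mu_n \to \infty$. The key computation is the variance bound: since the $A_i$ are pairwise independent, the $X_i$ are pairwise independent, hence uncorrelated, so $\operatorname{Var}(S_n) = \sum_{i=1}^n \operatorname{Var}(X_i) = \sum_{i=1}^n P(A_i)\bigl(1-P(A_i)\bigr) \le \sum_{i=1}^n P(A_i) = \mu_n$. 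This is exactly the place where pairwise (rather than full) independence is all that is needed, and it is the crux of the proof.

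\medskip

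Next I would apply Chebyshev's inequality to control the lower deviation of $S_n$ from its mean. For any fixed $\epsilon \in (0,1)$,
\[
  P\bigl(S_n \le (1-\epsilon)\mu_n\bigr)
  \;\le\; P\bigl(\lvert S_n - \mu_n\rvert \ge \epsilon\mu_n\bigr)
  \;\le\; \frac{\operatorname{Var}(S_n)}{\epsilon^2\mu_n^2}
  \;\le\; \frac{1}{\epsilon^2\mu_n}\;\xrightarrow[n\to\infty]{}\;0 .
\]
Since $\mu_n \to \infty$, the threshold $(1-\epsilon)\mu_n$ also tends to infinity, so this says that for any constant $k$, eventually $P(S_n \le k)$ is as small as we like; more precisely $P(S_n < \infty \text{ and bounded})$ collapses. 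To make this rigorous I would argue as follows: the event $E := \bigcap_{i\ge 1}\bigcup_{j\ge i} A_j$ that infinitely many $A_i$ occur is the event that $S_n \to \infty$ as $n\to\infty$ (since $S_n$ is non-decreasing in $n$). Its complement $\ol E = \bigcup_{k\ge 1}\{S_n \le k \text{ for all } n\}$. For each fixed $k$, the set $\{S_n \le k \text{ for all }n\} \subseteq \{S_n \le k\}$ for every $n$, hence has probability at most $\inf_n P(S_n \le k)$. Taking $n$ large enough that $(1-\epsilon)\mu_n > k$ (possible as $\mu_n\to\infty$), the Chebyshev bound above gives $P(S_n \le k) \le P(S_n \le (1-\epsilon)\mu_n) \le \tfrac{1}{\epsilon^2\mu_n}$, which $\to 0$. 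Hence $P(\{S_n\le k \ \forall n\}) = 0$ for every $k$, and by $\sigma$-subadditivity $P(\ol E) = 0$, i.e. $P(E) = 1$.

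\medskip

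The main obstacle is a bookkeeping one rather than a conceptual one: making sure the measurability and monotone-limit steps are clean --- that $E$ really is the event $\{\lim_n S_n = \infty\}$, that the nested union/intersection defining $E$ matches, and that one may pass the bound through the infimum over $n$ and then the countable union over $k$. The probabilistic content is entirely in the variance identity for pairwise independent indicators plus one application of Chebyshev; the only subtlety to flag is that we do \emph{not} need $\operatorname{Var}(X_i) = P(A_i)(1-P(A_i))$ to be bounded away from anything, nor do we need the $P(A_i)$ bounded away from $0$ --- the divergence of $\sum_i P(A_i)$ alone drives $\mu_n\to\infty$, which is what the argument consumes.
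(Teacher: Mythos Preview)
Your argument is correct: this is the standard second-moment (Chebyshev) proof of the pairwise-independent Borel--Cantelli lemma, and all the steps---the variance identity $\operatorname{Var}(S_n)=\sum_i\operatorname{Var}(X_i)\le\mu_n$ from pairwise independence, the Chebyshev bound, and the passage from $P(S_n\le k)\to 0$ to $P(\ol E)=0$ via monotonicity and countable subadditivity---are sound.

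As for comparison with the paper: the paper does not give a proof of this lemma at all. It is stated as a classical result with a reference to \cite{Fristedt+1997} and then used as a black box (in the proof of \cref{lem:necessity}). So there is nothing to compare against; your proof simply supplies what the paper omits.
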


\section{Probabilistic Databases}\label{sec:pdb}
In the current literature (e.\,g. \cite{Suciu+2011}), probabilistic relational
databases are defined to be probability spaces whose sample space is a finite
set of database instances over the same schema and the same universe. We extend
this notion in a straightforward way to infinite spaces.\par

Let $\UU$ be some set and $\tau$ be a database schema. \emph{We always assume
that the universe $\UU$ implicitly comes with a $\sigma$-algebra $\fr U$.}
Moreover, we assume that $\{u\}\in\fr U$ for all $u\in\UU$. If $\UU$ is
countable, this implies $\fr U = 2^\UU$. A typical uncountable universe is
$\Sigma^*\cup\mkern 1mu\RR$ for some finite alphabet $\Sigma$; we described a
natural construction of a $\sigma$-algebra for this universe in \cref{exa:ps}.
We lift the $\sigma$-algebra $\fr U$ to a $\sigma$-algebra $\fr F$ on the set
$\Facts$ of all facts by a generic product construction. That is, we let $\fr
F$ be the $\sigma$-algebra generated by all sets of the form
\[\big\{R(u_1,\dots,u_k)\colon u_1\in U_1,\dots, u_k\in U_k\big\}\] for $k$-ary
$R\in\tau$ and $U_1,\dots,U_k\in\fr U$. Note that the assumption $\{u\}\in\UU$
for all $u\in\UU$ implies $\{f\}\in\fr F$ for all $f\in\Facts$ and thus $\fr
F=2^{\Facts}$ if $\UU$ is countable.\par

In the following, we refer to the sets $F\in\fr F$ as \emph{measurable} sets of
facts.

\begin{definition}\label{def:pdb}
A \emph{probabilistic database (PDB)} of schema $\tau$ and universe $\UU$ is a
probability space $\dd = (\Omega,\fr A, P)$ such that $\Omega$ is a set of
$(\tau,\UU)$-instances and for all measurable sets $F \subseteq \Facts$ the
event $\cal E_F \coloneqq \{D\in\Omega\colon F\cap D\neq \emptyset\}$ belongs
to $\fr A$.
\end{definition}

Note, that if the universe $\UU$ is countable, then the containment of events
$\cal E_F$ in $\fr A$ is equivalent to the containment of the events $\cal
E_f\coloneqq \cal E_{\{f\}}$ for every fact $f$.\par

A set of database instances of the same schema over the same universe is often
called an \emph{incomplete database} and its elements are referred to
\emph{possible worlds} \cite[ch. 19]{Abiteboul+1995}. This terminology is also
used in the context of probabilistic databases. However, we prefer to call the
elements of the sample space $\Omega$ of a probabilistic database $\dd$ the
\emph{instances} of $\dd$. One reason is that we may have instances
$D\in\Omega$ with probability $0$ in $\dd$. Calling such instances
\enquote{possible worlds} may be misleading. In fact, if the sample space
$\Omega$ is uncountable, we typically have probability $0$ for every single
database instance.\par

Typically, the $\sigma$-algebra $\fr A$ of a PDB $\dd$ will be constructed by
lifting the $\sigma$-algebra on the facts (denoted by $\fr F$ above) to a
generic \footnote{By \enquote{generic} we mean that $\fr A$ will just be
constructed in a standard way from the $\sigma$-algebra $\fr F$. By providing
such a generic construction, we can avoid worrying too much about the
$\sigma$-algebras when specifying PDBs.} $\sigma$\nobreakdash-algebra $\fr A$
on the set $\dbinst$ of all finite subsets of $\Facts$. In probability theory,
probability spaces on finite or countable subsets of a probability space are
known as \emph{point processes} \cite{Daley+2003}. There are standard,
\enquote{product type} constructions for lifting $\sigma$-algebras from a set
to its finite (or countable) subsets. Yet, some issues are particular to the
database setting and require extra care. For example, we may want all
first-order views to be measurable mappings between the corresponding spaces
(cf. \cref{ssec:querysem}).  However, we are not going to delve into these
issues in this paper and refer to future work for details.

\begin{example}
Incomplete databases are often specified by relations with null values. In our
framework, we can conveniently describe a probability distribution on the
\enquote{completions} of an incomplete database.\par

Suppose that our universe is $\Sigma^*\cup\RR$, where $\Sigma$ is a standard
alphabet like ASCII or UTF-8. Further suppose that we have a schema $\tau$ that
contains a $5$-ary relation symbol $R$ with attributes \textit{FirstName},
\textit{LastName}, \textit{Gender}, \textit{Nationality} and \textit{Height}
(in this order).\par

Assume first that in this relation $R$ we have a single null value $\bot$ in a
tuple $(\tt{Peter}, \allowbreak{}\tt{Lindner}, \allowbreak{}\tt{male},
\allowbreak{}\tt{German}, \allowbreak{}\bot)$.  We may assume that the missing
height is distributed according to a known distribution of heights of German
males, maybe a normal distribution with a mean around 180~(cm). This gives us a
probability distribution on the possible completions of our incomplete database
and hence a probabilistic database. Note that this is an uncountable
probabilistic database with a distribution derived from a normal distribution
on the reals.\par

Now assume that we have a null value in the first component of a tuple, for
example $(\bot, \allowbreak{}\tt{Grohe}, \allowbreak{}\tt{male},
\allowbreak{}\tt{German}, \allowbreak{}\tt{183})$. Again, we may complete it
according to some distribution on $\Sigma^*$. To find this distribution, we may
take a list of German names together with their frequencies. However, there may
be a small probability that the missing name does not occur in the list. We can
model this by giving a small positive probability to all strings not occurring
in the list, decaying with increasing length. Again, this would give us a
probabilistic database, this time a countable one.\par

If we have several null values, we can assume them to be independent and
complete each of them with its own distribution. This independence assumption
can be problematic, especially, if we have two null values in the same tuple.
For example if the above tuples would additionally list the birth year and the
year of graduation, we would want the birth year to refer to an earlier point
in time than the year of graduation. If we do not want to make an independence
assumption, we can directly define the joint distribution on the completions
of all missing values.\par

Note that this example is related to recent work of Libkin~\cite{Libkin2018},
in which probabilistic completions of incomplete databases are studied in terms
of limit probabilities as the size of the universe goes to infinity.
\end{example}

We call a PDB \emph{finite\,/\,\allowbreak{}discrete\,/\,\allowbreak{}%
uncountable}, if its underlying probability space is \emph{finite\,/\,%
\allowbreak{}discrete\,/\,\allowbreak{}uncountable}. Note especially, that
these notions refer to the cardinality of the sample space rather than to the
size of individual instances (which is in our framework always finite).
Sometimes (in particular in \cref{sec:ti}), we will use the term
\enquote{countable} in a looser sense for PDBs that may have an uncountable
universe, but where the probability distribution is completely determined by
the probabilities of countably many facts (and hence a countable
\enquote{sub-PDB}).\par

\subsection{Queries and Views}\label{ssec:querysem}
In this section, we define the semantics of queries and views applied to
probabilistic databases. Let $\dd=(\Omega,\fr A,P)$ be a PDB of schema $\tau$
with universe $\UU$ and let $V$ be a view of source schema $\tau$ and target
schema $\tau'$. For simplicity, let us first assume that $\dd$ is countable.
Then we let $\dd'\coloneqq V(\dd)=(\Omega',\fr A',P')$ be the PDB of schema
$\tau'$ defined via

\begin{equation}\label{eq:1}
P'\big(\{D'\}\big) \coloneqq P\big(V^{-1}(D')\big)
\end{equation}

for every $D'\in\Omega'$ where $\Omega'$ is the image of $V$ on $\Omega$.\par

In the general case, let $\fr A'$ be a $\sigma$-algebra on
$\dbinst[\tau',\UU]$.  Assume that the mapping $V$ is \emph{measurable} with
respect to $\fr A$ and $\fr A'$, that is, $V^{-1}(A')\in\fr A$ for all
$A'\in\fr A'$. Then, $\dd' \coloneqq V(\dd) = (\Omega',\fr A',P')$ is the PDB
of schema $\tau'$ defined by

\begin{equation}\label{eq:2}
P'\big(A'\big)\coloneqq P\big(V^{-1}(A')\big)
\end{equation}

for every $A'\in\fr A'$. Since we are mostly interested in countable PDBs in
this paper, we do not want to delve into a discussion of the measurability
condition.\par

The semantics of views defined in \eqref{eq:1} and \eqref{eq:2} yields a
semantics of queries on probabilistic databases as a special case. However, for
queries $Q$ one is often interested in the marginal probabilities of individual
tuples in the query answer,
\begin{equation*}
\Pr_{D\sim\dd}\big(\vec a\in Q(D)\big)\text.
\end{equation*}
Usually, this marginal probability is only of interest in countable PDBs.

\subsection{Size Distribution}\label{sec:size}
Let $\dd$ be a probabilistic database of schema $\tau$ with universe $\UU$.
Let $S_\dd$ be the random variable that associates with each instance
$D\in\dbinst$ its size $\lVert D\rVert$, that is, the number of facts that $D$
contains. Observe that if $\dd$ is countable then the expected size of an
instance of $\dd$ is \footnote{$\smash{S_\dd}$ is the sum of the indicator RV
associated with the events $\smash{\cal E_f}$. The expectation of
$0$-$1$-valued RV is equal to their probability to take the value $1$. Finally
note, that linearity of expectation holds for countably infinite sums of
$[0,\infty)$-valued RV \cite[p.~50]{Fristedt+1997}.}%

\begin{equation}\label{eq:3}
  \Ex\big(S_\dd\big)
= \smashoperator{\sum_{f\in F[\tau,\UU]}}\;
  {\textstyle\Pr_{D\sim\scr D}}\big(D\in\cal E_f\big).
\end{equation}

For uncountable PDBs, the sum in \eqref{eq:3} is replaced by an integral. It is
easy to construct examples of (countable) PDBs where $\Ex(S_\dd) = \infty$.\par

\begin{example}\label{ex:inf-exp}
Let $\tau=\{R\}$ with a unary relation symbol $R$ and $\UU=\NN$. For every
$n\geq 1$, let $\smash{p_n\coloneqq \sfrac{6}{\pi^2n^2}}$ (so $\sum_n p_n = 1$)
and let $D_n$ be a $(\tau,\UU)$-instance with
$R^{D_n}\coloneqq\{1,\dots,2^n\}$.  Then $\lVert D_n\rVert = 2^n$. Define $\dd$
by letting $\Pr_{D\sim\dd}(\{D_n\}) \coloneqq p_n$ and $\Pr_{D\sim\dd}(\{D\}) =
0$ for all $D\in\dbinst- \{D_n\colon n\in\NN\}$.\par

Then $\Ex(S_\dd) = \sum_n p_n\lVert D_n\rVert = \sum_n
\sfrac{6\cdot 2^n}{\pi^2n^2} = \infty$.
\end{example}\par

Probabilistic databases with infinite expected instance size may not be the
most relevant in practice. We will see later that tuple-independent PDBs always
have a finite expected size. While the expected instance size of a PDB can
be infinite, the probability that it is large goes to zero:

\begin{equation}\label{eq:4}
\lim_{n\to\infty}\Pr(S_{\scr D}\geq n) = 0\text.
\end{equation}

To see this, just consider the decreasing sequence of events
\begin{equation*}
A_n\coloneqq\big\{D\colon\lVert D\rVert\geq n\big\}
\end{equation*}

and let $A\coloneqq\bigcap_n A_n$. Then $A=\emptyset$, because a PDB only
contains finite instances. Thus $\lim_{n\to\infty}\Pr(A_n)=\Pr(A)=0$.

A consequence of this observation is the following useful proposition.

\begin{proposition}\label{pro:size}
Let $\dd$ be a (possibly uncountable) PDB. Then the set $F_\omega$ of all facts
$f$ with probability $p_f\coloneqq\Pr_{D\sim\dd}(D\in\cal E_f)>0$ is countable.
\end{proposition}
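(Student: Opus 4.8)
The plan is to stratify $F_\omega$ by the magnitude of the fact probabilities and to argue that, for each fixed positive threshold, only finitely many facts can exceed it. Concretely, for $n\in\NN$ I would set $F_n\coloneqq\{f\in\Facts\colon p_f>\sfrac1n\}$. Every fact in $F_\omega$ has positive probability and hence lies in some $F_n$, so $F_\omega=\bigcup_{n\ge 1}F_n$; since a countable union of finite sets is countable, it then suffices to show that each $F_n$ is finite.

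To bound $\lvert F_n\rvert$, fix $n$ and let $f_1,\dots,f_m\in F_n$ be pairwise distinct facts. Consider the random variable $X\coloneqq\sum_{i=1}^m\mathbf 1_{\cal E_{f_i}}$ counting how many of $f_1,\dots,f_m$ occur in a sampled instance; this is well defined because each $\cal E_{f_i}$ lies in $\fr A$ by \cref{def:pdb}. On the one hand, linearity of expectation gives $\Ex(X)=\sum_{i=1}^m p_{f_i}>\sfrac mn$. On the other hand, since the $f_i$ are pairwise distinct we have $X\le S_\dd$ pointwise, which lets me truncate: using \eqref{eq:4}, I pick $N=N(n)$ with $\Pr(S_\dd\ge N)<\sfrac1{2n}$, split $\Ex(X)$ over the events $\{S_\dd<N\}$ and $\{S_\dd\ge N\}$, and bound $X<N$ on the former and $X\le m$ on the latter to get $\Ex(X)<N+\sfrac m{2n}$. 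Comparing the two estimates yields $\sfrac mn<N+\sfrac m{2n}$, that is, $m<2nN$. As $N$ depends only on $n$, this shows that $F_n$ contains fewer than $2nN$ pairwise distinct facts, hence is finite, and we are done.

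I expect the only genuine obstacle to be the possibility $\Ex(S_\dd)=\infty$ (as in \cref{ex:inf-exp}): were the expected instance size finite, the chain $\Ex(S_\dd)\ge\Ex(X)>\sfrac mn$ would cap $m$ in one line, but in general it says nothing. The truncation at level $N$, legitimized by the \enquote{probability of being large tends to $0$} statement \eqref{eq:4}, is precisely the device that converts the possibly infinite quantity $\Ex(S_\dd)$ into the usable bound $N+\sfrac m{2n}$. Beyond this, the argument uses only measurability of the events $\cal E_{f_i}$ (from \cref{def:pdb}) and of $\{S_\dd\ge N\}$ (already invoked in the derivation of \eqref{eq:4}), together with finite linearity of expectation, so no further technicalities arise.
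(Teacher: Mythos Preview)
Your proof is correct and follows essentially the same approach as the paper: stratify $F_\omega$ as $\bigcup_n F_n$ with $F_n=\{f:p_f>1/n\}$, invoke \eqref{eq:4} to pick a size threshold, and compare two estimates for the expected number of the selected facts present in a random instance. The only cosmetic difference is that the paper argues by contradiction (assuming $F_k$ infinite, taking $2kn$ facts, and deriving $\Pr(Y_{2kn}>n)>0$ against the hard bound $Y_{2kn}\le n$), whereas you give a direct quantitative bound $m<2nN$; the underlying mechanism is the same.
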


\begin{proof}
For every $k\in\NN$ we let $F_k$ be the set of all facts $f$ with $p_f> 1/k$.
Then $F_\omega=\bigcup_k F_k$. We claim that for all $k$ the set $F_k$ is
finite; this will imply that $F$ is countable.\par

To prove the claim, let $k\in\NN$. Suppose for contradiction that $F_k$ is
infinite and let $f_1,f_2,\dots\in F_k$. By \eqref{eq:4}, there is an $n$ such
that $\Pr(S_\dd> n)< \smash{(2k)^{-1}}$. Choose such an $n$. For every
$i\in\NN$ let $X_i$ be the indicator random variable of the event
\enquote{$S_\dd(D)\leq n$ and $f_i\in D$} and let $Y_i\coloneqq\sum_{1\leq
j\leq i} X_j$. Then

\begin{align*}
\Pr\big(X_i=1\big) &{}= 1-\Pr\big(S_\dd(D)> n\cup f_i\notin D\big)\\
                   &{}> 1-\sfrac{1}{2k}-\big(1-\sfrac{1}{k}\big)
                      = \sfrac{1}{2k}\text,
\end{align*}

and thus $\Ex(Y_i)> i\cdot\smash{(2k)^{-1}}$. In particular, $\Ex(Y_{2kn}) >
n$, which implies that $\Pr(Y_{2kn}> n) > 0$. However, every instance $D$ with
positive $Y_{2kn}(D)$ satisfies $S_\dd(D)\leq n$ and therefore $\lVert
D\rVert\leq n$, but contains (since $Y_{2kn}>n$) at least $n$ of the facts
$f_1,f_2,\dots,f_{2kn}$. This is a contradiction.
\end{proof}

\section{Tuple-Independence in the Infinite}\label{sec:ti}
With the above framework in mind, we turn our attention to an infinite
extension of the idea of tuple-independence. This is motivated by the major
importance of tuple-independence in the traditional finite setting. As we will
see, the notions will be more involved and more fundamental questions have to
be addressed. For the following discussion let $\dd = (\Omega, \fr A, P)$ be a
probabilistic database and let $\fr F$ be a suitable $\sigma$-algebra on the
set of all facts, whose elements we call \emph{measurable} sets of facts.
Recall from \cref{sec:pdb} that $\cal E_f$ denotes the event
\enquote{the fact $f$ occurs in a randomly drawn instance}. Finite
probabilistic databases are referred to as \enquote{tuple-independent} if all
these events are independent. In consideration of \emph{infinite} probabilistic
databases, we want to broaden that notion, using the events $\cal
E_F=\bigcup_{f\in F}\cal E_f$. Recall that the definition of PDBs requires
that $\cal E_F\in\mathfrak A$ for all measurable sets $F$ of facts.

\begin{definition}
$\dd$ is called \emph{tuple-independent (\ti)} if for all collections $\cal F$
of pairwise disjoint measurable sets of facts the events $\cal E_F$ are
independent, that is, if
\begin{equation}\label{eq:generalti}
  P\vleft(\bigcap_{F\in\cal F'}\cal E_F\vright)
= \prod_{F\in\cal F'} P(\cal E_F)
\end{equation}
for all finite $\cal F'\subseteq \cal F$.
\end{definition}

Observe that this definition matches the definitions from the literature when
applied to a countable setting.

\begin{lemma}\label{lem:oldtinotion}
A \emph{countable} PDB $(\Omega,\fr A,P)$ is tuple-independent if and only if
all events $\cal E_f$ are independent.
\end{lemma}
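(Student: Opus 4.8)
The plan is as follows. The forward implication is immediate: the singletons $\{f\}$, $f\in\Facts$, form a collection of pairwise disjoint measurable sets of facts, so if $\dd$ is tuple-independent then \eqref{eq:generalti} applied to this collection says exactly that the family $(\cal E_{\{f\}})_f=(\cal E_f)_f$ is independent.

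For the converse, one assumes that $(\cal E_f)_{f\in\Facts}$ is independent, and it suffices (by the definition of independence of a family) to show that $P\bigl(\bigcap_{i=1}^n\cal E_{F_i}\bigr)=\prod_{i=1}^n P(\cal E_{F_i})$ for every $n$ and all pairwise disjoint measurable sets of facts $F_1,\dots,F_n$. The key observation is that, since $\Omega$ is countable and every instance $D\in\Omega$ is a \emph{finite} set of facts, the set $F^*:=\bigcup_{D\in\Omega}D$ of all facts occurring in some instance of $\dd$ is countable. For measurable $F$ and any $D\in\Omega$ we have $F\cap D=(F\cap F^*)\cap D$, whence $\cal E_F=\cal E_{F\cap F^*}=\bigcup_{f\in F\cap F^*}\cal E_f$, a countable union. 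So, writing $G_i:=F_i\cap F^*$ (pairwise disjoint and countable), the claim reduces to $P\bigl(\bigcap_{i=1}^n\cal E_{G_i}\bigr)=\prod_{i=1}^n P(\cal E_{G_i})$.

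I would first dispatch the case in which all $G_i$ are finite. There $\ol{\cal E_{G_i}}=\bigcap_{f\in G_i}\ol{\cal E_f}$, and since the $G_i$ are pairwise disjoint and $(\ol{\cal E_f})_{f\in\Facts}$ is independent (complements of an independent family), for every $S\subseteq\{1,\dots,n\}$ one computes $P\bigl(\bigcap_{i\in S}\ol{\cal E_{G_i}}\bigr)=\prod_{f\in\bigcup_{i\in S}G_i}P(\ol{\cal E_f})=\prod_{i\in S}P(\ol{\cal E_{G_i}})$; thus $(\ol{\cal E_{G_i}})_{i=1}^n$ is independent, and hence so is $(\cal E_{G_i})_{i=1}^n$. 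For arbitrary (possibly infinite) $G_i$, choose finite subsets $G_i^{(m)}$ with $G_i^{(m)}\uparrow G_i$; then $\cal E_{G_i^{(m)}}\uparrow\cal E_{G_i}$, and since there are only finitely many indices $i$ also $\bigcap_{i=1}^n\cal E_{G_i^{(m)}}\uparrow\bigcap_{i=1}^n\cal E_{G_i}$. By continuity of $P$ from below and the finite case, $P\bigl(\bigcap_i\cal E_{G_i}\bigr)=\lim_m P\bigl(\bigcap_i\cal E_{G_i^{(m)}}\bigr)=\lim_m\prod_i P(\cal E_{G_i^{(m)}})=\prod_i P(\cal E_{G_i})$, the last step using that the product has only $n$ factors. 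Degenerate cases (some $G_i=\emptyset$, i.e. $P(\cal E_{G_i})=0$) are trivial.

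I expect the only genuine subtlety to be the reduction to countably many facts: it is precisely countability of the sample space that forces each $\cal E_{F_i}$ to be a countable union of the atomic events $\cal E_f$, which is what makes both the appeal to independence and the monotone-limit argument go through. The remaining steps are routine, relying only on the stated fact that the complement family of an independent family is independent and on continuity of the measure from below.
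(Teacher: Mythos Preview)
Your proof is correct and follows the same core strategy as the paper: pass to complements and use $\ol{\cal E_F}=\bigcap_{f\in F}\ol{\cal E_f}$ together with independence of $(\ol{\cal E_f})_f$ to factor the probability. The paper's version is terser---it directly invokes the preliminary fact that $P\bigl(\bigcap_i A_i\bigr)=\prod_i P(A_i)$ for countable independent families rather than your monotone-limit step---and does not spell out the reduction to countably many facts via $F^*$, which you rightly make explicit.
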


\begin{proof}
Let $\cal F$ be a collection of disjoint measurable fact sets. It has to be
shown that the events $(\cal E_F)_{F\in\cal F}$ are independent (i. e.,
\eqref{eq:generalti} holds). To see this, we show the independence of
$(\ol{\cal E_F})_{F\in\cal F'}$ where $\ol{\cal E_F}=\Omega-\cal E_f$. Using
the independence of the events $\cal E_f$ (respectively $\ol{\cal E_f}$) and
the fact that $\ol{\cal E_F} = \bigcap_{f\in F} \ol{\cal E_f}$, we have

\begin{equation*}
     P\vleft(\smashoperator[r]{\bigcap_{F\in \cal F'}} \ol{\cal E_F}\vright)
= P\vleft(\bigcap_{F\in \cal F'}\bigcap_{f\in F} \ol{\cal E_f}\vright)
= \prod_{F\in\cal F'}\smash{\underbrace{\prod_{f\in F} P(\ol{\cal E_f})}_
  {\mathclap{=P(\bigcap_{f\in F}\ol{\cal E_f})}}}
= \smashoperator[r]{\prod_{F\in\cal F'}} P(\ol{\cal E_F})\text.
\end{equation*}

\end{proof}

Nevertheless, the definition can in this form be applied to uncountable PDBs,
although raising multiple issues that keep this extension from being
straightforward. In particular, the above lemma does not carry over to a
general uncountable setting as the events $\cal E_F$ are not necessarily
expressible in terms of the events $\cal E_f$ anymore using only countable
union and complementation.

\subsection{Construction}\label{ssec:construction}
Tuple-independence is a convenient setting for finite PDBs as it suffices to
specify probabilities for all possible facts to obtain a tuple-independent PDB.
In this subsection, we investigate whether the same approach works for infinite
tuple-independent PDBs. From that investigation, we will obtain a sufficient
criterion for the existence of \emph{countable} tuple-independent PDBs. We
revisit the uncountable setting towards the end of the subsection.\par

Let us consider a schema $\tau$ and a universe $\UU$. We let $\Omega \coloneqq
\dbinst$, and we let $\mathfrak A\subseteq 2^{\Omega}$ be an arbitrary
$\sigma$-algebra that contains all events $\cal E_F$ for measurable
$F\subseteq\Facts$.  In fact, for the construction here we can simply let $\fr
A\coloneqq 2^{\Omega}$.  Moreover, we assume that we are given a family
$(p_f)_{f\in\Facts}$ of numbers $p_f\in[0,1]$. The question we ask is: can we
construct a tuple-independent PDB $\dd = (\Omega,\fr A,P)$ such that $P(\cal
E_f) = p_f$ for all $f$?  We will see, that the question can be positively
answered whenever every countable sum of numbers $p_f$ is finite, that is,

\begin{equation}\label{eq:7}
  \sum_{f\in F} p_f<\infty\qquad\text{for every countable }F\subseteq\Facts
  \text.
\end{equation}

In the following, we say that $\sum_{f} p_f$ is \emph{convergent} if
\eqref{eq:7} is satisfied. This is justified by the following argument showing
that if \eqref{eq:7} holds then the set of all facts $f$ with $p_f>0$ is
countable. Thus in this case, up to 0-values, the sum $\sum_fp_f$ is countable.
Indeed, if \eqref{eq:7} holds then for every $k\in\NN$ the set $F_k$ of all $f$
such that $p_f> 1/k$ is finite. Thus the set $F_\omega=\bigcup_kF_k$ of all $f$
such that $p_f> 0$ is countable.  A consequence of this observation is that the
convergence assumption \eqref{eq:7} implies that the resulting PDB will be
countable.\par

In the following, we assume that $\sum_{f} p_f$ is convergent and let
$F_\omega$ be the (countable) set of all $f\in\Facts$ with $p_f>0$. We define a
probability measure $P$ on $(\Omega,\fr A)$ as follows. For $D\in\dbinst$, we
let

\begin{equation*}
P(\{D\}) = \sprod_{f\in D} p_f\sprod_{f\in
    F_\omega\setminus D} \big(1-p_f\big).
\end{equation*}

It follows from \cref{fac:knopp} that this product is well-defined, because
the sum $\sum_{f\in F_\omega} p_f$ is convergent an hence the product
$\prod_{f\in F_\omega} (1-p_f)$ is convergent as well.\par

Note that there are only countably many $D\in\dbinst$ such that $P(\{D\})>0$,
because $P(\{D\})>0$ implies that $D\subseteq F_\omega$, and the countable set
$F_\omega$ has only countably many finite subsets.  Let $\bf D_\omega$ be the
set of all finite subsets of $F_\omega$.  We complete the definition of the
probability measure $P$ by letting $P(A)=\sum_{D\in A\cap\bf D_\omega}P(\{D\})$
for all $A\in\fr A$.\par

The following lemma (which is a variation of a statement proven in
\cite{Renyi2007} by Rényi) ensures that $P$ is a probability measure:

\begin{lemma}[cf. \cite{Renyi2007}, p. 167 et seq.]\label{lem:tech2}
	$P(\Omega) = \sum_{D\in\bf D_\omega} P(\{D\}) = 1$.
\end{lemma}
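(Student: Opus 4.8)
The plan is to show $\sum_{D \in \bf D_\omega} P(\{D\}) = 1$ by expanding the product $\prod_{f \in F_\omega}(1 - p_f)$ using \cref{lem:tech1}, after rewriting each $P(\{D\})$ as a single product over all of $F_\omega$. The key observation is that for $D \subseteq F_\omega$ finite, we can write
\[
P(\{D\}) = \prod_{f \in D} p_f \prod_{f \in F_\omega - D}(1 - p_f)
= \prod_{f \in F_\omega}(1 - p_f) \cdot \prod_{f \in D} \frac{p_f}{1 - p_f},
\]
provided all $p_f < 1$ (the case where some $p_f = 1$ needs to be handled separately, or absorbed — I would either note that facts with $p_f = 1$ force $f \in D$ for all $D$ of positive probability and factor them out, or assume $p_f < 1$ without loss of generality after a preliminary reduction). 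Setting $a_f \coloneqq \frac{p_f}{1 - p_f}$, convergence of $\sum_{f \in F_\omega} p_f$ together with $p_f \to 0$ implies $\sum_{f \in F_\omega} a_f$ is absolutely convergent (the $a_f$ are eventually comparable to $p_f$).

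First I would establish that $\prod_{f \in F_\omega}(1 + a_f)$ converges absolutely, via \cref{fac:knopp}(1) and absolute convergence of $\sum_f a_f$. Then I would apply \cref{lem:tech1} to get
\[
\prod_{f \in F_\omega}(1 + a_f) = \sum_{\substack{J \subseteq F_\omega \\ \text{finite}}} \prod_{f \in J} a_f.
\]
Next I would observe the telescoping identity $\prod_{f \in F_\omega}(1 + a_f) = \prod_{f \in F_\omega} \frac{1}{1 - p_f} = \big(\prod_{f \in F_\omega}(1 - p_f)\big)^{-1}$, which holds because $1 + a_f = (1-p_f)^{-1}$ and the product $\prod_f(1 - p_f)$ converges to a nonzero value (again by \cref{fac:knopp}). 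Combining, multiply both sides by $\prod_{f \in F_\omega}(1 - p_f)$:
\[
1 = \prod_{f \in F_\omega}(1 - p_f) \sum_{\substack{J \subseteq F_\omega \\ \text{finite}}} \prod_{f \in J} a_f
= \sum_{\substack{J \subseteq F_\omega \\ \text{finite}}} \prod_{f \in F_\omega}(1 - p_f) \prod_{f \in J} a_f
= \sum_{\substack{J \subseteq F_\omega \\ \text{finite}}} P(\{J\}),
\]
where pulling the constant factor $\prod_f(1-p_f)$ inside the sum is justified by absolute convergence, and the last equality is exactly the rewriting of $P(\{J\})$ from above. Since $\bf D_\omega$ is precisely the set of finite subsets of $F_\omega$, this is the claim.

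The main obstacle I anticipate is the degenerate case $p_f = 1$ for some facts, since then $a_f$ is undefined and the rewriting breaks down; this requires a clean preliminary step (for instance: let $F_1 \coloneqq \{f : p_f = 1\}$, which is finite by convergence of $\sum p_f$; every $D$ with $P(\{D\}) > 0$ must contain $F_1$; so factor $F_1$ out and run the argument on $F_\omega - F_1$, where all probabilities are strictly below $1$). A secondary, more minor point is making the interchange of summation (distributing $\prod_f(1-p_f)$ over the sum, and the use of an arbitrary fixed ordering on the countable set $F_\omega$) rigorous — but this is exactly what the absolute convergence in \cref{lem:tech1} and the remark preceding \cref{lem:tech1} are there to license, so it should be a routine citation rather than real work.
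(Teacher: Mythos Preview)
Your argument is correct, but it takes a different route from the paper's. The paper applies \cref{lem:tech1} with $a_f = -p_f$ to the inner product $\prod_{f\in F_\omega\setminus D}(1-p_f)$, obtaining a double sum $\sum_{D'}\sum_{D\subseteq D'}\prod_{f\in D}p_f\prod_{f\in D'\setminus D}(-p_f)$, and then argues by a direct inclusion--exclusion style cancellation that all terms with $D'\neq\emptyset$ vanish. Your approach instead factors out the global constant $\prod_{f\in F_\omega}(1-p_f)$, applies \cref{lem:tech1} with $a_f = p_f/(1-p_f)$, and closes with the telescoping identity $1+a_f = (1-p_f)^{-1}$. Your version is slicker in that it avoids the combinatorial cancellation step entirely; on the other hand, the paper's version is uniform (it never divides by $1-p_f$ and so needs no separate treatment of facts with $p_f=1$), whereas you correctly identify that your rewriting forces a preliminary reduction to strip off the finite set $\{f: p_f=1\}$. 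Both are valid, and your anticipated obstacles are exactly the right ones.
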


\begin{proof}
Denote $F_\omega\setminus D$ by $\smash{\ol D}$ for any instance $D$. By
reordering and using \cref{lem:tech1}, we obtain:

\begin{align}
\sum_{D\in\bf D_\omega} P(\{D\})
&{}= \sum_{D\in\bf D_\omega}
                       \sprod_{f\in D} p_f
                       \sprod_{f\in\ol D} (1-p_f)
                                   \notag\\
&{}= \sum_{D\in\bf D_\omega} \sprod_{f\in D} p_f
                       \sum_{\substack{D'\subseteq \ol D\\\text{finite}}}
                       \sprod_{f\in D'} (-p_f)
                       \notag\\
&{}= \sum_{D\in\bf D_\omega} \sprod_{f\in D} p_f
                       \sum_{\substack{D'\in\bf D_\omega\\D'\supseteq D}}
                       \sprod_{f\in D'\setminus D} (-p_f)
                       \notag\\
&{}= \sum_{D\in\bf D_\omega} \sum_{\substack{D'\in\bf D_\omega\\D'\supseteq D}}
                       \sprod_{f\in D} p_f
                       \sprod_{f\in D'\setminus D} (-p_f)
                       \notag\\
&{}= \sum_{D'\in\bf D_\omega} \sum_{D\subseteq D'}
                        \sprod_{f\in D} p_f
                        \sprod_{f\in D'\setminus D} (-p_f)
                        \label{eq:6}\text.
\end{align}

Within \eqref{eq:6}, the summand for $D' = \emptyset$ collapses to a single
empty product and hence equals $1$. If on the other hand $D'\neq\emptyset$, fix
some $f_0\in D'$. It is easy to see that the subsums containing the factor
$p_{f_0}$ and those containing $-p_{f_0}$ exactly cancel each other out. Thus
$P(\Omega) = 1$.
\end{proof}

The previous lemma means that we have indeed constructed a PDB. It remains to
show that $\dd=(\Omega,\fr A,P)$ is tuple-independent and has the right
marginal probabilities for the events $\cal E_f$.

\begin{lemma}\label{lem:constristi}
$\dd$ is \ti{}, and $P(\cal E_f)=p_f$ for all facts $f\in\Facts$.
\end{lemma}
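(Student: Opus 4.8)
The plan is to reduce both claims to a single computation: for every finite set $M\subseteq F_\omega$ of facts,
\begin{equation*}
  P\bigl(\{D\in\bf D_\omega:M\subseteq D\}\bigr)=\prod_{f\in M}p_f.\tag{$\star$}
\end{equation*}
Granting $(\star)$, the statement about marginals is immediate: for $f\in F_\omega$, applying $(\star)$ with $M=\{f\}$ and recalling that $\cal E_f=\{D:f\in D\}$ and that $P$ is concentrated on $\bf D_\omega$ gives $P(\cal E_f)=p_f$; and for a fact $f\notin F_\omega$, every instance of positive probability is a subset of $F_\omega$ and hence omits $f$, so $P(\cal E_f)=0=p_f$. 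For tuple-independence I would invoke \cref{lem:oldtinotion}: since $\dd$ is countable, it suffices to show that the events $\cal E_f$ are independent. For a finite $M\subseteq\Facts$ one has $\bigcap_{f\in M}\cal E_f\cap\bf D_\omega=\{D\in\bf D_\omega:M\subseteq D\}$; if $M\not\subseteq F_\omega$ both $P(\bigcap_{f\in M}\cal E_f)$ and $\prod_{f\in M}p_f$ vanish, and if $M\subseteq F_\omega$ then $(\star)$ yields $P(\bigcap_{f\in M}\cal E_f)=\prod_{f\in M}p_f=\prod_{f\in M}P(\cal E_f)$. Hence the events $\cal E_f$ are independent and $\dd$ is \ti.

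To prove $(\star)$, fix a finite $M\subseteq F_\omega$ and put $G:=F_\omega\setminus M$. The map $D'\mapsto M\cup D'$ is a bijection from the finite subsets of $G$ onto $\{D\in\bf D_\omega:M\subseteq D\}$, and since $M$ and $D'$ are disjoint and $F_\omega\setminus(M\cup D')=G\setminus D'$, summing $P(\{M\cup D'\})$ over all finite $D'\subseteq G$ and factoring out the finite product $\prod_{f\in M}p_f$ gives
\begin{equation*}
  P\bigl(\{D:M\subseteq D\}\bigr)
  =\Bigl(\prod_{f\in M}p_f\Bigr)\sum_{\substack{D'\subseteq G\\\text{finite}}}\prod_{f\in D'}p_f\prod_{f\in G\setminus D'}(1-p_f).
\end{equation*}
Since $G\subseteq F_\omega$, the sum $\sum_{f\in G}p_f$ is still convergent in the sense of \eqref{eq:7}, so the construction preceding \cref{lem:tech2} and \cref{lem:tech2} itself apply verbatim with $F_\omega$ replaced by $G$; the remaining sum therefore equals $1$, which proves $(\star)$.

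The only step that is not pure bookkeeping is the factoring in the last display: one must check that the absolutely convergent sum over $\{D:M\subseteq D\}$ genuinely splits into the finite product $\prod_{f\in M}p_f$ times the \enquote{renormalized} \cref{lem:tech2}-sum over $G$. This is exactly the explicit, discrete counterpart of conditioning on all facts of $M$ being present, and I expect no real difficulty here. An alternative that sidesteps applying \cref{lem:tech2} in renormalized form would be to repeat the manipulation from the proof of \cref{lem:tech2} directly for $\sum_{D\supseteq M}P(\{D\})$ — reorder via \cref{lem:tech1} and telescope the alternating subsums — but reusing \cref{lem:tech2} is shorter.
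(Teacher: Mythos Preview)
Your proof is correct and follows the same overall arc as the paper: reduce tuple-independence to independence of the events $\cal E_f$ (the paper phrases this as \enquote{by an argument similar to the proof of \cref{lem:oldtinotion}}, which matches your invocation of that lemma), then establish the key identity $(\star)$ by summing over instances containing $M$, factoring out $\prod_{f\in M}p_f$, and showing the remaining sum equals $1$. The one genuine difference is in that last step. The paper, after factoring, multiplies the remainder by $1=\sum_{F'\subseteq M}\prod_{f\in F'}p_f\prod_{f\in M\setminus F'}(1-p_f)$ and then recombines the resulting double sum back into $\sum_{D\in\bf D_\omega}P(\{D\})=P(\Omega)=1$. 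You instead observe that the remainder is literally the statement of \cref{lem:tech2} applied to the smaller index set $G=F_\omega\setminus M$, and that the proof of \cref{lem:tech2} depends only on countability of the index set and convergence of $\sum_f p_f$, both of which $G$ inherits. Your route is a bit cleaner---it reuses \cref{lem:tech2} as a black box rather than replaying part of its computation---while the paper's trick has the mild virtue of landing on $P(\Omega)$ for the \emph{original} $F_\omega$ and so avoids the (harmless) appeal to a renormalized instance of the lemma.
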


\begin{proof}
By an argument similar to the proof of \cref{lem:oldtinotion}, it suffices
to check the independence of the events $\cal E_f$ for facts $f\in
F_\omega$.\par

Let $F\subseteq F_\omega$ be finite. We prove that $P\big(\bigcap_{f\in F}\cal
E_F\big) = \prod_{f\in F} p_f$. Note that this implies both $P(\cal E_f)=p_f$
\emph{and} the independence of the events $\cal E_f$ for all $f$. Let
$\Omega_F$ denote the set of instances $D\in\bf D_\omega$ with $F\subseteq D$.
We have

\begin{align*}
     P\vleft(\bigcap_{f\in F} \cal E_f\vright)
&{}= \smashoperator{\sum_{D\in\Omega_F}}\; P\big(\{D\}\big)
     \notag\\
&{}= \smashoperator[l]{\sum_{D\in\Omega_F}}
     \sprod_{f\in D} p_f
     \sprod_{f\in F_\omega\setminus D} \big(1-p_f\big)
     \notag\\
&{}= \prod_{f\in F} p_f
     \vleft(
            \sum_{D\in\Omega_F}
            \sprod_{f\in D\setminus F} p_f
            \sprod_{f\in F_\omega\setminus D} \big(1-p_f\big)
     \vright)\text.
\end{align*}

We conclude the proof by showing that the parenthesized term in the last row
equals 1. Note that its products range exactly over all facts in
$F_\omega\setminus F$. Recall that $\ol{\cal E_F}$ is the set of instances that
are disjoint from $F$.

\begin{align*}
   & \smashoperator[l]{\sum_{D\in\Omega_F}}
     \sprod_{f\in D\setminus F} p_f
     \sprod_{f \in F_\omega\setminus D} \big(1-p_f\big)
     \notag\\
={}& \smashoperator[l]{\sum_{D' \in \ol{\cal E_F}}}
     \sprod_{f\in D'} p_f
     \sprod_{\substack{f\in F_\omega\\f\notin F\cup D'}} \big(1-p_f\big)
     \smash{\overbrace{\vleft(
                       \sum_{F'\subseteq F}
                       \sprod_{f\in F'} p_f
                       \sprod_{f\in F\setminus F'} \big(1-p_f\big)
     \vright)}^{=1}}
     \notag\\
={}& \smashoperator[l]{\sum_{D'\in \ol{\cal E_F}}}
     \sum_{F'\subseteq F}
     \sprod_{f\in F'\cup D'} p_f
     \sprod_{\substack{f\in F_\omega\\f\notin F'\cup D'}} \big(1-p_f\big)
     \notag\\
={}& \smashoperator[l]{\sum_{D\in\Omega}}
     \sprod_{f\in D} p_f
     \sprod_{f\in F_\omega\setminus D} \big(1-p_f\big)
     = P(\Omega) = 1\text.\qedhere
\end{align*}

\end{proof}

The above construction, starting from a convergent series of fact probabilities
thus yields a tuple-independent PDB that realizes these given probabilities.
Note also that the given sequence of fact probabilities already determines the
whole probability space. We summarize the result in the following proposition.

\begin{proposition}\label{pro:construction}
Given a family $(p_f)_{f\in\Facts}$ of real numbers $p_f\in[0,1]$ such that
$\sum_f p_f$ is convergent, we can construct a tuple-independent PDB with 
$P(\cal E_f) = p_f$ for all $f\in\Facts$. 
\end{proposition}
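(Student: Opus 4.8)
The plan is to assemble the construction and the lemmas already established into a single statement. First I would record the decisive consequence of convergence in the sense of \eqref{eq:7}: for each $k\in\NN$ the set of facts $f$ with $p_f>1/k$ is finite, hence the set $F_\omega$ of all facts $f$ with $p_f>0$ is countable. This is exactly what keeps the whole construction inside the countable realm, and it is also why the term \enquote{convergent} is appropriate here.

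Next I would carry out the construction explicitly. Take $\Omega\coloneqq\dbinst$ and $\fr A\coloneqq 2^{\Omega}$ (which certainly contains every event $\cal E_F$), and define $P$ on singletons by
\[
  P(\{D\})\coloneqq\sprod_{f\in D}p_f\sprod_{f\in F_\omega\setminus D}\big(1-p_f\big).
\]
Here \cref{fac:knopp} applies with $a_f=-p_f$: since $\sum_{f\in F_\omega}p_f$ converges, so does $\prod_{f\in F_\omega}(1-p_f)$, and therefore the countably infinite product defining $P(\{D\})$ is a well-defined real number. Since $P(\{D\})>0$ forces $D\subseteq F_\omega$, only countably many instances carry positive mass, so I can complete the definition by $\sigma$-additivity, setting $P(A)\coloneqq\sum_{D\in A,\,D\subseteq F_\omega}P(\{D\})$ for $A\in\fr A$.

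It then remains to verify the two defining properties, and for both I would appeal to the lemmas already proven. \cref{lem:tech2} gives $P(\Omega)=\sum_{D\in\bf D_\omega}P(\{D\})=1$, so $\dd=(\Omega,\fr A,P)$ is indeed a PDB in the sense of \cref{def:pdb}. \cref{lem:constristi} shows simultaneously that $P(\cal E_f)=p_f$ for every fact $f$ and that the events $\cal E_f$ are independent; since $\dd$ is countable, \cref{lem:oldtinotion} upgrades this to full tuple-independence. I would close by remarking that the family $(p_f)_f$ determines the entire measure, as every positive-probability instance is a finite subset of $F_\omega$ whose probability is given by the displayed product.

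The conceptual obstacle has in fact already been dispatched inside the lemmas: it is \cref{lem:tech2}, the verification that the candidate weights sum to one. The delicate point there is to expand each tail product $\prod_{f\in F_\omega\setminus D}(1-p_f)$ into a sum over finite subsets via \cref{lem:tech1}, then reindex the resulting double sum over $\bf D_\omega$ so that, fixing any single fact $f_0$ occurring in $D'$, the contributions carrying the factor $p_{f_0}$ and those carrying $-p_{f_0}$ cancel. Everything else — well-definedness of the infinite products, $\sigma$-additivity of the extension, and the independence computation (which mirrors the proof of \cref{lem:oldtinotion}) — is routine given \cref{fac:knopp} and \cref{lem:tech1}.
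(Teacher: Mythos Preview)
Your proposal is correct and follows essentially the same approach as the paper: the proposition is stated there as a summary of the preceding construction, and your write-up assembles exactly those ingredients (countability of $F_\omega$, the explicit product formula for $P(\{D\})$, well-definedness via \cref{fac:knopp}, normalization via \cref{lem:tech2}, and marginals/independence via \cref{lem:constristi}). The only addition is your explicit appeal to \cref{lem:oldtinotion} to pass from independence of the $\cal E_f$ to full tuple-independence, which the paper handles inside the proof of \cref{lem:constristi} by the same argument.
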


Finally, let us briefly discuss the difficulties of obtaining a similar result
for ``truly'' uncountable PDBs, although we refer to future work for a thorough
investigation of that problem. In the countable case discussed above, we
expressed the probabilities of all events $\cal E_F$ using the probabilities
$\cal E_f$ alone. In general, we cannot do this, because all $\cal E_f$ may
have probability $0$. This raises the question even which probabilities should
actually be specified beforehand for constructing the PDB.  We leave it as an
open problem whether uncountable tuple-independent PDBs exist that do not
collapse to discrete probability spaces.

\subsection{A Necessary Existence Criterion}\label{ssec:nec}
In the previous subsection, we have seen that the convergence of $\sum_f p_f$
is a sufficient criterion for given fact probabilities to fulfill in order to
ensure the existence of a tuple-independent PDB that is compatible with these
probabilities.  Now, we will prove that this condition is also necessary,
i.\,e. that there is no tuple-independent PDB realizing a divergent series of
fact probabilities.  This result is not limited to the countable case:

\begin{lemma}\label{lem:necessity}
Let $\dd=(\Omega,\fr A,P)$ be a tuple-independent PDB. Then 

\begin{equation*}
\sum_{F\in\cal F}P(\cal E_F) < \infty
\end{equation*}

for all countably infinite collections $\cal{F}\!$ of pairwise disjoint, 
measurable sets of facts.
\end{lemma}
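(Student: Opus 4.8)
The statement says: in a tuple-independent PDB, any countably infinite collection $\mathcal F$ of pairwise disjoint measurable fact sets has $\sum_{F\in\mathcal F} P(\mathcal E_F) < \infty$. The natural strategy is to argue by contradiction via the Borel--Cantelli lemma (\cref{lem:bc}), exactly as in the motivating discussion around \eqref{eq:4}. Suppose $\sum_{F\in\mathcal F} P(\mathcal E_F) = \infty$. Enumerate $\mathcal F = \{F_1, F_2, \dots\}$ (countably infinite). By tuple-independence, the events $\mathcal E_{F_1}, \mathcal E_{F_2}, \dots$ are independent, in particular pairwise independent, and their probabilities sum to $\infty$. Hence \cref{lem:bc} applies and yields that with probability $1$, infinitely many of the events $\mathcal E_{F_i}$ occur.

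Now I want to derive a contradiction from the fact that every instance of a PDB is finite. The key observation is that the $F_i$ are \emph{pairwise disjoint}, so if $D \in \mathcal E_{F_i}$ for indices $i$ in some set $S$, then $D$ contains at least one fact from each $F_i$ with $i \in S$, and these facts are all distinct (they lie in disjoint sets). Therefore the number of indices $i$ with $D \in \mathcal E_{F_i}$ is at most $\lVert D\rVert$, which is finite. In other words, for \emph{every} instance $D \in \Omega$, only finitely many events $\mathcal E_{F_i}$ contain $D$; equivalently, the event ``infinitely many $\mathcal E_{F_i}$ occur'' --- which is $\bigcap_{i}\bigcup_{j\geq i}\mathcal E_{F_j}$ in the notation of \cref{lem:bc} --- is the empty set, hence has probability $0$. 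This contradicts the conclusion of Borel--Cantelli that it has probability $1$. Therefore $\sum_{F\in\mathcal F} P(\mathcal E_F) < \infty$.

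**Where the care is needed.** The only genuinely delicate point is the appeal to independence: \cref{lem:bc} requires pairwise independence of the $\mathcal E_{F_i}$, and tuple-independence as defined gives independence of $(\mathcal E_F)_{F\in\mathcal F}$ for pairwise disjoint families --- so I should just note that pairwise independence is an immediate consequence of \eqref{eq:generalti} applied to two-element subsets $\mathcal F' \subseteq \mathcal F$. I also need the events $\mathcal E_{F_i}$ to be genuine measurable events, which is guaranteed since each $F_i$ is measurable and the definition of a PDB (\cref{def:pdb}) puts $\mathcal E_{F_i} \in \mathfrak A$. A minor subtlety: if only finitely many $F_i$ have $P(\mathcal E_{F_i}) > 0$ the sum is trivially finite, so we may assume infinitely many are positive, but this is not even needed --- the Borel--Cantelli hypothesis $\sum P(\mathcal E_{F_i}) = \infty$ already forces infinitely many positive terms. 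One should also double-check the edge case where some $F_i = \emptyset$: then $\mathcal E_{F_i} = \emptyset$ contributes $0$ and can be discarded, so WLOG all $F_i$ are nonempty, which is what makes the ``distinct facts'' counting argument clean. No step requires real computation; the whole proof is a short combination of Borel--Cantelli with the finiteness of instances, mirroring the argument already given for \eqref{eq:4} and \cref{pro:size}.
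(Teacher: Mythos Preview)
Your proposal is correct and follows essentially the same argument as the paper: enumerate $\mathcal F$, observe that tuple-independence makes the events $\mathcal E_{F_i}$ (pairwise) independent, note that the $\limsup$ event $\bigcap_i\bigcup_{j\ge i}\mathcal E_{F_j}$ is empty because instances are finite and the $F_i$ are disjoint, and conclude via Borel--Cantelli (\cref{lem:bc}) that $\sum_i P(\mathcal E_{F_i})<\infty$. The only cosmetic difference is that the paper phrases it via the contrapositive of Borel--Cantelli rather than as an explicit proof by contradiction.
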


\begin{proof}
Since $\dd$ is tuple-independent, the events $\cal E_F$ are independent.
Suppose $\cal F = \{F_1,F_2,\dots\}$ and for $F=F_i$, let $\cal E_i\coloneqq
\cal E_F$.  Then
\begin{equation*}
 \cal E \coloneqq \limsup_{i\to\infty}\mkern4mu\cal E_i
 = \bigcap_{i\geq 1}\bigcup_{j\geq i}\mkern4mu\cal E_j
\end{equation*}
is the set of instances having a nonempty intersection with infinitely many
sets $F\in\cal F$. Since the sets from $\cal F$ are disjoint and each database
of $\Omega$ has only finitely many facts, $\cal E = \emptyset$ and $P(\cal E) =
0$. By \cref{lem:bc} (the Borel-Cantelli Lemma), this means
$\sum_{F\in\cal F} P(\cal E_F)$ converges.
\end{proof}

Note that in particular, if $\dd$ is a countable tuple-independent PDB (over
database schema  $\tau$ and universe $\UU$), then we have $\sum_{f\in\Facts}
P(\cal E_f) < \infty$. As this sum is exactly the definition of the expected
instance size \eqref{eq:3}, we immediately obtain the following.

\begin{corollary}\label{cor:finitesize}
If\/ $\dd$ is a countable tuple-independent PDB, then its expected instance
size is finite.
\end{corollary}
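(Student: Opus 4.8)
The statement to prove is \cref{cor:finitesize}: if $\dd$ is a countable tuple-independent PDB, then its expected instance size is finite.

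The plan is to obtain this as an immediate consequence of \cref{lem:necessity} together with the formula \eqref{eq:3} for the expected instance size. First I would recall from \cref{sec:size} that for a countable PDB, the expected instance size equals
\begin{equation*}
\Ex(S_\dd) = \sum_{f\in\Facts} \Pr_{D\sim\dd}(D\in\cal E_f) = \sum_{f\in\Facts} P(\cal E_f)\text.
\end{equation*}
So it suffices to show that this (possibly uncountable-index, but effectively countable by \cref{pro:size}) sum is finite.

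Next I would apply \cref{lem:necessity} to a suitable collection $\cal F$ of pairwise disjoint measurable sets of facts. The natural choice is the collection of singletons $\cal F = \{\{f\} : f\in F_\omega\}$, where $F_\omega$ is the countable set of facts $f$ with $P(\cal E_f) > 0$; this collection is countably infinite (assuming infinitely many facts have positive probability — otherwise the sum is a finite sum of numbers in $[0,1]$ and finiteness is trivial), pairwise disjoint, and consists of measurable fact sets. Then \cref{lem:necessity} gives $\sum_{f\in F_\omega} P(\cal E_f) < \infty$. Since $P(\cal E_f) = 0$ for all $f\notin F_\omega$, the full sum $\sum_{f\in\Facts} P(\cal E_f)$ equals $\sum_{f\in F_\omega} P(\cal E_f)$, which is finite; hence $\Ex(S_\dd) < \infty$.

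There is essentially no obstacle here, since the corollary is explicitly set up as a restatement of \cref{lem:necessity} via the identity \eqref{eq:3}; the only minor point requiring care is the bookkeeping between the a priori uncountable index set $\Facts$ in \eqref{eq:3} and the countable collection $\cal F$ to which \cref{lem:necessity} applies — this is handled by \cref{pro:size}, which ensures only countably many facts contribute a positive term, and by noting that omitting zero terms does not change the value of the sum. I would keep the argument to two or three sentences, essentially just citing \eqref{eq:3} and \cref{lem:necessity}.
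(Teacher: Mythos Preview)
Your proposal is correct and matches the paper's own reasoning: the paper derives the corollary in one sentence by noting that \cref{lem:necessity} applied to the singleton fact sets yields $\sum_{f\in\Facts} P(\cal E_f)<\infty$, which by \eqref{eq:3} is exactly $\Ex(S_\dd)$. Your extra care with \cref{pro:size} to handle a possibly uncountable $\Facts$ is a harmless refinement, but not something the paper spells out.
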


Finally, we can combine \cref{lem:necessity} and \cref{pro:construction} into
the following characterization of countable tuple-independent PDBs.

\begin{theorem}\label{thm:tinecsuff}
Let $(p_f)_{f\in \Facts}$ with $p_f\in[0,1]$.  There exists a tuple-independent
PDB with fact probabilities $\Pr(\cal E_f)=p_f$ for all $f\in\Facts$  if and
only if $\sum_fp_f$ is convergent.
\end{theorem}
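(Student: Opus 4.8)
The plan is to assemble \cref{thm:tinecsuff} almost entirely from results that are already in place, since the two directions correspond exactly to \cref{pro:construction} and \cref{lem:necessity}. First I would handle the \enquote{if} direction: assume $\sum_f p_f$ is convergent in the sense of \eqref{eq:7}, and simply invoke \cref{pro:construction}, which hands us a tuple-independent PDB $\dd$ with $P(\cal E_f)=p_f$ for every $f\in\Facts$. Nothing further is needed here; the whole construction in \cref{ssec:construction} (including \cref{lem:tech2} and \cref{lem:constristi}) has already done the work of verifying that $P$ is a probability measure, that $\dd$ is \ti{}, and that it realizes the prescribed marginals.

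For the \enquote{only if} direction, suppose a tuple-independent PDB $\dd=(\Omega,\fr A,P)$ with $P(\cal E_f)=p_f$ for all $f$ exists, and suppose toward a contradiction that $\sum_f p_f$ is \emph{not} convergent, i.e. \eqref{eq:7} fails. Then there is some countable $F\subseteq\Facts$ with $\sum_{f\in F}p_f=\infty$; without loss of generality $F=\{f_1,f_2,\dots\}$ is countably infinite (finitely many facts contribute a finite sum, and each $p_f\le 1$). The singletons $\{f_i\}$ form a countably infinite collection of pairwise disjoint measurable sets of facts, and $\cal E_{\{f_i\}}=\cal E_{f_i}$, so $\sum_i P(\cal E_{\{f_i\}})=\sum_i p_{f_i}=\infty$. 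This directly contradicts \cref{lem:necessity}. Hence $\sum_f p_f$ must be convergent.

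There is essentially no obstacle here beyond bookkeeping: the one small point worth stating carefully is the reduction from \enquote{$\sum_f p_f$ divergent} to \enquote{there is a countably infinite disjoint family of measurable fact sets with divergent sum of probabilities,} which is what \cref{lem:necessity} is phrased against. I would spell out that a divergent $\sum_f p_f$ by definition means some countable subfamily has infinite sum, that we may take it infinite, and that singleton fact sets $\{f\}$ are measurable (by the standing assumption $\{f\}\in\fr F$ for all $f\in\Facts$, recorded in \cref{sec:pdb}) and pairwise disjoint, with $\cal E_{\{f\}}=\cal E_f$. With that remark in place the theorem follows by quoting \cref{pro:construction} and \cref{lem:necessity}. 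I would keep the write-up to a few lines, since both halves are immediate consequences of the already-established results.

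\begin{proof}
If $\sum_f p_f$ is convergent, then \cref{pro:construction} provides a tuple-independent PDB with $P(\cal E_f)=p_f$ for all $f\in\Facts$.\par
Conversely, suppose a tuple-independent PDB $\dd=(\Omega,\fr A,P)$ with $P(\cal E_f)=p_f$ for all $f\in\Facts$ exists, but $\sum_f p_f$ is not convergent. Then \eqref{eq:7} fails, so there is a countable $F\subseteq\Facts$ with $\sum_{f\in F}p_f=\infty$. Since every $p_f\le 1$, the set $F$ must be infinite, say $F=\{f_1,f_2,\dots\}$. The singletons $\{f_i\}$ are pairwise disjoint measurable sets of facts (recall $\{f\}\in\fr F$ for every fact $f$), and $\cal E_{\{f_i\}}=\cal E_{f_i}$, so $\sum_{i\ge 1}P(\cal E_{\{f_i\}})=\sum_{i\ge 1}p_{f_i}=\infty$. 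This contradicts \cref{lem:necessity}. Hence $\sum_f p_f$ is convergent.
\end{proof}
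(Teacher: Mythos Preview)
Your proof is correct and follows exactly the paper's approach: the paper states the theorem as a direct combination of \cref{pro:construction} (for the \enquote{if} direction) and \cref{lem:necessity} (for the \enquote{only if} direction), without giving an explicit proof. Your write-up merely spells out the small bookkeeping step needed to apply \cref{lem:necessity} to singleton fact sets, which the paper leaves implicit.
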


\subsection{Definability in Tuple-Independent Probabilistic Databases}
\label{sec:expr}%
The viability of \ti{} PDBs in the finite is often justified by the well-known 
result that tuple-independent PDBs are sufficient to describe arbitrary finite 
PDBs by the means of \enquote{\FO-views}.\par

We call a PDB $\dd$ \emph{\FO-definable} over a PDB $\cc$ if there is an
$\FO$-view $V$ such that $\dd = V(\cc)$ (see \cref{ssec:querysem}).
Although not every finite PDB is itself tuple-independent, every finite PDB is
$\FO$-definable over a tuple-independent PDB \cite{Suciu+2011}. Unfortunately,
this result does not extend to infinite PDBs.

\begin{proposition}\label{pro:FO-def}
There is a countably infinite PDB \/$\dd$ that is not \FO-definable over any
tuple-independent PDB.
\end{proposition}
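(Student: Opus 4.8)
The plan is to exploit \cref{cor:finitesize}: any countable tuple-independent PDB $\cc$ has finite expected instance size, and since an $\FO$-view $V$ applied to $\cc$ cannot blow this finiteness up in an uncontrolled way, the image $V(\cc)$ must also satisfy some boundedness constraint that we can violate. More precisely, I would first pin down how the size of $V(D)$ relates to the size of $D$. By \cref{fact:adom}, if a relation $R^{V(D)} = \phi_R(D)$ is finite then it is contained in $(\adom(D)\cup\adom(\phi_R))^k$, so $\lVert V(D)\rVert \le c\cdot(\lVert D\rVert + d)^k$ for constants $c,d,k$ depending only on the fixed view $V$ (number of relations in $\tau'$, maximal arity $k$, total number of constants $d$ in the formulas $\phi_R$). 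Hence if $\Ex(S_\cc) < \infty$, then by finiteness of all moments of... — no, that is not automatic, so instead I would argue directly: I want to construct a countably infinite PDB $\dd$ whose size random variable $S_\dd$ has infinite expectation \emph{in such a way that this cannot be ``explained'' by a polynomial image of a finite-expectation tuple-independent PDB.}

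The cleanest route: take the PDB $\dd$ from \cref{ex:inf-exp} — schema $\tau = \{R\}$ with $R$ unary, $\UU = \NN$, instances $D_n$ with $R^{D_n} = \{1,\dots,2^n\}$ occurring with probability $p_n = 6/(\pi^2 n^2)$. This $\dd$ is countably infinite. Suppose for contradiction $\dd = V(\cc)$ for an $\FO$-view $V$ and a tuple-independent PDB $\cc$ of some schema $\sigma$ and universe $\UU'$. First, observe $\cc$ may be taken countable: the image $V(\cc)$ is countable (only countably many $D_n$), but more to the point, by \cref{pro:size} applied to $\cc$, only countably many facts of $\cc$ have positive probability, and — since $\cc$ is tuple-independent — by \cref{lem:constristi}-type reasoning the probability space of $\cc$ is carried by the finite subsets of those countably many facts, so $\cc$ is effectively countable and \cref{cor:finitesize} gives $\Ex(S_\cc) < \infty$. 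Now on the event $\{S_\cc \le m\}$ we have $\lVert V(D)\rVert \le c(m+d)^k$ by the $\adom$ bound above; so for $\dd = V(\cc)$ to put probability $p_n$ on an instance of size $2^n$, we need $\Pr_{D\sim\cc}(S_\cc \ge m_n) \ge p_n$ where $m_n$ is the smallest $m$ with $c(m+d)^k \ge 2^n$, i.e. $m_n = \Omega(2^{n/k})$. Then
\[
\Ex(S_\cc) = \sum_{m\ge 1}\Pr(S_\cc \ge m) \ge \sum_n \bigl(m_n - m_{n-1}\bigr)\Pr(S_\cc \ge m_n) \ge \sum_n \Omega\bigl(2^{n/k}\bigr)\cdot\frac{6}{\pi^2 n^2} = \infty,
\]
contradicting $\Ex(S_\cc)<\infty$. (One should phrase the middle inequality carefully — sum $\Pr(S_\cc\ge m)$ over the disjoint blocks $m_{n-1}<m\le m_n$ and lower-bound $\Pr(S_\cc\ge m)\ge\Pr(S_\cc\ge m_n)\ge p_n$ on each block — but this is routine.)

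The main obstacle I anticipate is the reduction ``$\cc$ may be assumed countable so that \cref{cor:finitesize} applies,'' because $\cc$ could a priori be an uncountable tuple-independent PDB whose discrete sub-PDB is what $V$ sees. I would handle this either by invoking the paper's convention that ``countable'' tuple-independent PDBs are those whose distribution is determined by countably many fact probabilities, and noting that \cref{lem:necessity} (the necessary criterion, which is \emph{not} limited to the countable case) already gives $\sum_{f} P(\cal E_f) < \infty$ for any tuple-independent $\cc$ over a fixed schema, hence a uniform bound $\Ex(S_\cc)\le\sum_f P(\cal E_f)<\infty$ on the expected number of facts regardless of cardinality — which is exactly what the argument needs. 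A secondary subtlety is that $V(\cc)$ is defined via \eqref{eq:2} and I am implicitly using that $S_{V(\cc)}$ is a measurable function and that the bound $S_{V(D)} \le c(S_D + d)^k$ holds pointwise on instances; the latter is immediate from \cref{fact:adom} once one checks each $\phi_{R}(D)$ is finite (it must be, since $V(D)$ is an instance), so no real difficulty there. I would also remark at the end that the same construction shows $\dd$ is not $\FO$-definable over any PDB of \emph{finite expected size}, which is strictly stronger and clarifies where tuple-independence is actually used (only through \cref{lem:necessity}/\cref{cor:finitesize}).
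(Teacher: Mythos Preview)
Your proposal is correct and uses the same counterexample $\dd$ (from \cref{ex:inf-exp}) and the same overall strategy as the paper: bound $\lVert V(D)\rVert$ in terms of $\lVert D\rVert$ via \cref{fact:adom}, then contradict $\Ex(S_\cc)<\infty$. The paper's execution is shorter, however, because it exploits a simplification you overlook: since the target schema consists of a single \emph{unary} relation symbol, the bound from \cref{fact:adom} is \emph{linear}, namely $\lVert V(D)\rVert \le \lvert\adom(D)\rvert + c \le k_\sigma\lVert D\rVert + c$ (with $k_\sigma$ the maximum arity of the source schema). Taking expectations directly gives $\Ex(S_\dd)\le k_\sigma\Ex(S_\cc)+c<\infty$, and that is the whole contradiction. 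Your polynomial bound and tail-sum computation are correct but unnecessary here; in your own notation the target-arity exponent is $k=1$, so your bound is already linear. That said, your more general argument is exactly what one needs if the target schema had higher arity, where the linear trick fails and one must control $\Ex(S_\cc)$ against a polynomial of $S_\cc$ without knowing its higher moments---this is essentially what the paper's Remark after \cref{pro:FO-def} gestures at. You are also more careful than the paper in justifying why $\Ex(S_\cc)<\infty$ holds without first assuming $\cc$ is countable; invoking \cref{lem:necessity} directly (rather than \cref{cor:finitesize}) is the cleaner way to close that gap.
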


\begin{proof}
Let $\UU\coloneqq\NN$ and $\tau' \coloneqq \{R\}$ for some unary relation
symbol $R$. Let $\dd$ be the database of schema $\tau'$ over $\UU$ defined in
\cref{ex:inf-exp}. Then $\Ex(S_{\dd}) = \infty$, where $S_{\dd}$ was the
random variable associating instances with their size.\par

Suppose for contradiction that $\dd = V(\cc)$ for some \ti{} PDB $\cc$ of some
schema $\tau$. For every $f\in\Facts$ let $p_f \coloneqq \Pr_{C\sim\cc}
(C\in\cal E_f)$. Then, by \cref{cor:finitesize}, $\Ex(S_\cc) = \sum_f
p_f < \infty$. Let $X_\cc$ be the random variable that maps $C\sim\cc$ to
$\lvert\adom(C)\rvert$. Let $k$ be the maximum arity of a relation in $\tau$
and note that for every $(\tau,\UU)$-instance $\lvert\adom(C)\rvert \leq
k\lVert C\rVert$. That is, $X_\cc \leq kS_\cc$.\par

Since $\tau'$ consists of a single unary relation symbol, the view $V$ consists
of a single formula $\phi(x)\in\FO[\tau,\UU]$. Let $c$ be the number of
constants from $\UU$ appearing in $\phi$. By \cref{fact:adom}, for every
$(\tau,\UU)$-instance we have $\lVert V(C)\rVert = \lvert\phi(C)\rvert\leq
\lvert\adom(C)\rvert + c$. But this implies $S_\dd\leq kS_\cc + c$ and
therefore $\Ex(S_\dd) \leq k\Ex(S_\cc) + c < \infty$, a contradiction.
\end{proof}

\begin{remark}
The PDB $\dd$ that we used in the proof of \cref{pro:FO-def} has the property
that the expected instance size is infinite. However, it is not hard to
construct an analogous counterexample with finite expected input size: we
simply construct a PDB $\dd$ where $\Ex(S_\dd)<\infty$ but $\Ex(S_\dd^2) =
\infty$. Instead of the second moment, we can use the $k$th moment for any
$k$.\par

We do not know an example of a PDB $\dd$ with $\Ex(S_\dd^k)<\infty$ such that
$\dd$ is not $\FO$-definable over a tuple-independent PDB. We conjecture,
though, that such an example exists.
\end{remark}

\subsection{A Word on Block-Independent-Disjoint Probabilistic Databases}
\label{sec:bid}%
After studying tuple-independence, we want to turn our attention to a
practically relevant generalization of tuple-independence: the notion of
\emph{block-independent-disjoint (\bid)} PDBs \cite{Suciu+2011}. As
\cite{Lukasiewicz+2016} notes for example, the systems Trio
\cite{Agrawal+2006}, MayBMS \cite{Huang+2009} and MystiQ \cite{Boulos+2005}
realize (finite) PDBs of this category. In such PDBs, the set of all facts is
partitioned into \emph{blocks} of facts with two central properties: first of
all, facts within the same blocks form mutually exclusive events and; second of
all, facts across different blocks are independent. Obviously, the traditional
notion of tuple-independence is the special case of \bid{} PDBs with singleton
blocks.\par 

The usual application of \bid{} PDBs is to incorporate key constraints in PDBs.
Here, we want to provide a more general definition that extends to infinite
settings. In the following, let $\UU$ be some universe and $\tau$ be a database
schema. As usual, we assume that we have suitable $\sigma$ algebra on $\Facts$
and speak of measurable sets of facts.

\begin{definition}\label{def:bid}
Let $\cal B$ be a partition of $\Facts$ into measurable sets. A PDB $\dd =
(\Omega, \fr A, P)$ is \emph{block-independent-disjoint (\bid) with respect to
$\cal B$} or \emph{with blocks $\cal B$}, if

\begin{enumerate}
\item for all $B\in\cal B$ and all disjoint, measurable $B_1,B_2\subseteq B$:
       \[P\big(\cal E_{B_1}\cap \cal E_{B_2}\big) = 0\text,\]
       \label{itm:bidexcl}
\item and for all mutually distinct $B_1,\dots, B_k\in \cal B$ ($k\in\NN$) and 
	all measurable $B_i' \subseteq B_i$ ($1\leq i \leq k$), it holds that
       \[P\vleft(\smashoperator[r]{\bigcap_{1\leq i\leq k}} \cal E_{B_i'}
	       \vright)
       = \smashoperator{\prod_{1\leq i\leq k}} P\big(\cal E_{B_i'}\big)\text.\]
       \label{itm:bidindependence}
\end{enumerate}

A PDB $\dd$ is \emph{block-independent-disjoint (\bid)}, if there exists a
suitable partition $\cal B$ such that $\dd$ is \bid{} with respect to $\cal B$.
\end{definition}

We want to discuss whether and if so, how the previous results generalize from
\ti{} to \bid{} PDBs.\par

First, we note that an analogue of \cref{lem:oldtinotion} holds, which
means that our notion of \bid{} PDBs in a countable setting matches the
traditional definition that only mentions facts:

\begin{lemma}\label{lem:oldbidnotion}
For countable PDBs with blocks $\cal B$, satisfying condition
\eqref{itm:bidexcl} from \cref{def:bid}, condition \eqref{itm:bidindependence}
is equivalent to

\begin{enumerate}
\item[(2')] The sequences $(\cal E_f)_{f\in F}$ are independent for every 
	collection $F$ of facts such that $F$ contains at most one fact from
	each block.
\end{enumerate}

\end{lemma}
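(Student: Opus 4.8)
The plan is to mimic the proof of \cref{lem:oldtinotion} but now keep track of the block structure. The direction $(2)\Rightarrow(2')$ is immediate: if $F$ contains at most one fact from each block, enumerate the blocks hit by $F$ as $B_1,\dots$ and apply $(2)$ with $B_i'\coloneqq\{f_i\}$ the singleton from $B_i$; for an infinite $F$ one passes from finite sub-collections to the countable case exactly as in the discussion preceding \cref{lem:bc}, using that the relevant series converge (this is where \cref{pro:size} or the convergence of $\sum_f P(\cal E_f)$ is invoked to justify reordering).

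For the converse $(2')\Rightarrow(2)$, fix distinct blocks $B_1,\dots,B_k$ and measurable $B_i'\subseteq B_i$. I would first use condition \eqref{itm:bidexcl} to replace the events $\cal E_{B_i'}$ by a disjoint decomposition: since in a countable PDB $\cal E_{B_i'}=\bigcup_{f\in B_i'}\cal E_f$ and condition \eqref{itm:bidexcl} forces $P(\cal E_f\cap\cal E_g)=0$ for distinct $f,g$ in the same block, the events $\{\cal E_f\}_{f\in B_i'}$ are \emph{almost} disjoint, so $P(\cal E_{B_i'})=\sum_{f\in B_i'}P(\cal E_f)$ and, up to a null set, $\cal E_{B_i'}$ is the disjoint union $\bigsqcup_{f\in B_i'}\cal E_f$. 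Then
\[
P\vleft(\bigcap_{i=1}^k\cal E_{B_i'}\vright)
= \sum_{\substack{f_1\in B_1',\dots,f_k\in B_k'}}
  P\vleft(\bigcap_{i=1}^k\cal E_{f_i}\vright)
= \sum_{f_1\in B_1',\dots,f_k\in B_k'}\;\prod_{i=1}^k P(\cal E_{f_i}),
\]
where the second equality uses $(2')$ applied to the collection $F=\{f_1,\dots,f_k\}$ (which has one fact per block, since the $B_i$ are distinct). Distributing the product over the sums, this equals $\prod_{i=1}^k\big(\sum_{f_i\in B_i'}P(\cal E_{f_i})\big)=\prod_{i=1}^k P(\cal E_{B_i'})$, which is \eqref{itm:bidindependence}.

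The main obstacle is making the manipulation of the countably infinite sums rigorous: the rearrangement of $P(\bigcap_i\cal E_{B_i'})$ into a sum over $k$-tuples $(f_1,\dots,f_k)$ requires $\sigma$-additivity over the countable disjoint (mod null) decomposition of the intersection, and distributing a finite product of infinite series requires absolute convergence of each $\sum_{f\in B_i'}P(\cal E_f)$ — which holds because it is dominated by $\sum_{f\in\Facts}P(\cal E_f)$, finite by \cref{cor:finitesize} in the \ti{} case and, more generally, by the same Borel--Cantelli argument as in \cref{lem:necessity} applied to the blocks. I would also need the small bookkeeping remark that discarding the countably many null sets $\cal E_f\cap\cal E_g$ (one per pair within a block) does not affect any probability, so all the "mod null" identities can be used freely inside $P(\cdot)$.
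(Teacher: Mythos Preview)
Your core argument for $(2')\Rightarrow(2)$ is correct and is exactly what the paper does: write $\cal E_{B_i'}=\bigcup_{f\in B_i'}\cal E_f$, use condition \eqref{itm:bidexcl} to make this an almost-disjoint union, expand the intersection as a sum over $k$-tuples, apply $(2')$ to each tuple, and distribute. The paper adds one preliminary step you skip: it first observes that only countably many facts have positive probability (\cref{pro:size}) and bundles all null facts into a dummy block, so that the remaining blocks $\cal B_\omega$ are countable; this lets the countable-union decomposition go through cleanly even when $B_i'$ is a priori uncountable.

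Two places where you over-reach, though neither breaks the proof. First, $(2)\Rightarrow(2')$ really is immediate: independence is \emph{defined} via finite subfamilies, so applying $(2)$ with singleton $B_i'$ gives $(2')$ directly; there is no infinite limit to take. Second, your justification for the absolute convergence of $\sum_{f\in B_i'}P(\cal E_f)$ is circular: \cref{cor:finitesize} assumes tuple-independence, and the Borel--Cantelli argument of \cref{lem:necessity} uses the very cross-block independence you are trying to establish. The correct (and simpler) justification is already in your own argument: condition \eqref{itm:bidexcl} makes the $\cal E_f$ within a block almost disjoint, so $\sum_{f\in B_i'}P(\cal E_f)=P(\cal E_{B_i'})\le 1$. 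Since all terms are nonnegative and each of the $k$ series is bounded by $1$, the distribution of the finite product over the sums is automatic.
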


The easy proof can be found in the \hyperref[app:412]{appendix}.
\par

Next, we can construct \emph{countable} \bid{} PDBs similarly to the
tuple-independent case.

\begin{proposition}\label{pro:bidconstruction}
Let $\cal B$ be a partition of $\Facts$ into blocks and for every block $B$ let
$\smash{(p_f^B)_{f\in B}}$ such that $\smash{p_f^B\in[0,1]}$ and
$\smash{\sum_{f\in B} p_f^B\leq 1}$. Then we can construct a PDB $\dd =
(\Omega, \fr A,P)$ that is \bid{} with respect to $\cal B$, realizing the given
fact probabilities (i.\,e., $\smash{P(\cal E_f) = p_f^{B(f)}}$, where $B(f)$ is
the block containing $f$) whenever 

\begin{equation}\label{eq:5}
  \sum_{f\in F} p_f^{B(f)} < \infty
\end{equation}

for all countable $F\subseteq\Facts$.
\end{proposition}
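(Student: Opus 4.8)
The plan is to mimic the tuple-independent construction from \cref{ssec:construction}, but to sample each block independently according to its own local distribution instead of sampling each fact independently. First I would observe that, exactly as in the \ti{} case, the convergence assumption \eqref{eq:5} implies that $F_\omega \coloneqq \{f : p_f^{B(f)} > 0\}$ is countable (for each $k$, the facts with $p_f^{B(f)} > 1/k$ form a finite set), so the PDB we build will be discrete with sample space effectively the finite subsets of $F_\omega$. For a block $B$, write $p_\emptyset^B \coloneqq 1 - \sum_{f\in B} p_f^B \ge 0$ for the probability that the block contributes no fact; note $\sum_{f\in B} p_f^B$ is finite by \eqref{eq:5} applied to the countable set $B$. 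Intuitively, for each block we independently pick either a single fact $f\in B$ with probability $p_f^B$ or nothing with probability $p_\emptyset^B$; the resulting instance is the (finite) union of the chosen facts.

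The main work is to turn this intuition into a well-defined $\sigma$-additive measure on $\Omega = \dbinst$. I would define, for $D\in\dbinst$,
\[
P(\{D\}) \coloneqq \prod_{B\in\cal B} w_B(D),
\qquad\text{where } w_B(D) \coloneqq
\begin{cases}
p_f^B & \text{if } D\cap B = \{f\},\\
p_\emptyset^B & \text{if } D\cap B = \emptyset,\\
0 & \text{if } \lvert D\cap B\rvert \ge 2,
\end{cases}
\]
and $P(A) \coloneqq \sum_{D\in A} P(\{D\})$ for $A\in\fr A = 2^\Omega$. Only finitely many factors $w_B(D)$ differ from a ``block-empty'' value, and the block-empty values multiply to $\prod_{B}(1-\sum_{f\in B}p_f^B)$, a convergent product by \cref{fac:knopp} (its associated series $\sum_B \sum_{f\in B} p_f^B = \sum_{f\in F_\omega} p_f$ converges by \eqref{eq:5}), so each $P(\{D\})$ is a well-defined real in $[0,1]$, and $P(\{D\})>0$ forces $D\subseteq F_\omega$ with at most one fact per block. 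The hard step is verifying $P(\Omega) = \sum_D P(\{D\}) = 1$: I would expand each factor as $w_B(D) = p_\emptyset^B + \sum_{f\in B}(p_f^B - 0)\cdot[\text{$D$ picks $f$}]$ and, as in the proof of \cref{lem:tech2}, use the absolute-convergence machinery of \cref{lem:tech1} to justify multiplying out $\prod_{B\in\cal B}\bigl(p_\emptyset^B + \sum_{f\in B} p_f^B\bigr) = \prod_{B\in\cal B} 1 = 1$ and matching the expansion term-for-term with $\sum_D P(\{D\})$. Equivalently, and perhaps more cleanly, one can regard this $P$ as the distribution of the random instance obtained by drawing, independently over $B\in\cal B$, a random subset $X_B\subseteq B$ with $\Pr(X_B=\{f\})=p_f^B$ and $\Pr(X_B=\emptyset)=p_\emptyset^B$, and setting $D=\bigcup_B X_B$; the countable-product (Kolmogorov extension / infinite product of discrete spaces) then gives $\sigma$-additivity for free, and one only has to check that $D$ is almost surely finite — which follows from $\sum_B \Pr(X_B\neq\emptyset) = \sum_{f\in F_\omega} p_f < \infty$ via Borel–Cantelli.

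It remains to check that $\dd$ is \bid{} with the right marginals. For condition~\eqref{itm:bidexcl}: if $B_1,B_2\subseteq B$ are disjoint and measurable, then $\cal E_{B_1}\cap\cal E_{B_2}$ consists of instances with at least two facts in $B$, all of which have probability $0$ by construction, so $P(\cal E_{B_1}\cap\cal E_{B_2})=0$. For condition~\eqref{itm:bidindependence} (and the marginals $P(\cal E_f)=p_f^{B(f)}$), I would fix distinct blocks $B_1,\dots,B_k$ and measurable $B_i'\subseteq B_i$, and compute $P\bigl(\bigcap_i \cal E_{B_i'}\bigr)$ by summing $P(\{D\})$ over all $D$ meeting each $B_i'$; factoring the product over blocks, the block $B_i$ contributes $\sum_{f\in B_i'} p_f^B$, every other block contributes a full sum equal to $1$ (which telescopes away using $P(\Omega)=1$ restricted to the remaining blocks), leaving $\prod_{i=1}^k \sum_{f\in B_i'} p_f^{B_i} = \prod_{i=1}^k P(\cal E_{B_i'})$ — exactly as the parenthesized-term-equals-$1$ manipulation in the proof of \cref{lem:constristi}. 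Taking each $B_i'$ a singleton recovers $P(\cal E_f)=p_f^{B(f)}$. The probabilistic reformulation again shortens this: $\cal E_{B_i'} = \{X_{B_i}\cap B_i' \ne \emptyset\}$ depends only on $X_{B_i}$, and the $X_B$ are independent by construction, so both conditions are immediate. I expect the main obstacle to be purely bookkeeping: rigorously justifying the rearrangement of the (absolutely convergent) infinite product-of-sums over blocks, for which \cref{lem:tech1,fac:knopp} supply exactly what is needed; the alternative infinite-product-space viewpoint sidesteps this at the cost of invoking a standard product-measure construction.
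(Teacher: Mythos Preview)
Your proposal is correct and takes essentially the same route as the paper: define $P(\{D\})$ as the product over blocks of per-block weights $w_B(D)$ (the paper writes this as $p_{\beta(B,D)}^B$ with $\beta(B,D)\in B\cup\{\bot\}$), verify $P(\Omega)=1$ via \cref{lem:tech1} and \cref{fac:knopp} in a direct analogue of \cref{lem:tech2}, and check the \bid{} conditions by factoring the product over blocks exactly as in \cref{lem:constristi}. The paper does not use your alternative product-measure/Borel--Cantelli route, though it would also work and, as you note, trades the rearrangement bookkeeping for an appeal to a standard infinite-product construction.
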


\begin{proofsketch}
Let $\Omega\coloneqq\dbinst$ and $\fr A\coloneqq 2^{\Omega}$. If \eqref{eq:5}
holds for all countable $F\subseteq\Facts$, we say $\smash{\sum_{B\in\cal
B}\sum_{f\in B} p_f^{B}}$ converges. This notation is justified like in the
case of tuple-independence.  Similarly to before, it entails that the set
$F_\omega$ of facts with $\smash{p_f^{B(f)}>0}$ is countable. We may thus
suppose that $\cal B$ consists of countably many countable blocks $\cal
B_\omega$ exactly covering the facts $F_\omega$ and that all the remaining
facts are gathered in a single dummy block (this can be found in the proof of 
\cref{lem:oldbidnotion} in the \hyperref[app:412]{appendix}).

\emph{Good} instances contain at most one fact from each $B\in\cal B$ and
\emph{bad} instances violate this condition. We set $\smash{p_\bot^B\coloneqq
1-\sum_{f\in F_\omega\cap B} p_f^B}$ ($p_\bot^B = 1$ for $B$ being the dummy
block) and for every block $B$ and good $D\in\dbinst$, define

\begin{equation*}
\beta(B,D) \coloneqq \begin{dcases}
                       f & \text{if }D\cap B = \{f\}\text,\\
                       \bot & \text{if }D\cap B = \emptyset\text.
                       \end{dcases}
\end{equation*}

We set

\begin{equation*}
P\big(\{D\}\big) \coloneqq
\begin{dcases}
\sprod_{B\in\cal B} p_{\beta(B,D)}^B & \text{if }D\text{ is good,}\\
0                                    & \text{if }D\text{ is bad;}
\end{dcases}
\end{equation*}

and for all other $A\in\fr A$, $P(A) = \sum_{D\in A\cap\bf D_\omega} P(\{D\})$
where $\bf D_\omega$ is the set of finite subsets of $F_\omega$. Analogously to
\cref{ssec:construction}, the required convergence property \eqref{eq:5}
ensures, that this yields indeed a probability measure. Generalizing the proof
of \cref{lem:constristi}, one can show that $\dd$ is indeed a \bid{} PDB.
These two claims are demonstrated in detail in the 
\hyperref[app:413]{appendix}.
\end{proofsketch}

Finally, the necessary condition from \cref{ssec:nec} easily translates
to \bid{} PDBs:

\begin{lemma}\label{lem:bidnec}
Let $\dd = (\Omega,\fr A,P)$ be a \bid{} PDB with blocks $\cal B$. Then, for
every countable collection $(B_i)_{i\geq 1}$ of $\cal B$-blocks and all
measurable subsets $B_i' \subseteq B_i$ (\/$i\geq 1$) it holds that
$\sum_{i\geq 1}P(\cal E_{B_i'}) < \infty$. 
\end{lemma}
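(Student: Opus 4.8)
The plan is to mimic the proof of \cref{lem:necessity} almost verbatim; the only difference is that the independence of the relevant events now comes from condition~\eqref{itm:bidindependence} of \cref{def:bid} rather than from tuple-independence. First I would fix a countable collection $(B_i)_{i\geq 1}$ of pairwise distinct $\cal B$-blocks together with measurable sets $B_i'\subseteq B_i$, and abbreviate $\cal E_i\coloneqq\cal E_{B_i'}$. Each $\cal E_i$ belongs to $\fr A$ because $B_i'$ is measurable and $\dd$ is a PDB. Since $\cal B$ is a partition, distinct blocks are disjoint, so the sets $B_i'$ are pairwise disjoint. Applying condition~\eqref{itm:bidindependence} of \cref{def:bid} to arbitrary finite subfamilies of the pairwise distinct blocks $B_i$ (with the chosen subsets $B_i'$) shows that the events $(\cal E_i)_{i\geq 1}$ are independent; in particular they are pairwise independent, which is all that the version of the Borel--Cantelli lemma stated in \cref{lem:bc} requires.

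Next I would consider the limsup event
\[
  \cal E \coloneqq \bigcap_{i\geq 1}\bigcup_{j\geq i}\cal E_j,
\]
which is precisely the set of instances $D\in\Omega$ that have nonempty intersection with $B_i'$ for infinitely many indices $i$. Because the sets $B_i'$ are pairwise disjoint and every instance of a PDB is finite, no instance can meet infinitely many of them; hence $\cal E=\emptyset$ and therefore $P(\cal E)=0$.

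Finally, suppose towards a contradiction that $\sum_{i\geq 1}P(\cal E_i)=\infty$. Then \cref{lem:bc}, applied to the pairwise independent events $\cal E_i$, yields $P(\cal E)=1$, contradicting $P(\cal E)=0$. Hence $\sum_{i\geq 1}P(\cal E_{B_i'})<\infty$, as claimed. I do not expect any genuine obstacle here; the only points that need a little care are that condition~\eqref{itm:bidindependence} of \cref{def:bid} is phrased exactly for subsets of \emph{distinct} blocks — which is our situation — and that the $\fr A$-membership of the events $\cal E_{B_i'}$ is already guaranteed by the definition of a PDB together with the measurability of the $B_i'$.
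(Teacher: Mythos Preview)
Your proposal is correct and is exactly the approach the paper takes: it simply refers back to the proof of \cref{lem:necessity} with $B_i'$ playing the role of $F_i$, relying on the independence supplied by condition~\eqref{itm:bidindependence} of \cref{def:bid} and the Borel--Cantelli lemma. The only implicit assumption you (and the paper) make is that the blocks $B_i$ are pairwise distinct, which is the intended reading of the statement.
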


\begin{proof}
	This is proven exactly like in the proof of \cref{lem:necessity}
	with $B_i'$ in the role of $F_i$.
\end{proof}

\cref{pro:bidconstruction} and \cref{lem:bidnec} can be combined, yielding the
following characterization of existence for countable \bid{} PDBs:

\begin{theorem}\label{thm:bidnecsuff}
Let $\cal B$ be a partition of facts and for every $B\in\cal B$ let
$\smash{(p_f^B)_{f\in B}}$ be a sequence with $\smash{p_f^B\in[0,1]}$ such that
$\smash{\sum_{f\in B}p_f^B\leq 1}$. There exists a block-independent-disjoint
PDB with fact probabilities $\smash{(p_f^B)_{f\in B,B\in\cal B}}$ if and only
if $\smash{\sum_{B\in\cal B}\sum_{f\in B}p_f^B}$ converges.
\end{theorem}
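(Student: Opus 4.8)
The plan is to prove \cref{thm:bidnecsuff} simply by combining the two constituent results that have already been established, namely the sufficiency direction from \cref{pro:bidconstruction} and the necessity direction from \cref{lem:bidnec}, after unravelling the definitions so that both apply to the present formulation. First I would handle the easy \enquote{if} direction: assuming $\sum_{B\in\cal B}\sum_{f\in B}p_f^B$ converges (i.e. \eqref{eq:5} holds for every countable $F\subseteq\Facts$, which as noted is exactly what this notation abbreviates), \cref{pro:bidconstruction} directly produces a PDB $\dd=(\Omega,\fr A,P)$ that is \bid{} with respect to $\cal B$ and satisfies $P(\cal E_f)=p_f^{B(f)}$ for every fact $f$. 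That is precisely a \bid{} PDB with the prescribed fact probabilities, so this direction is immediate.

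For the \enquote{only if} direction I would argue by contraposition: suppose $\sum_{B\in\cal B}\sum_{f\in B}p_f^B$ diverges, meaning there is a countable set $F\subseteq\Facts$ with $\sum_{f\in F}p_f^{B(f)}=\infty$. The task is to derive that no \bid{} PDB realizing these fact probabilities can exist. Assume for contradiction that $\dd=(\Omega,\fr A,P)$ is such a PDB with blocks $\cal B$ and $P(\cal E_f)=p_f^{B(f)}$ for all $f$. The obstacle here is mild but worth spelling out: $F$ may contain several facts from the same block, whereas \cref{lem:bidnec} is phrased in terms of one measurable subset $B_i'$ per block. To bridge this, group $F$ by blocks: for each block $B$ that meets $F$, set $B' := F\cap B$, a measurable subset of $B$. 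Using condition \eqref{itm:bidexcl} of \cref{def:bid} (mutual exclusivity inside a block), the events $(\cal E_f)_{f\in F\cap B}$ are pairwise almost-surely disjoint, so $P(\cal E_{B'})=P\bigl(\bigcup_{f\in F\cap B}\cal E_f\bigr)=\sum_{f\in F\cap B}P(\cal E_f)=\sum_{f\in F\cap B}p_f^{B(f)}$. Summing over the countably many blocks that meet $F$ gives $\sum_{B:\,B\cap F\neq\emptyset}P(\cal E_{B'})=\sum_{f\in F}p_f^{B(f)}=\infty$.

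This contradicts \cref{lem:bidnec}, which asserts that for any countable collection of blocks $(B_i)$ and measurable subsets $B_i'\subseteq B_i$ one has $\sum_i P(\cal E_{B_i'})<\infty$. Hence no such \bid{} PDB exists, completing the contrapositive and thus the theorem. I expect no real difficulty beyond the bookkeeping in the previous paragraph; everything else is a direct appeal to \cref{pro:bidconstruction} and \cref{lem:bidnec}, exactly as \cref{thm:tinecsuff} was obtained from \cref{pro:construction} and \cref{lem:necessity} in the tuple-independent case. One small point to state cleanly at the outset is that the convergence hypothesis on $\sum_{f\in B}p_f^B\le 1$ within each block is already built into the statement, so it is available for free in the construction direction and plays no role in the necessity direction.
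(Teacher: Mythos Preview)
Your proposal is correct and follows exactly the paper's approach: the paper states the theorem as the combination of \cref{pro:bidconstruction} and \cref{lem:bidnec} without further proof, so your write-up is in fact more detailed than what the paper provides. The extra bookkeeping you supply---grouping the divergent countable fact set $F$ by blocks and using condition~\eqref{itm:bidexcl} to conclude $P(\cal E_{F\cap B})=\sum_{f\in F\cap B}p_f^{B(f)}$ so that \cref{lem:bidnec} applies---is exactly the step the paper leaves implicit, and it is carried out correctly.
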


\section{Completions of Probabilistic Databases}\label{sec:completions}
Now that we have established a model of infinite independence assumptions, we
want to revisit the open-world assumption in probabilistic databases. We want
to use our construction of countable tuple-independent PDBs to deploy a
\enquote{completed} version of a given PDB. Let $\dd$ be a PDB with sample
space $\Omega\subsetneq\dbinst$ where $\tau$ is a database schema and $\UU$ a
universe. Typically (but not necessarily), we think of $\Omega$ being finite.
Our construction shall expand the sample space $\Omega$ to all of $\dbinst$. In
order to obtain results that are consistent with the original data from $\dd$,
this expansion should preserve the basic structure of the probability space
$\dd$, that is, its internal correlations and the proportions of already known
fact probabilities.\par

\begin{definition}\label{def:completion}
Let $\dd=(\Omega,\fr A,P)$ be a PDB with $\Omega\subsetneq\dbinst$. A
\emph{completion} of $\dd$ is a PDB $\dd'=(\Omega',\fr A',P')$ with
$\Omega'=\dbinst$ and $\fr A'\supseteq \fr A$ such that $P'(\Omega) > 0$ and
for all $A\in\fr A$, the following \emph{completion condition} holds:

\begin{gather}\label{eq:cc}\tag{CC}
P'\big(A\bigunder\Omega\big) = P(A)\text.
\end{gather}

When considering a completion $\dd'$ of $\dd$, we refer to $\dd$ (and its
components) as \emph{original}.
\end{definition}

\begin{remark}
Applying the closed-world-assumption to a PDB corresponds to considering the
completion that sets all probabilities of new instances to $0$.
\end{remark}

\begin{remark}
Although we use similar notions, the completions of \cref{def:completion} are
not directly related to the concept of completion of measure spaces in measure
theory.
\end{remark}

\subsection{Completions by Independent Facts}
As we motivated above, we want to use our construction of tuple-independent
PDBs to obtain a completion of a given probabilistic database.\par

We let $\dd=(\Omega,\fr A,P)$ be a PDB of schema $\tau$ and universe $\UU$;
this is the PDB we shall complete. We assume that the occurrences of new facts
are independent in the completion of $\dd$. For the moment, we leave more
sophisticated completions open as future work (see \cref{sec:disc}).
Since our constructions of tuple-independent PDBs always yield countable PDBs
anyway, for convenience we assume that the universe $\UU$ is countable. Then
$\fr A=2^\Omega$.  Let $F(\dd)$ be the set of facts that appear in the
instances of $\Omega$.

\begin{definition}
A completion $\dd'$ of $\dd$ is called \emph{completion by independent facts}
(\emph{independent-fact completion}) if in $\dd'$, all sequences $(\cal
E_F)_{F\in\cal F}$ are independent for collections $\cal F$ of disjoint sets of
facts from $\Facts-F(\dd)$.
\end{definition}

Note that the above definition itself can be easily formulated for arbitrary
(even uncountable) original PDBs. As in \cref{lem:oldtinotion}, in the
countable case, the independence condition is equivalent to the independence of
$(\cal E_f)_{f\in\Facts-F(\dd)}$ in $\dd'$.\par

We remark that especially, we do not allow any facts from $\Facts-F(\dd)$ to
have probability $1$ (otherwise $P'(\Omega)=0$).\par

For the following, we assume that $\Omega$ (the sample space of $\dd$) is
closed under subsets and union. This restriction will be revisited later.

\begin{theorem}\label{thm:indfactcompl}
Let $(p_f)_{f\in\Facts-F(\dd)}$ be a sequence of numbers $p_f\in[0,1)$ such
that $\sum_f p_f <\infty$. Then we can construct an independent-fact completion
$\dd'$ of $\dd$ with the property that $P'(\cal E_f)=p_f$ for all
$f\in\Facts-F(\dd)$ where $P'$ is the probability measure of\/ $\dd'$.
\end{theorem}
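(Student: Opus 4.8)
The plan is to build $\dd'$ as the ``product'' of the original PDB $\dd$ with a tuple-independent PDB on the fresh facts. Concretely, let $\Facts^{\text{new}}\coloneqq\Facts-F(\dd)$. Using \cref{pro:construction}, construct a tuple-independent PDB $\cc=(\Omega_{\cc},2^{\Omega_{\cc}},Q)$ on the facts $\Facts^{\text{new}}$ realizing the marginals $Q(\cal E_f)=p_f$; by the convergence assumption $\sum_f p_f<\infty$, this PDB is countable and concentrated on finite subsets of the countable set $F_\omega^{\text{new}}\coloneqq\{f\in\Facts^{\text{new}}:p_f>0\}$, with $Q(\{D^{\text{new}}\})=\prod_{f\in D^{\text{new}}}p_f\prod_{f\in F_\omega^{\text{new}}-D^{\text{new}}}(1-p_f)$. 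Now define $\dd'=(\dbinst,2^{\dbinst},P')$ by setting, for each instance $E\in\dbinst$,
\begin{equation*}
P'\big(\{E\}\big)\coloneqq P\big(\{E\cap F(\dd)\}\big)\cdot Q\big(\{E- F(\dd)\}\big),
\end{equation*}
reading $P(\{D\})$ as $0$ when $D\notin\Omega$, and extending $\sigma$-additively. Since $\UU$ is countable we have $\fr A=2^\Omega$ and $\fr A'=2^{\dbinst}\supseteq\fr A$, and every $\cal E_F$ lies in $\fr A'$, so $\dd'$ is a legitimate PDB once we check $P'(\dbinst)=1$.

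The verification proceeds in three steps. First, $P'$ is a probability measure: grouping instances $E$ by the pair $(E\cap F(\dd),\,E- F(\dd))$ factors the total mass as $\big(\sum_{D\in\Omega}P(\{D\})\big)\cdot\big(\sum_{D^{\text{new}}\in\Omega_{\cc}}Q(\{D^{\text{new}}\})\big)=1\cdot 1=1$, using \cref{lem:tech2} for the second factor. Here the hypothesis that $\Omega$ is closed under subsets and union is what guarantees that $E\cap F(\dd)$ is again a sample-space instance of $\dd$ whenever $E\cap F(\dd)\in\Omega$ — more precisely, it guarantees that the map $E\mapsto(E\cap F(\dd),E-F(\dd))$ from $\{E:E\cap F(\dd)\in\Omega\}$ is onto $\Omega\times\Omega_{\cc}$ in the relevant combinatorial sense. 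Second, the completion condition \eqref{eq:cc}: note $P'(\Omega)=\sum_{E\in\Omega}P(\{E\cap F(\dd)\})Q(\{E-F(\dd)\})$; but $E\in\Omega$ forces $E\subseteq F(\dd)$ (instances of $\dd$ only contain facts of $F(\dd)$), hence $E-F(\dd)=\emptyset$ and $Q(\{\emptyset\})=\prod_{f\in F_\omega^{\text{new}}}(1-p_f)>0$ since every $p_f<1$ and $\sum p_f<\infty$ (so the product converges to a nonzero value by \cref{fac:knopp}). Thus $P'(\Omega)=\prod_{f}(1-p_f)>0$, and for $A\in\fr A$, $P'(A)=\sum_{D\in A}P(\{D\})\prod_f(1-p_f)=P(A)\cdot P'(\Omega)$, which gives $P'(A\mid\Omega)=P(A)$.

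Third, $\dd'$ is an independent-fact completion with the right new marginals. For a finite set $G\subseteq\Facts^{\text{new}}$ one computes $P'\big(\bigcap_{f\in G}\cal E_f\big)=\sum_{E\supseteq G}P'(\{E\})$; splitting $E=(E\cap F(\dd))\cup(E-F(\dd))$ and summing out the $F(\dd)$-part (which contributes $\sum_{D\in\Omega}P(\{D\})=1$) reduces this to $Q\big(\bigcap_{f\in G}\cal E_f\big)$, which equals $\prod_{f\in G}p_f$ by \cref{lem:constristi}. By the countable analogue of \cref{lem:oldtinotion} this simultaneously yields $P'(\cal E_f)=p_f$ for all $f\in\Facts^{\text{new}}$ and the independence of $(\cal E_f)_{f\in\Facts^{\text{new}}}$, hence of all $(\cal E_F)_{F\in\cal F}$ for disjoint families $\cal F$ of subsets of $\Facts^{\text{new}}$.

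The step I expect to be the main obstacle — or at least the one demanding care rather than routine bookkeeping — is the interaction between the decomposition $E=(E\cap F(\dd))\cup(E-F(\dd))$ and the closure hypothesis on $\Omega$: one must be sure that summing $P'$ over all of $\dbinst$ really does factor cleanly, and that no instance gets double-counted or dropped. This is exactly where ``$\Omega$ closed under subsets and union'' is used, and it is worth spelling out why a failure of this closure would break the factorization (the intended intuition: an arbitrary $E\in\dbinst$ should be thought of as ``an original instance, possibly shrunk, together with some new facts'', and without the closure we cannot consistently assign it the product probability). Everything else — $\sigma$-additivity, convergence of the relevant products and sums, measurability of the $\cal E_F$ — is inherited directly from \cref{pro:construction}, \cref{lem:tech2}, \cref{lem:constristi}, and the countability of $\UU$.
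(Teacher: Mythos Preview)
Your proof is correct and follows essentially the same approach as the paper's: both construct $\dd'$ as the product of $\dd$ with the tuple-independent PDB $\cc$ on $\Facts-F(\dd)$ obtained via \cref{pro:construction}, setting $P'(\{E\})=P(\{E\cap F(\dd)\})\cdot Q(\{E-F(\dd)\})$, and then verify the completion condition using $Q(\{\emptyset\})>0$. Your write-up is in fact more thorough than the paper's short argument, in that you explicitly check that the new-fact events have the prescribed marginals and are independent (the paper leaves this implicit in the phrase ``product distribution'').
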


\begin{proof}[Proof of \cref{thm:indfactcompl}]
Let $\cc$ be the \ti{} PDB with sample space $\{D\subseteq\Facts-F(\dd) \colon
D\text{ finite}\}$ that is constructed as described in
\cref{ssec:construction}. Let $P_1$ denote the probability measure of $\cc$.
We now define a PDB $\dd'$ with sample space $\Omega' = \dbinst$: every
instance of $\dd'$ is a unique disjoint union $D' = D\uplus C$ with
$D\in\Omega$ and an instance $C$ of $\cc$. We set \[P'(\{D'\}) = P(\{D\})\cdot
P_1(\{C\})\text.\] This yields a probability distribution (in fact, a product
distribution).  For original instances $D\in\Omega$, we have

\begin{equation}\label{eq:composition}
P'(\{D\}) = P(\{D\})\cdot P_1(\{\emptyset\})
\end{equation}

and $P_1(\{\emptyset\}) > 0$ since $\cc$ contains no facts of probability $1$.
By distributivity, an analogous version of \eqref{eq:composition} holds for
\emph{sets} of original instances. Hence, for every $D\in\Omega$,

\begin{equation*}
  P'\big(\{D\}\bigunder\Omega\big)
= \frac{P'(\{D\}\cap\Omega)}{P'(\Omega)}
= \frac{P(\{D\})\cdot P_1(\{\emptyset\})}
        {P_1(\{\emptyset\})}
= P\big(\{D\}\big)\text.\qedhere
\end{equation*}

\end{proof}

Let us now, as previously announced, review the assumption that $\dd$ be closed
under (countable) union and subsets of instances (this was used for the easy
decomposed representation of new instances). Suppose, we want to complete
$\dd_0 = (\Omega_0,2^{\Omega_0},P_0)$ where $\Omega_0$ is a proper subset of
$\big\{D\subseteq F(\dd_0)\colon D\text{ finite}\big\}$.  We can add the
\enquote{missing} instances in the following way: fix some $c\in[0,1]$ and
define a PDB $\dd = (\Omega,2^\Omega,P)$ with $\Omega$ being the set of
(finite) subsets of $F(\dd_0)$ such that $P(\{D\}) = cP_0(\{D\})$ whenever
$D\in\Omega_0$ and $P(\Omega-\Omega_0) = 1-c$ (by specifying probabilities for
the instances of $\Omega-\Omega_0$ with a total mass of $1-c$).

\begin{remark}
Note that this extension of $\dd_0$ to $\dd$ is reasonable, if $\dd_0$ is
finite but harder to motivate (although technically possible) if $\dd_0$ is
itself countably infinite and infinitely many facts are \enquote{missing}.  On
the other hand note that countable PDBs already fulfill the required closure
properties if they are tuple-independent, in which case no such extension is
required.
\end{remark}

Now execute the construction from the proof above for the resulting PDB and
observe that the completion condition is satisfied.

\begin{equation*}
  P'\big(\{D\}\under\Omega_0\big)
= \frac{P'(\{D\})}{P'(\{\Omega_0\})}
= \frac{c\cdot P_0(\{D\})) \cdot P_1(\{\emptyset\})}
        {c\cdot P_1(\{\emptyset\})}
= P_0\big(\{D\}\big)
\end{equation*}

for every $D\in\Omega_0$.\par

Analogously, $\dd$ might be, for example, augmented by finitely many,
arbitrarily correlated instances of arbitrary probability mass before carrying
out the completion.\par\bigskip

\cref{thm:indfactcompl} assures the existence of an infinite open-world
approach for countable PDBs and establishes in some sense a generalization of
the model of Ceylan et al. \cite{Ceylan+2016}. If the given universe $\UU$ is
finite, we can directly obtain their framework. In this case we only need to
specify probabilities for finitely many new facts. In \cite{Ceylan+2016}, the
authors construct a collection of finite PDBs that contains all the completions
of the original PDB by probabilities up to some (reasonably small) upper bound
$\lambda$. The generalization of this idea is also achievable in our setting:
instead of a fixed upper bound $\lambda$, fact probabilities could be bounded
by the summands of a fixed convergent series.\par

\begin{example}\label{exa:completion}
We want to close this section with a small, abstract example. Supposed $\UU =
\{\tt A,\tt B,\tt C,\tt D\}\cup\NN$ and let $\tau = \{R\}$ consist of a single,
binary relation symbol. Consider the following finite \ti{} PDB
$\dd=(\Omega,2^\Omega,P)$ where the last column displays the probabilities
$P(\cal E_f)$.

\begin{center}
\begin{tabular}{l r | c}
\multicolumn{2}{c |}{$R$} & $P$ \\\hline
$\tt A$ & $1$ & $0.8$\\
$\tt B$ & $1$ & $0.4$\\
$\tt B$ & $2$ & $0.5$\\
$\tt C$ & $3$ & $0.9$
\end{tabular}
\end{center}

Additionally, assume $R$ is supposed to be a relation between $\{\tt A,\tt B,
\tt C, \tt D\}$ and $\NN$ (this is for instance achievable by excluding facts
of the wrong shape from $\Facts$). The usual closed-world interpretation of the
tabular representation above would be a PDB over the universe $\UU'=\{\tt A,
\tt B,\tt C,1,2,3\}$ and, for example, the probability that two facts of the
shape $R(\tt A,i)$ are occurring would be $0$. Also, the object $\tt D$ would
not occur whatsoever.\par

Instead, we want to apply the open-world assumption to $\dd$ by assuming that
the probability of any unspecified tuple $(x,i)$ to belong to $R$ is given by
$2^{-1}$ (i.\,e. there are up to 4 facts $f$ with probability $2^{-i}$ for
every $i$). Obviously, the sum of all fact probabilities converges. Hence,
these probabilities induce an independent-fact completion $\dd'$ of $\dd$. In
particular, in $\dd'$, all finite Boolean combinations of (occurrences of)
distinct facts have probability $> 0$.
\end{example}

\section{A Na\"{i}ve Approximation of Query Evaluation}
\label{sec:query-evaluation}%
In this section, we investigate the problem of query evaluation. Its purpose is
to demonstrate, that query evaluation for infinite PDBs is not out of reach
from an algorithmic perspective. This may serve as a stepping stone in further
more thorough examination of the subject.\par

We consider the following setting. Let $\UU$ be some countable universe and
$\tau$ be a database schema. We also assume that $\UU$ is computable, for
example $\UU=\Sigma^*$ for some finite alphabet $\Sigma$, so that an algorithm
can generate all facts $f\in\Facts$.  Given is an infinite \ti{} PDB
$\dd=(\Omega, 2^\Omega,P)$ over $\tau$ and $\UU$ and a \emph{first order} query
$Q(\vec x)$ with free variables $\vec x$, and we want to compute $P(Q) =
P(\{D\in\Omega\colon D\models Q\})$. As we have to deal with an infinite PDB,
we will not exactly evaluate queries but instead discuss, how query results can
be approximated up to an arbitrarily small error. Our focus remains on Boolean
queries $Q$ for the moment. We will hint on how to process non-Boolean queries
later.\par

Let $F(\dd)$ be the set of facts appearing among the instances of our PDB $\dd$
and let $p_f\coloneqq P(\cal E_f)$. We make two assumptions concerning our
access to the probability measure of $\dd$:

\begin{enumerate}
\item[(i)] the expected size $\Ex(S_\dd)=\sum_{f\in F(\dd)} p_f$ of $\dd$ is
       known and
\item[(ii)] given $f$, we have oracle access to $p_f$.
\end{enumerate}

Note that these two assumptions are, for example, easily achievable if we
obtained $\dd$ by completing a finite \ti{} PDB as described in
\cref{sec:completions}.\par

\begin{proposition}\label{pro:app}
Let $0<\epsilon<\frac{1}{2}$. Then there exists an algorithm that, given a
Boolean query $Q\in\FO[\tau,\UU]$ and access to a tuple-independent PDB
$\dd\in\dbinst$ (via (i),(ii)), computes an \emph{additive} approximation $p$
of $P(Q)$ with error guarantee $\epsilon$, that is,

\begin{equation*}
  P(Q)-\epsilon\quad\underset{\mathclap{(a)}}{\leq}\quad p
                \quad\underset{\mathclap{(b)}}{\leq}\quad P(Q)+\epsilon\text.
\end{equation*}

\end{proposition}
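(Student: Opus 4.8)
The plan is to reduce the infinite problem to a finite one by \emph{truncating} $\dd$ to an explicitly computed finite fact set whose complementary probability mass is below $\epsilon$, evaluating $Q$ exactly on the resulting \emph{finite} tuple-independent PDB, and bounding the incurred error by that mass. Concretely: since $\dd$ is countable and tuple-independent, $E \coloneqq \Ex(S_\dd) = \sum_{f\in\Facts} p_f$ is finite by \cref{cor:finitesize}, and by assumption (i) its value is available. The algorithm enumerates $\Facts$ (possible since $\UU$ is computable, and the schema is finite), queries each $p_f$ via (ii), and maintains the partial sum $s_k$ of the first $k$ values. As the $p_f$ are nonnegative and sum to $E$, the sequence $(s_k)_k$ increases to $E$, so after finitely many steps it reaches an index $k$ with $s_k > E - \epsilon$; the algorithm stops there and sets $F_0 \coloneqq \{f_1,\dots,f_k\}$, noting that $\sum_{f\in\Facts\setminus F_0} p_f = E - s_k < \epsilon$.

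Next I would build the finite tuple-independent PDB $\dd_0$ with fact set $F_0$ and marginals $(p_f)_{f\in F_0}$, and compute $p \coloneqq \Pr_{D_0\sim\dd_0}(D_0\models Q)$ exactly. This is effective: $\dd_0$ has $2^{\lvert F_0\rvert}$ instances, the probability of each $D_0\subseteq F_0$ is $\prod_{f\in D_0}p_f\prod_{f\in F_0\setminus D_0}(1-p_f)$, and deciding $D_0\models Q$ for a fixed finite relational structure is decidable even over the infinite universe $\UU$, because evaluating the quantifiers of $Q$ requires only $\adom(D_0)$, the constants occurring in $Q$, and a bounded number of further distinct elements of $\UU$ (the active‑domain locality also underlying \cref{fact:adom}). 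Brute force over all $2^{\lvert F_0\rvert}$ instances suffices; any query-evaluation algorithm for finite tuple-independent PDBs may be substituted, which is the sense in which the finite algorithm is ``lifted''. The algorithm outputs $p$.

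Finally I would bound $\lvert P(Q)-p\rvert$ by coupling $\dd$ and $\dd_0$ through the restriction $D\mapsto D\cap F_0$. Tuple-independence makes the events $\cal E_f$ and their complements jointly independent, so for every finite $D_0\subseteq F_0$ we get $\Pr_{D\sim\dd}(D\cap F_0 = D_0) = \prod_{f\in D_0}p_f\prod_{f\in F_0\setminus D_0}(1-p_f) = \Pr_{\dd_0}(\{D_0\})$; hence $D\cap F_0$ is distributed as $\dd_0$ and $\Pr_\dd\big((D\cap F_0)\models Q\big) = p$. Put $A \coloneqq \{D\subseteq F_0\} = \bigcap_{f\in\Facts\setminus F_0}\ol{\cal E_f}$. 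On $A$ we have $D = D\cap F_0$, so the events $\{D\models Q\}$ and $\{(D\cap F_0)\models Q\}$ agree on $A$, whence their symmetric difference lies in $\ol A = \bigcup_{f\notin F_0}\cal E_f$; by the union bound $P(\ol A) \le \sum_{f\notin F_0}p_f < \epsilon$. Therefore $\lvert P(Q)-p\rvert \le P(\ol A) < \epsilon$, giving both inequalities (a) and (b). For a non-Boolean $Q(\vec x)$ one additionally invokes \cref{fact:adom}: all tuples with positive answer probability lie in the finite set $(\adom(F_0)\cup\adom(Q))^{\lvert\vec x\rvert}$, and each corresponding Boolean query $Q(\vec a)$ is approximated as above.

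The argument is short, and there is no deep obstacle once the truncation idea is in place; the one point that genuinely needs care — and the only place a hypothesis is indispensable — is the \emph{termination} of the fact‑enumeration loop. Without knowing the value $E$ of the expected size (assumption (i)), the algorithm could never certify that the not-yet-inspected tail of fact probabilities is light, since the per-fact oracle (ii) alone gives no halting criterion; this is why the expected‑size access, and not merely the marginal‑probability oracle, is required.
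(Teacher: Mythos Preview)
Your argument is correct and follows the same truncation strategy as the paper: enumerate facts until the remaining mass is below~$\epsilon$, evaluate $Q$ on the resulting finite tuple-independent PDB (your $\dd_0$ coincides with the paper's conditional distribution on~$\Omega_n$, since conditioning the full tuple-independent measure on $\{D\subseteq F_0\}$ yields exactly the product measure on $F_0$), and bound the error by the tail. The one notable difference is in the error analysis. The paper lower-bounds $P(\Omega_n)=\prod_{i>n}(1-p_i)$ via the analytic estimate $\prod_{i>n}(1-p_i)\ge e^{-\frac{3}{2}\sum_{i>n}p_i}$ (proved in its appendix by a Taylor expansion of $\ln(1-x)$), then splits $P(Q)$ by conditioning on $\Omega_n$ to obtain (a) and (b) separately. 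Your coupling-plus-symmetric-difference argument handles both inequalities at once with just the union bound $P(\ol A)\le\sum_{f\notin F_0}p_f<\epsilon$, bypassing that detour entirely. Your route is more elementary and needs no side condition like $p_i\le\tfrac12$ on the tail; the paper's route incidentally yields the slightly sharper intermediate inequality $p\le(1+\epsilon)P(Q)$.

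One small slip in your closing remark on non-Boolean queries: it is not true that every tuple with positive answer probability lies in $(\adom(F_0)\cup\adom(Q))^{\lvert\vec x\rvert}$, since facts outside $F_0$ can still have positive (though small) marginal probability and contribute answer tuples over elements of $\adom(F_\omega)\setminus\adom(F_0)$. The additive-$\epsilon$ guarantee for such tuples still holds (your algorithm outputs $0$ for them, while the true probability is at most $P(\ol A)<\epsilon$), but the justification via \cref{fact:adom} as stated does not apply to the full PDB, only to~$\dd_0$.
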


\begin{proofsketch}
We will omit some technical details of the proof in this presentation. They can
be found in detail in the \hyperref[euler1proof]{appendix}.\par

Let $F(\dd) = \{f_1,f_2,\dots\}$ and let $p_i \coloneqq p_{f_i}$. Choose $n$
large enough such that for all $i>n$ we have $p_i\leq\sfrac{1}{2}$ and
$e^{\alpha_n}\le 1+\epsilon$ and $e^{-\alpha_n}\ge 1-\epsilon$. This is
possible because $\alpha_n\to0$ as $n$ approaches $\infty$ and the function
$e^x$ is continuous at $0$.  Also, an appropriate $n$ can be found
algorithmically by \emph{systematically} listing facts until the remaining
probability mass is small enough.\par

Let $r$ be the quantifier rank of the input query $Q$ (that is, the maximum
nesting depth of quantifiers), and let $s$ be the number of constants from
$\UU$ appearing in $Q$. Let $\Omega_n = 2^{\{f_1,\dots,f_n\}}$. As always, we
denote the complement of an event $\cal E$ by $\ol{\cal E}$. Note that every
instance $D$ of $\Omega_n$ is $r$-equivalent (that is, equivalent for Boolean
queries up to quantifier rank $r$) to some finite structure of size $O(n+r+s)$.
Hence, $P(Q\under\Omega_n)$ can be computed by a traditional closed-world query
evaluation algorithm for finite tuple-independent PDBs. We let $p$ be the
output of this evaluation and return $p$ as our approximate answer.\par

\begin{figure}[H]
	\centering
	\begin{tikzpicture}
		\begin{scope}
			\clip[draw] (0,0) ellipse (3cm and 1cm);
			\draw[pattern=north east lines,pattern color=gray]
				(-.75,-.2) ellipse (1.6cm and .6cm);
			\draw (-1.5,-2) to (-1,3);
		\end{scope}
		\node at (-1.75,1.15) {$\Omega_n$};
		\node at (-.65,1.35) {$\ol{\Omega_n}$};
		\node at (-.75,-.2)
			{\contour{white}{$\{D\in\Omega\colon D\models Q\}$}};
	\end{tikzpicture}
	\caption{Our approximate answer is the probability of $D$ satisfying
	$Q$ conditioned on $\Omega_n$ (in the image, the fraction of the left
	side that is shaded). We use rough bounds for the remaining probability
	mass to derive our approximation guarantee.}
	\label{fig:naive}
\end{figure}
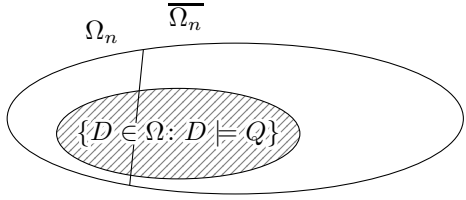

We will now establish the bounds on the error of this approximation. For an
illustration of the situation, see \cref{fig:naive} above.\par

First, we can show that 

\begin{equation}\tag{$*$}\label{eq:euler1}%
P(\Omega_n)=\smashoperator{\prod_{f\notin\{f_1,\dots,f_n\}}}\;\; (1-p_f) \geq 
e^{-\alpha_n}
\end{equation}

(proven in the \hyperref[euler1proof]{appendix}). From this inequality we infer

\begin{equation*}\label{eq:app}
P(Q) = {\underbrace{P(\Omega_n)}_
                    {\leq 1}}
       \cdot p +
       {\underbrace{P(\ol{\Omega_n})}_
                    {\mathclap{\leq 1-e^{-\alpha_n}\leq\epsilon}}}
       \cdot{\underbrace{P(Q\under\ol{\Omega_n})}_
                         {\leq 1}}
\end{equation*}

and therefore immediately $p\geq P(Q)-\epsilon$, showing
\hyperref[eq:app]{(a)}.\par

Towards \hyperref[eq:app]{(b)}, we have 

\begin{equation*}
P(Q) = {\underbrace{P(\Omega_n)}_{\geq e^{-\alpha_n}}}
       \cdot p +
       {\underbrace{P(\ol{\Omega_n})\cdot P(Q\under\ol{\Omega_n})}_
                     {\geq 0}}
\end{equation*}

and hence $p \leq e^{\alpha_n} P(Q)\leq(1+\epsilon)P(Q)\leq P(Q)+\epsilon$.
\qedhere
\end{proofsketch}\par

As we noted before, the additive approximation of \cref{pro:app} can
be extended to allow the evaluation of \FO-queries with free variables. Here we
use the relaxed version of query semantics that was introduced in
\cref{ssec:querysem} where we are only interested in marginal
probabilities of different tuples belonging to the result. These probabilities
can be approximated in the following way: suppose $Q=Q(\vec x)$ where $\vec
x=(x_1,\dots,x_k)$ are the free variables of the $\FO$-formula $Q$. From $Q$ we
can obtain $\lvert\adom(\Omega_n)\rvert^k$ many sentences $Q(\vec a)$ by
plugging in all the possible valuations $\vec a$ of $\vec x$ from $\adom(
\Omega_n)^k$ as constants. The probability of $\vec a$ to belong to the output
of the query $Q$ is equal to the probability of the sentence $Q(\vec a)$ being
satisfied in our PDB. With the procedure above, this probability can be
approximated up to an additive error of $\epsilon$. Note that this
approximation only contains facts from $\Omega_n$.\par\bigskip

The following proposition shows that there is no hope to replace the additive
approximation guarantee of \cref{pro:app} by a multiplicative one
(which is more common in approximation algorithms). We cannot even do this for
a very simple fixed conjunctive query.  Let $\Sigma$ be a finite alphabet and
let $\tau$ be a database schema. We say that a Turing machine $M$
\emph{represents} a \ti{} PDB over $\Sigma,\tau$ of \emph{weight} $w$ if it
computes a function $p_M \colon \Facts[\tau,\Sigma^*] \to \ds Q$ such that
$\sum_{f\in\Facts[\tau,\Sigma^*]}p_M(f) = w$. The PDB $\dd_M$ represented by
$M$ is the tuple-independent PDB with universe $\UU = \Sigma^*$, schema $\tau$
and fact probabilities $p_M(f)$. Note that if we have a Turing machine $M$
representing a PDB $\dd_M$ in this sense then the two assumptions (i), (ii) of
\cref{pro:app} are again satisfied with $p_f\coloneqq p_M(f)$.\par

\begin{proposition}\label{pro:multiplicative}
Let $\Sigma=\{0,1\}$ and $\tau=\{R,S\}$ for a unary relation symbols $R, S$.
Let $Q$ be the Boolean query $\exists x\colon R(x)$ in $\FO[\tau,\UU]$.
Furthermore, let $c\geq 1$. There is no algorithm $A$ that, given a Turing
machine $M$ representing a tuple-independent PDB over $\Sigma,\tau$ of weight
$1$, computes a number $p$ such that

\begin{equation*}
      {\sfrac{1}{c}}\cdot\Pr\nolimits_{D\sim\dd_M}(D\models Q)
\leq p
\leq c\cdot\Pr\nolimits_{D\sim\dd_M}(D\models Q)\text.
\end{equation*}

\end{proposition}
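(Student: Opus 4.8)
The plan is to derive a contradiction with the undecidability of the halting problem. First I would record the elementary observation that any $c$-multiplicative approximation $p$ of $\Pr_{D\sim\dd_M}(D\models Q)$ satisfies $p>0$ if and only if $\Pr_{D\sim\dd_M}(D\models Q)>0$: if the true value is $0$, then $\tfrac1c\cdot 0\le p\le c\cdot 0$ forces $p=0$, and if it is positive, then $p\ge\tfrac1c\cdot\Pr_{D\sim\dd_M}(D\models Q)>0$. Consequently, an algorithm $A$ as in the statement would decide, on input $M$, whether $\Pr_{D\sim\dd_M}(D\models Q)>0$. Since $Q$ is $\exists x\colon R(x)$ and $\dd_M$ is tuple-independent with fact probabilities $p_M(f)$, we have $\Pr_{D\sim\dd_M}(D\models Q)=1-\prod_{w\in\Sigma^*}\bigl(1-p_M(R(w))\bigr)$; as $\sum_f p_M(f)=1<\infty$, this product converges and equals $1$ precisely when all $p_M(R(w))=0$. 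Hence $A$ would decide whether some fact $R(w)$ has positive probability in $\dd_M$, and it suffices to reduce the halting problem to this question.

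Given an arbitrary Turing machine $N$, I would build a machine $M_N$ representing a weight-$1$ tuple-independent PDB over $\Sigma=\{0,1\}$, $\tau=\{R,S\}$ as follows. All facts other than $R(0^i)$ and $S(0^i)$ for $i\ge 1$ get probability $0$. On input $R(0^i)$, the machine simulates $N$ (say, on the empty input) for $i$ steps and outputs $2^{-i-1}$ if $N$ halts within $i$ steps and $0$ otherwise; on input $S(0^i)$, it simulates $N$ for $i$ steps and outputs $2^{-i-1}$ if $N$ halts within $i$ steps and $2^{-i}$ otherwise. This is a total computable, rational-valued function. The two facts to check are: (1) for every $i$ the contributions of $R(0^i)$ and $S(0^i)$ sum to exactly $2^{-i}$ whatever $N$ does, so the total weight is $\sum_{i\ge1}2^{-i}=1$ and, by \cref{pro:construction}, $\dd_{M_N}$ is a well-defined tuple-independent PDB; and (2) some $R(0^i)$ has positive probability if and only if $N$ halts within $i$ steps for some $i$, i.e. if and only if $N$ halts.

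Putting these together: given such an $A$, one decides whether $N$ halts by running $A$ on $M_N$ — which terminates with a valid output since $M_N$ represents a legitimate weight-$1$ PDB — and answering ``$N$ halts'' exactly if the returned number is positive. This contradicts the undecidability of the halting problem and proves the proposition. The only genuinely delicate point, and the one that dictates the shape of the construction, is keeping $p_{M_N}$ total computable while still pinning the total weight to exactly $1$: a construction that on some input would run $N$ until it halts fails to be total, so the weight bookkeeping has to be done locally at each index $i$ via the complementary outputs on $R(0^i)$ and $S(0^i)$.
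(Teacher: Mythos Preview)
Your proof is correct and follows essentially the same approach as the paper's: both observe that a multiplicative approximation decides whether $\Pr(Q)=0$, then reduce from an undecidable Turing-machine problem by encoding step-bounded simulation into the $R$-fact probabilities and using the auxiliary relation $S$ to complementarily absorb the remaining mass so the total weight is exactly~$1$. The only cosmetic differences are that the paper reduces from language emptiness (via Rice's theorem) using a pairing function to encode input/time pairs, whereas you reduce from halting on the empty input with a simpler unary index.
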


The detailed proof can be found in the \hyperref[app:62]{appendix}.\par

Let us close this section with some remarks regarding complexity issues of the
previously described approximation procedure. Basically, its run-time is given
by the run-time of the finite evaluation algorithm when applied to a PDB with a
universe of size $n$. In the proof of \cref{pro:app},
$n=n(\epsilon)$ was the number of facts that needed to be taken into
consideration in order to obtain the error guarantee $\epsilon$ and is
basically determined by the rate of convergence of the series of fact
probabilities. The way we produced $n$ systematically ensures its existence. In
the best case, the facts $f_1,f_2,\dots$ are enumerated by decreasing
probability. For a geometric series of fact probabilities for example,
$n=\Omega\big(\log\big(\sfrac{1}{1-\epsilon}\big)\big)$. It is worth noting,
though, that series in general may converge arbitrarily slowly
\cite[pp.~310-311]{Knopp1996}. For the moment, we leave it at that and refer to
future work for a more thorough examination of the complexity of query
evaluation in infinite PDBs.

\section{Conclusions}\label{sec:disc}
In this work, we proposed a framework for probabilistic databases that extends
the standard finite notion, which dominated theoretical research on
probabilistic databases so far. Our model provides a theoretical foundation for
several practical systems allowing for values from infinite domains (albeit
still in a restricted way) and opens avenues to new, even more flexible
systems.\par

We discussed independence assumptions in infinite PDBs, most notably the simple
model of tuple-independence. We showed how to construct countable
tuple-independent PDBs realizing any given sequence of fact probabilities,
provided the sum of these fact probabilities converges, and we also proved that
the convergence condition is necessary. An important application of this result
is that it allows us to complete PDBs to cover all potential instances (with
respect to the underlying domain). We also gave a construction of countable
block-independent disjoint probabilistic databases with given fact
probabilities.  Although we did not focus on algorithmic questions, the
aforementioned completions provide the mathematical background for applying
open-world semantics to (classically closed-world) finite PDBs.\par

In general, we expect query evaluation (even approximate) to be difficult in
infinite PDBs. The way to approach it may be to combine classical database
techniques with probabilistic inference techniques from AI, as they are used
for relational languages like BLOG~\cite{Milch+2005},
ProbLog~\cite{DeRaedt+2016}, and Markov logic~\cite{Singla+2007}. However, the
underlying inference problems have a high computational complexity, and
algorithms are mostly heuristic, so we see little hope for obtaining algorithms
tractable in the worst case.  As we showed, the class of tuple-independent PDBs
with respect to some countable universe and schema is not powerful enough to
capture all possible probability spaces, even when extended with $\FO$-views. A
more detailed investigation of the exact boundaries of expressivity, as well as
the corresponding considerations for \bid{} PDBs are still pending. We think
that concise and powerful representation systems for infinite PDBs are of
general interest, even if they might turn out to be only possible as
approximations (in some sense) of arbitrary PDBs. For whatever models or
systems come forth, the consecutive goal is to have efficient (approximation)
algorithms that perform query evaluation, perhaps among other database specific
operations on our probability spaces.\par

Our technical results are, at least implicitly, mostly about countable PDBs. It
would be nice to extend these results, for example the construction of
tuple-independent PDBs to uncountable PDBs in meaningful way. But in fact, even
more basic questions regarding the construction of suitable $\sigma$-algebras
and probability spaces and the measurability of queries and views need a
thorough investigation for uncountable PDBs. Of course, algorithmic
tractability becomes even more challenging in the uncountable.

\section*{Acknowledgments}
We are very grateful to Dan Suciu, for the time he spent reading a manuscript
version of this paper and his valuable comments. In particular, he proposed the
condition $\cal E_F \in\mathfrak A$ for measurable $F$ in
\cref{def:pdb} as a strengthening of our original condition $\cal
E_f\in\fr A$ and made various other proposals improving our treatment of
uncountable PDBs. We also highly appreciate the work of the reviewers for their
suggestions and their criticism and for guiding our attention to some relevant
related works and towards reconsidering some technical issues.  Additionally,
we would like to thank Steffen van Bergerem for helpful remarks.\par

This work is supported by the \href{http://www.dfg.de/}{German Research Council
(DFG)} \href{http://gepris.dfg.de/gepris/projekt/282652900}{Research Training
Group 2236 UnRAVeL}.

\phantomsection\addcontentsline{toc}{section}{\refname}

\clearpage

\clearpage\appendix

\section*{Appendix: Omitted Proofs}\label{app:omittedproofs}
\addcontentsline{toc}{section}{Appendix: Omitted Proofs}

\subsection*{Proof of \cref{lem:tech1}}\label{app:23}
\addcontentsline{toc}{subsection}{Proof of Lemma 2.3}

\begin{apxclaim}
Let $(a_i)_{i\in I}$ be a countably infinite sequence of real numbers such that
the series $\sum_i a_i$ is absolutely convergent. Then it holds that $\prod_i
\big(1+a_i\big) = \sum_{J\subseteq I\text{\normalfont, finite}} \prod_{j\in J}
a_j$ and both sides of this equation are absolutely convergent.
\end{apxclaim}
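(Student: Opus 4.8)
The plan is to reduce the statement to the elementary finite distributive law and then pass to the limit twice, controlling convergence via \cref{fac:knopp}. Fix an enumeration identifying the index set with $\{1,2,\dots\}$ and, for $n\ge 1$, set $I_n\coloneqq\{1,\dots,n\}$. The only algebra involved is the identity
\[
  \prod_{i\in I_n}\big(1+x_i\big)=\sum_{J\subseteq I_n}\ \prod_{i\in J}x_i ,
\]
which holds for arbitrary reals $x_1,\dots,x_n$ and is proven by a one-line induction on $n$ (multiplying out). Instantiating it at $x_i=\lvert a_i\rvert$ will give absolute convergence of the right-hand side of \eqref{eq:subsetsum}, and instantiating it at $x_i=a_i$ and letting $n\to\infty$ will give the identity itself.

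First I would treat absolute convergence. At $x_i=\lvert a_i\rvert$ the finite identity reads $\prod_{i\in I_n}(1+\lvert a_i\rvert)=\sum_{J\subseteq I_n}\prod_{i\in J}\lvert a_i\rvert$, a sum of non-negative terms. As $n$ grows, the families $\{J:J\subseteq I_n\}$ increase and exhaust all finite subsets of $I$, and every finite collection of finite subsets of $I$ is contained in one of them; hence the supremum of these partial sums is exactly $\sum_{J\subseteq I,\text{ finite}}\prod_{i\in J}\lvert a_i\rvert$. Simultaneously the left-hand side increases to $\prod_{i\in I}(1+\lvert a_i\rvert)$, which is finite by the first part of \cref{fac:knopp} since $\sum_i\lvert a_i\rvert$ converges. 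So $\sum_{J\subseteq I,\text{ finite}}\prod_{i\in J}\lvert a_i\rvert=\prod_{i\in I}(1+\lvert a_i\rvert)<\infty$, i.e.\ the right-hand side of \eqref{eq:subsetsum} converges absolutely; absolute convergence of the left-hand side is immediate from the first part of \cref{fac:knopp}.

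Then I would prove the identity. At $x_i=a_i$ the finite identity is $\prod_{i\in I_n}(1+a_i)=\sum_{J\subseteq I_n}\prod_{i\in J}a_i$. As $n\to\infty$, the left-hand side tends to the value of $\prod_{i\in I}(1+a_i)$: the product converges by \cref{fac:knopp}, and its value is by definition the limit of its partial products — which stays correct even if finitely many factors vanish (then both value and limit are $0$, and only finitely many can vanish, since $\sum_i\lvert a_i\rvert<\infty$ forces $a_i\to 0$). The right-hand side is the sequence of partial sums of $\sum_{J\subseteq I,\text{ finite}}\prod_{i\in J}a_i$ along the exhausting family $\{J:J\subseteq I_n\}$, and this converges to the total value precisely because that series is absolutely convergent by the previous paragraph. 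Taking the limit in the finite identity yields \eqref{eq:subsetsum}.

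I expect the only real obstacle to be this limit bookkeeping: verifying that the particular exhaustion by the sets $\{J:J\subseteq I_n\}$ actually computes the reordering-invariant sum over all finite subsets of $I$, and that the paper's product-convergence convention still delivers ``value $=$ limit of partial products.'' Both points are handled by the non-negativity/cofinality observation in the $\lvert a_i\rvert$ step, which is then transported to the signed case through absolute convergence; everything else is routine.
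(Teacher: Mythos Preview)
Your proof is correct and follows essentially the same route as the paper's: both reduce to the finite distributive law $\prod_{i\in I_n}(1+x_i)=\sum_{J\subseteq I_n}\prod_{i\in J}x_i$, first instantiate at $x_i=\lvert a_i\rvert$ to obtain absolute convergence of the subset sum (via \cref{fac:knopp}), and then repeat the passage to the limit with the signs in place. Your write-up is somewhat more explicit about the exhaustion argument and the corner case of vanishing factors, but there is no substantive difference in strategy.
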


\begin{proof}[Proof \cite{Simmons2013}]
Wlog., we take $I=\NN$. Since $\sum_i a_i$ is absolutely convergent, so is
$\prod_i (1+a_i)$ by \cref{fac:knopp}. In particular, $\prod_i (1+a_i)$ is
convergent. Then,

\begin{align*}
   \prod_{i\geq 1}\big(1+\lvert a_i\rvert \big)
=& \lim_{n\to\infty} \sprod_{i=1}^n \big(1+\lvert a_i\rvert\big)\\
=& \lim_{n\to\infty}\mkern6mu
   \smashoperator{\sum_{J\subseteq\{1,\dots,n\}}}\,
   {\textstyle\sprod_{j\in J}} \lvert a_j\rvert\\
=& \lim_{n\to\infty}\mkern6mu
   \smashoperator{\sum_{J\subseteq\{1,\dots,n\}}}\,
   \left\lvert{\textstyle\sprod_{j\in J}} a_j\right\rvert\\
=& \smashoperator[l]{\sum_{\substack{J\subseteq\NN\\
	\text{\normalfont finite}}}}
	\mkern4mu\sprod_{j\in J} \lvert a_j\rvert \text.
\end{align*}

The last notation used in the last equation is motivated by the (immediate)
absolute convergence of the series. Exactly the same calculation, omitting
$\lvert \cdot\rvert$, shows 
\begin{equation*}
	\prod_{i\geq 1}(1+a_i)
	= \sum_{\substack{J\subseteq \ds N\\\text{finite}}} 
	\sprod_{j\in J} a_j\text.
\end{equation*}
\end{proof}

\subsection*{Proof of \cref{lem:oldbidnotion}}\label{app:412}
\addcontentsline{toc}{subsection}{Proof of Lemma 4.12}%
Let $\UU$ be some universe and $\tau$ be some database scheme.

\begin{apxclaim}
Let $\dd = (\Omega,\fr A,P)$ be a countable PDB over $\tau$ and\/ $\UU$ and let
$\cal B$ be a partition of $\Facts$ such that for each $B\in\cal B$, the events
\/ $\cal E_{B_1}$ and\/ $\cal E_{B_2}$ are mutually exclusive for any disjoint
$B_1,B_2 \subseteq B$. Then the following conditions are equivalent:

\begin{enumerate}
\item[(2)] The sequences $(\cal E_{B_i'})_{1\leq i\leq k}$ are independent for
	all $k\in\NN$ and all $B_1',\dots,B_k'$ being measurable subsets from
	mutually different blocks.\label{eq:2prime}
\item[(2')] The sequences $(\cal E_f)_{f\in F}$ are independent in $\dd$ for
	every set $F$ of facts such that $F$ contains at most one fact from
	each block.
\end{enumerate}

\end{apxclaim}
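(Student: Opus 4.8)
The plan is to prove the two implications separately; $(2)\Rightarrow(2')$ is immediate, and essentially all the work lies in $(2')\Rightarrow(2)$. Throughout I will exploit that $\dd$ is countable: the set $F(\dd)$ of facts occurring in some instance of $\Omega$ is then a countable union of finite sets, hence countable, and $\cal E_f=\emptyset$ (so $P(\cal E_f)=0$) for every $f\notin F(\dd)$. Consequently each event $\cal E_{B'}=\bigcup_{f\in B'}\cal E_f$ actually equals the \emph{countable} union $\bigcup_{f\in B'\cap F(\dd)}\cal E_f$; in particular every event appearing below lies in $\fr A$ (these events are $\cal E_F$'s for measurable $F$, or singleton-fact events $\cal E_{\{f\}}$, combined under finitely many intersections and countably many unions). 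I will also use the elementary fact that for a countable family $(A_n)_n$ of events with $P(A_n\cap A_m)=0$ whenever $n\neq m$ one has $P\big(\bigcup_n A_n\big)=\sum_n P(A_n)$: apply $\sigma$-additivity to the disjointified sequence $A_n\setminus\bigcup_{m<n}A_m$, using $P\big(A_n\cap\bigcup_{m<n}A_m\big)\le\sum_{m<n}P(A_n\cap A_m)=0$. Applied inside a single block $B$ to the family $(\cal E_f)_{f\in B}$ --- whose pairwise intersections have probability $0$ by the mutual-exclusivity hypothesis, taking $B_1=\{f\}$, $B_2=\{f'\}$ --- this yields $P(\cal E_{B'})=\sum_{f\in B'}P(\cal E_f)$ for every measurable $B'\subseteq B$, an identity I will use twice.

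For $(2)\Rightarrow(2')$: let $F$ contain at most one fact per block and let $M=\{f_1,\dots,f_k\}\subseteq F$ be finite. Then $f_1,\dots,f_k$ lie in pairwise distinct blocks, so applying (2) to the measurable singleton subsets $B_i'=\{f_i\}$ gives $P\big(\bigcap_i\cal E_{f_i}\big)=\prod_i P(\cal E_{f_i})$. Since $M$ was arbitrary, $(\cal E_f)_{f\in F}$ is independent.

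For $(2')\Rightarrow(2)$: fix measurable subsets $B_i'\subseteq B_i$ ($1\le i\le k$) from pairwise distinct blocks. Distributing the intersection over the unions defining the $\cal E_{B_i'}$ yields
\[
  \bigcap_{i=1}^{k}\cal E_{B_i'}
  \;=\;\bigcup_{(f_1,\dots,f_k)\in B_1'\times\cdots\times B_k'}\ \bigcap_{i=1}^{k}\cal E_{f_i},
\]
and only tuples with all $f_i\in F(\dd)$ contribute a nonempty set, so this is a countable union. If $(f_1,\dots,f_k)\neq(f_1',\dots,f_k')$, then $f_j\neq f_j'$ in a common block $B_j$ for some $j$, hence $\bigcap_i\cal E_{f_i}\cap\bigcap_i\cal E_{f_i'}\subseteq\cal E_{f_j}\cap\cal E_{f_j'}$ has probability $0$; so the union above is almost disjoint and, by the observation from the first paragraph,
\[
  P\Big(\bigcap_{i=1}^{k}\cal E_{B_i'}\Big)
  \;=\;\sum_{(f_1,\dots,f_k)\in B_1'\times\cdots\times B_k'}P\Big(\bigcap_{i=1}^{k}\cal E_{f_i}\Big).
\]
Each summand is indexed by a fact set $\{f_1,\dots,f_k\}$ with one fact per block, so $(2')$ gives $P\big(\bigcap_i\cal E_{f_i}\big)=\prod_i P(\cal E_{f_i})$; thus the nonnegative multi-index series $\sum_{(f_1,\dots,f_k)}\prod_i P(\cal E_{f_i})$ converges, with sum $P\big(\bigcap_i\cal E_{B_i'}\big)\le 1$. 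Tonelli's theorem then lets me factor it, and using $P(\cal E_{B_i'})=\sum_{f_i\in B_i'}P(\cal E_{f_i})$ I obtain
\[
  P\Big(\bigcap_{i=1}^{k}\cal E_{B_i'}\Big)
  \;=\;\sum_{(f_1,\dots,f_k)}\ \prod_{i=1}^{k}P(\cal E_{f_i})
  \;=\;\prod_{i=1}^{k}\ \sum_{f_i\in B_i'}P(\cal E_{f_i})
  \;=\;\prod_{i=1}^{k}P(\cal E_{B_i'}),
\]
which is condition (2).

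I do not foresee a deep obstacle. The one genuinely delicate part of $(2')\Rightarrow(2)$ is the bookkeeping: verifying that every union and sum in sight is effectively countable (so that $\sigma$-additivity and the Tonelli rearrangement are legitimate), and invoking the almost-disjoint-union identity correctly both within a single block and across the product of $k$ distinct blocks. Distributing intersections over unions and factoring a convergent nonnegative multi-index sum is then routine.
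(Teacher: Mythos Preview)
Your proposal is correct and follows essentially the same route as the paper's proof: the implication $(2)\Rightarrow(2')$ is immediate, and for $(2')\Rightarrow(2)$ both arguments distribute the intersection over the unions $\cal E_{B_i'}=\bigcup_{f\in B_i'}\cal E_f$, use the within-block exclusivity hypothesis to turn the resulting (countable) union into a sum, apply $(2')$ to factor each summand, and then factor the multi-index sum back into $\prod_i P(\cal E_{B_i'})$. The only notable difference is in the countability bookkeeping: the paper first passes to an auxiliary block partition $\cal B_0$ in which all facts of probability~$0$ are gathered into a single dummy block (restricting attention to the countable set $F_\omega$ of positive-probability facts), whereas you restrict directly to the countable set $F(\dd)$ of facts occurring in $\Omega$; both devices serve the same purpose and the core computation is identical.
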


\begin{proof}
The implication ($\Rightarrow$) holds by definition.\par

For the other direction let $\dd=(\Omega,\fr A, P)$ be a countable \bid{} PDB.
Let $F_\omega$ be the (countable) set of facts $f$ with $P(\cal E_f)>0$ and let
$\cal B_\omega$ be the (countable) set of blocks belonging to
$F_\omega$.\par\medskip

We prove the following intermediate claim: let $\cal B_0$ be the partition of
$\Facts$ with $\cal B_0 = \{B\colon B = B'\cap F_\omega\text{ for some
}B'\in\cal B_\omega\} \cup \{B_0\}$ where $B_0 = \{f\colon P(\cal E_f)=0\}$
(note that $\cal B_0$ and $\cal B$ only differ in null sets). Then the
following holds.

\begin{center}
If $\dd$ is \bid{} wrt. $\cal B$, then it is \bid{} wrt. $\cal B_0$.
\end{center}

This is easy to see. First consider condition \eqref{itm:bidexcl} from the
definition of \bid{} PDBs. Let $B_1'$ and $B_2'$ be disjoint measurable sets
contained in the same block $B$ of $\cal B_0$. If they are from $B_0$, they are
trivially exclusive, because in this case both of them have measure zero. Note
that we used countability here. Otherwise, they were disjoint measurable
subsets of the same original block from $\cal B$ and hence also exclusive.\par

Now consider \eqref{itm:bidexcl} from the \bid{} definition. We claim that
events $\cal E_{B_i'}$ are independent for all finite collections $(B_i')$ of
measurable sets from different blocks. If one of the $B_i'$ is contained in
$B_0$, its measure is $0$ (again by countability of $\dd$) and the claimed
independence immediate. Otherwise, all $B_i'$ were measurable subsets of
original blocks from $\cal B_\omega$. Hence, they are independent.\par\medskip

Due to this observation, we may assume that the blocks of $\dd$ are given by
$\cal B_0$ as defined above. In the following, we let thus be $\cal B = \cal
B_0$ and use $\cal B_\omega$ in the same meaning as defined above for blocks of
$\cal B$ (i.\,e. the \emph{new} $\cal B_\omega$ is obtained by restricting the
\emph{old} blocks of $\cal B_\omega$ to $F_\omega$).\par

We proceed to show \hyperref[eq:2prime]{(2)} for the blocks from the partition
described above. Let $B_1',\dots, B_k'$ be a sequence of measurable subsets of
distinct blocks from $\cal B_\omega$ (we may restrict our consideration to
$\cal B_\omega$ since all events $\cal E_f$ with $f$ belonging to $B_0$ are
null sets). Let $\cal E_i$ denote $\cal E_{B_i'}$.

\begingroup\allowdisplaybreaks
\begin{align*}
     P\vleft(\smashoperator[r]{\bigcap_{1\leq i\leq k}}\;\cal E_i\vright)
={}& P\vleft(\bigcap_{1\leq i\leq k} \bigcup_{b\in B_i'} \cal E_b\vright)\\
={}& P\vleft(\smashoperator[r]{\bigcup_{\substack{(b_1,\dots,b_k)\\
                                        \in B_1'\times\dots\times B_k'}}}
             \;\cal E_{b_1} \cap \dots \cap \cal E_{b_k}\vright)\\
={}& \smashoperator{\sum_{\substack{(b_1,\dots,b_k)\\
                             \in B_1'\times\dots\times B_k'}}}
     \;P\big(\cal E_{b_1}\cap\dots\cap \cal E_{b_k}\big)\\
={}& \smashoperator{\sum_{b_1\in B_1'}}\;
     \dots\;
     \smashoperator[l]{\sum_{b_k\in B_k'}}\;
     \smashoperator[r]{\prod_{1\leq i\leq k}}\;P\big(\cal E_{b_i}\big)\\
={}& \smashoperator[l]{\prod_{1\leq i\leq k}}
     \sum_{b\in B_i'} P\big(\cal E_{b}\big)
={}  \smashoperator{\prod_{1\leq i\leq k}}\;P\big(\cal E_i\big)\text.\qedhere
\end{align*}\endgroup

\end{proof}

\subsection*{Proof of \cref{pro:bidconstruction}}\label{app:413}
\addcontentsline{toc}{subsection}{Proof of Proposition 4.13}%
Let $\UU$ be some universe and $\tau$ be some database scheme.

\begin{apxclaim}
Let $\cal B$ be a partition of $\Facts$ into blocks and for every block
$B\in\cal B$ let $(p_f^B)_{f\in B}$ such that $p_f^B\in[0,1]$ and $\sum_{f\in
B} p_f^B\leq 1$. If

\begin{equation*}
\sum_{f\in F} p_f^{B(f)} < \infty
\end{equation*}

for all finite subsets $F\subseteq\Facts$ and with $B(f)$ being the block
containing $f\in F$, then we can construct a (countable) \bid{} PDB $\dd =
(\Omega,\fr A,P)$ (wrt. $\cal B$) with the property that $P(\cal E_f) =
p_f^{B(f)}$.
\end{apxclaim}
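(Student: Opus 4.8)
The plan is to transplant the construction and correctness proofs of \cref{ssec:construction} from single facts to blocks. First I would use the convergence of the full sum $\sum_{B\in\cal B}\sum_{f\in B}p_f^B$ (that is, hypothesis \eqref{eq:5}) to note that the set $F_\omega$ of facts with $p_f^{B(f)}>0$ is countable, since for each $k\in\NN$ only finitely many facts can have $p_f^{B(f)}>1/k$; consequently only countably many blocks $\cal B_\omega\subseteq\cal B$ meet $F_\omega$, and each such block is itself countable. As in the proof of \cref{lem:oldbidnotion} in the appendix, a PDB whose non-null facts are exactly those in $F_\omega$ is \bid{} with respect to $\cal B$ if and only if it is \bid{} with respect to the reduced partition consisting of the blocks $B\cap F_\omega$ ($B\in\cal B_\omega$) together with a single dummy block $B_0$ gathering all remaining (probability-$0$) facts. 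It therefore suffices to build a \bid{} PDB for this reduced partition, so from here on I assume every block is countable, put $p_\bot^B\coloneqq 1-\sum_{f\in B}p_f^B$ for $B\in\cal B_\omega$ and $p_\bot^{B_0}\coloneqq 1$, $p_f^{B_0}\coloneqq 0$, take $\Omega\coloneqq\dbinst$ and $\fr A\coloneqq 2^\Omega$, and define $P(\{D\})$ and its extension to $\fr A$ as in the proof sketch of \cref{pro:bidconstruction}, with $\bf D_\omega$ the set of finite subsets of $F_\omega$.

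The first thing to verify is that $P$ is a probability measure, i.e.\ that $\sum_{D\in\bf D_\omega}P(\{D\})=1$. Here the role of \eqref{eq:5} is that $\sum_{B\in\cal B_\omega}\bigl(1-p_\bot^B\bigr)=\sum_{f\in F_\omega}p_f^{B(f)}<\infty$, so by \cref{fac:knopp} the product $\prod_{B\in\cal B_\omega}p_\bot^B$ is absolutely convergent (only finitely many $p_\bot^B$ fall below $\tfrac12$, hence only finitely many vanish), and so is every product obtained from it by deleting finitely many factors. Observing that a \emph{good} $D\subseteq F_\omega$ is nothing but a choice $\beta(\cdot,D)$ of an element of $B\cup\{\bot\}$ for each block $B$, all but finitely many equal to $\bot$, with $P(\{D\})=\prod_B p_{\beta(B,D)}^B$ (bad $D$ and $D\not\subseteq F_\omega$ contribute $0$), I would sum over all good $D$ using a generalized distributive law for series of non-negative reals: expand finitely many block-factors $B_1,\dots,B_N$, let the tail product $\prod_{B\notin\{B_1,\dots,B_N\}}p_\bot^B\to 1$, and use that each block-factor $p_\bot^B+\sum_{f\in B}p_f^B$ equals $1$ by the definition of $p_\bot^B$, to conclude $\sum_{D\in\bf D_\omega}P(\{D\})=\prod_{B\in\cal B_\omega}1=1$. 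Equivalently, one may run the Rényi-style cancellation of \cref{lem:tech2} verbatim, cancelling one block at a time and invoking \cref{lem:tech1}.

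It remains to check the two conditions of \cref{def:bid}. Condition~\eqref{itm:bidexcl} is immediate: for disjoint measurable $B_1,B_2\subseteq B$, every instance in $\cal E_{B_1}\cap\cal E_{B_2}$ contains two distinct facts of the single block $B$ and is therefore bad, so $P(\cal E_{B_1}\cap\cal E_{B_2})=\sum_{D\in(\cal E_{B_1}\cap\cal E_{B_2})\cap\bf D_\omega}P(\{D\})=0$. For condition~\eqref{itm:bidindependence}, since \eqref{itm:bidexcl} already holds, I would apply \cref{lem:oldbidnotion} to reduce it to the independence of $(\cal E_f)_{f\in F}$ over all transversals $F$ (at most one fact per block), and then discard the facts with $p_f^{B(f)}=0$, whose events are null (and which satisfy $P(\cal E_f)=0=p_f^{B(f)}$ anyway, so the required marginals hold there too). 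It then suffices to show $P\bigl(\bigcap_{i=1}^m\cal E_{f_i}\bigr)=\prod_{i=1}^m p_{f_i}^{B(f_i)}$ for finitely many facts $f_1,\dots,f_m\in F_\omega$ lying in pairwise distinct blocks, the case $m=1$ giving the marginals and the general case the independence. This is done exactly as in \cref{lem:constristi}: $P\bigl(\bigcap_i\cal E_{f_i}\bigr)=\sum_D P(\{D\})$, the sum being over good $D$ with $\{f_1,\dots,f_m\}\subseteq D$, which factors as $\prod_{i=1}^m p_{f_i}^{B(f_i)}$ times the sum of $\prod_{B\neq B(f_1),\dots,B(f_m)}p_{\beta(B,D'')}^B$ over all good $D''$ disjoint from $B(f_1)\cup\dots\cup B(f_m)$; and this latter sum equals $1$ by the same distributive-law computation as above, now restricted to the complementary blocks.

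I expect the only real work to be in the mass computation of the second paragraph: making the rearrangements of the doubly-indexed, absolutely convergent products over blocks rigorous, and in particular handling the finitely many blocks with $p_\bot^B=0$ (for which a good $D$ can still have positive probability only if it avoids them). This is bookkeeping rather than a new idea, since all the convergence needed is delivered by \eqref{eq:5} together with \cref{fac:knopp} and \cref{lem:tech1}; everything else is a direct echo of the tuple-independent case.
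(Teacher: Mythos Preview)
Your proposal is correct and follows essentially the same route as the paper's appendix proof: identical construction, the same reduction to countably many countable blocks plus a dummy block, and the same factor-out-then-show-the-remainder-sums-to-$1$ argument for both $P(\Omega)=1$ and the independence of transversal facts (your direct distributive-law computation for $P(\Omega)=1$, using that each block contributes the factor $p_\bot^B+\sum_{f\in B}p_f^B=1$, is in fact slightly cleaner than the paper's two-stage reduction via \cref{lem:tech2}, which you also mention as an alternative). One small slip in your final paragraph: a good $D$ with positive probability must \emph{intersect}, not avoid, each of the finitely many blocks with $p_\bot^B=0$, since such a block contributes the factor $p_{\beta(B,D)}^B=p_\bot^B=0$ to $P(\{D\})$ precisely when $D\cap B=\emptyset$.
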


\begin{proof}
Let $\Omega$ be the set of finite subsets of $\Facts$ and $\fr A\coloneqq
2^\Omega$. Just like in the proof of \cref{lem:oldbidnotion}, the set
$F_\omega$ of facts with positive marginal probability is countable. Again, we
may suppose that all impossible facts are bundled into a dummy block $B_0$ such
that all the (countably many) remaining blocks $\cal B_\omega$ are countable
and cover exactly $F_\omega$.\par%
We call an instance \emph{good}, if it contains at most one fact from every
block $B$. Otherwise, it is called \emph{bad}. Let $\Omega_+$ ($\Omega_-$) be
the set of good (bad) instances. We define a mapping $\beta\colon\cal
B\times\Omega_+\to\Facts\cup\{\bot\}$ that, given a block $B$ and a good
instance $D$ returns the (unique, if existent) fact $f$ from $D$ lying in $B$
and returns \enquote{$\bot$} if $D$ does not contain a fact from $B$.

\begin{equation*}
\beta(B,D)\coloneqq
\begin{dcases}
f    & \text{if }D\cap B=\{f\}\text,\\
\bot & \text{if }D\cap B=\emptyset\text.
\end{dcases}
\end{equation*}

For $D\in\dbinst$, we set

\begin{equation*}
P\big(\{D\}\big) \coloneqq
\begin{dcases}
\sprod_{B\in\cal B} p_{\beta(B,D)}^B & \text{if }D\in\bf \Omega_+\text,\\
0                                    & \text{if }D\in\bf \Omega_-\text.
\end{dcases}
\end{equation*}

where $\smash{p_\bot^B\coloneqq 1-\sum_{f\in B} p_f^B\in[0,1]}$ is the
\emph{remainder mass} of block $B\in\cal B$. Letting $\bf D_\omega$ denote the
set of finite subsets of $F_\omega$, we complete the definition of our
probability space by setting $P(A) = \sum_{D\in A\cap \bf D_\omega} P(\{D\})$
for every further element $A$ of $\fr A$. For $P$ to be a probability measure,
we will show $P(\Omega)= P(\bf D_\omega) = 1$. From now on, the reasoning will
proceed analogously to the various proofs regarding the tuple-independence
construction of \cref{ssec:construction}. Since the notions are however
slightly more involved, we present the full proof below.\par

For an instance $D$ let $\cal B_D \coloneqq \{B\in\cal B\colon B\cap D\neq
\emptyset\}$ be the set of blocks $B\in\cal B$ such that $D$ contains a fact
from $B$. Let $\bf D_\omega^+ \coloneqq \bf D_\omega \cap \Omega_+$. Observe

\begin{align*}
   P(\Omega)
&= P(\bf D_\omega^+)\\
&= \sum_{D\in\bf D_\omega^+} P(\{D\})\\
&= \sum_{\substack{\cal B'\subseteq\cal B_\omega\\\text{\normalfont finite}}}
   \sum_{\substack{D\in\bf D_\omega^+\\\cal B_D = \cal B'}}
   \sprod_{B\in\cal B'} p_{\beta(B,D)}^B
   \sprod_{B\in\cal B_\omega\setminus\cal B'} p_\bot^B\\
&= \sum_{\substack{\cal B'\subseteq\cal B_\omega\\\text{\normalfont finite}}}
   \sprod_{B\in\cal B_\omega\setminus\cal B'} p_\bot^B
   \sum_{\substack{D\in\bf D_\omega^+\\\cal B_D=\cal B'}}
   \sprod_{B\in\cal B'} p_{\beta(B,D)}^B.
\end{align*}

Note that we may omit the dummy block from the calculations, since it is only
present via $p_\bot^B=1$.  Consider the inner sum and suppose $\cal B' =
\{B_1,\dots,B_k\}$. Then

\begin{align}
   \sum_{\substack{D\in\bf D_\omega^+\\\cal B_D=\cal B'}}
   \sprod_{B\in\cal B'} p_{\beta(B,D)}^B
&= \smashoperator{\sum_{f_1\in B_1}}\mkern16mu
   \dots\mkern12mu
   \smashoperator{\sum_{f_k\in B_k}}
   \mkern4mu p_{f_1}^{B_1}\dots\mkern2mu p_{f_k}^{B_k}\notag\\
&= \smash[t]{\vleft(\sum_{f_1\in B_1}p_{f_1}^{B_1}\vright)}
   \mkern4mu\dots
   \smash[t]{\vleft(\sum_{f_k\in B_k}p_{f_k}^{B_k}\vright)}\notag\\
&= \prod_{B\in\cal B'}\big(1-p_{\bot}^B\big)\label{eq:appendixsum1}\text.
\end{align}

In order to keep the similarity to the proof of \cref{ssec:construction}, we
let $p_\top^B$ denote $1-p_\bot^B$. Then continuing the above calculation and
proceeding analogously to the proof of \cref{lem:tech2} (which makes in
particular uses \cref{lem:tech1}, we have 

\bgroup\allowdisplaybreaks%
\begin{align*}
P(\Omega)
&= \sum_{\substack{\cal B'\subseteq\cal B_\omega\\\text{\normalfont finite}}}
   \sprod_{B\in\cal B_\omega\setminus\cal B'} p_\bot^B
   \sum_{\substack{D\in\bf D_\omega^+\\\cal B_D=\cal B'}}
   \sprod_{B\in\cal B'} p_{\beta(B,D)}^B\\
&= \sum_{\substack{\cal B'\subseteq\cal B_\omega\\\text{\normalfont finite}}}
   \sprod_{B\in\cal B_\omega\setminus\cal B'} p_\bot^B
   \sprod_{B\in\cal B'}\big(1-p_{\bot}^B\big)\\
&= \sum_{\substack{\cal B'\subseteq\cal B_\omega\\\text{\normalfont finite}}}
   \sprod_{B\in\cal B'}p_\top^B
   \sprod_{B\in\cal B_\omega\setminus\cal B'} \big(1-p_\top^B\big)\\
&= \sum_{\substack{\cal B'\subseteq\cal B_\omega\\\text{\normalfont finite}}}
   \sprod_{B\in\cal B'}p_\top^B
   \sum_{\substack{\cal B''\subseteq \cal B_\omega-\cal B'\\
   \text{\normalfont finite}}}
   \sprod_{B\in\cal B''}\big(-p_\top^B\big)\\
&= \sum_{\substack{\cal B'\subseteq\cal B_\omega\\\text{\normalfont finite}}}
   \sum_{\substack{\cal B_\omega \supseteq \cal B''\supseteq \cal B'\\
   \text{\normalfont finite}}}
   \sprod_{B\in\cal B'}p_\top^B
   \sprod_{B\in\cal B''-\cal B'}\big(-p_\top^B\big)\\
&= \sum_{\substack{\cal B''\subseteq\cal B_\omega\\\text{\normalfont finite}}}
   \sum_{\cal B' \subseteq \cal B''}
   \sprod_{B\in\cal B'}p_\top^B
   \sprod_{B\in\cal B''-\cal B'}\big(-p_\top^B\big)\\
&= 1\text.
\end{align*}
\egroup%

The last step is justified by the same reasoning as in the proof of 
\cref{lem:tech2}: for $\cal B'' = \emptyset$, the inner sum consists only of an
empty product and thus equals $1$; otherwise, the inner sum evaluates to $0$
(which can be seen by fixing some $B''\in\cal B''$ and splitting the inner sum
into two sums---one with $B''\in\cal B'$ and one with $B''\notin\cal B'$;
factoring out $\smash{p_\top^{B''}}$ respectively $\smash{-p_\top^{B''}}$, it
is easy to see that both sums cancel each other out).\par

Now that we have established that $P$ is a probability measure, we still have
to show that $\dd$ is block-independent-disjoint. By \cref{lem:oldtinotion}, it
suffices to show the independence of the events $\cal E_f$ for facts from
different blocks. Let thus $F$ be a finite set of facts from $F_\omega$ such
that $F$ contains at most one fact per block. For all facts $f$, let $B(f)$
denote the block that contains $f$ and $\cal B(F)\coloneqq\{B(f)\colon f\in
F\}$. Let $\smash{p_f \coloneqq p_f^{B(f)}}$.  Let $\Omega_F$ be the set of
\emph{good} instances containing $F$.

\begin{align}
   P\vleft(\bigcap_{f\in F}\cal E_f\vright)
&= \sum_{D\in\Omega_F} P\big(\{D\}\big)\notag\\
&= \sum_{D\in\Omega_F} \sprod_{B\in\cal B_\omega} p_{\beta(B,D)}^B\notag\\
&= \underbrace{\prod_{B\in\cal B_F} p_f^B}_{=\prod_{f\in F}p_f}
   \vleft(\sum_{D\in\Omega_F}
          \sprod_{B\in\cal B_\omega\setminus\cal B_F} p_{\beta(B,D)}^B\vright)
   \label{eq:appendixcapEf}
\end{align}

Like in the proof of \cref{lem:constristi}, we show that the parenthesized
term equals $1$. Note that this sum only ranges over blocks \emph{not} from
$\cal B_F$ (excluding the dummy block since $F\subseteq F_\omega$ and it would
hence appear as a factor $p_\bot^B = 1$, which we omit). Note that the summand
for $D\in\Omega_F$ is equal to the product $\sprod_{B\in\cal B_\omega-\cal B_F}
p_{\beta(B,D')}^B$ where $D'=D-F$ and that subtracting $F$ constitutes a
bijection between $\Omega_F$ and $\Omega_F'\coloneqq\big\{D\in\bf
D_\omega^+\colon D\cap B=\emptyset \text{ for all }B\in\cal B_F\big\}\text.$
Hence,

\begin{equation*}
   \sum_{D\in\Omega_F}
   \sprod_{B\in\cal B_\omega-\cal B_F} p_{\beta(B,D)}^B\\
=  \sum_{D\in\Omega_F'}
   \sprod_{B\in\cal B_\omega-\cal B_F} p_{\beta(B,D)}^B\text.
\end{equation*}

Now suppose $\cal B_F = \{B_1,\dots,B_k\}$ and let $\tilde B_i=B_i\cup\{\bot\}$
($1\leq i\leq k$). Then

\begin{equation*}
   \smashoperator{\prod_{\substack{(\tilde f_1,\dots, \tilde f_k)\\
                    \in \tilde B_1 \times \dots \times \tilde B_k}}}
   \mkern12mu p_{\tilde f_i}^{\tilde B_i}
=  1
\end{equation*}

with an easy calculation like in \eqref{eq:appendixsum1}. Thus,

\begin{align*}
   \sum_{D\in\Omega_F'}
   \sprod_{B\in\cal B_\omega-\cal B_F} p_{\beta(B,D)}^B
=& \sum_{D\in\Omega_F'}
   \sprod_{B\in\cal B_\omega-\cal B_F} p_{\beta(B,D)}^B
   \sprod_{\substack{(\tilde f_1,\dots, \tilde f_k)\\
                    \in \tilde B_1 \times \dots \times \tilde B_k}}
   p_{\tilde f_i}^{\tilde B_i}\\
=& \sum_{D\in\bf D_\omega^+} P\big(\{D\}\big) = 1\text.
\end{align*}

Since $P(\cal E_f) = p_f$ (this is already immediate from the above for
$F=\{f\}$) and continuing at \eqref{eq:appendixcapEf}, we arrive at

\begin{equation*}
	P\vleft(\bigcap_{f\in F}\cal E_f\vright)
=  \prod_{B\in\cal B_F} p_f^B
=  \prod_{f\in F} p_f^{B(f)}\text.\qedhere
\end{equation*}

\end{proof}

\subsection*{Proof of Claim \texorpdfstring{\eqref{eq:euler1}}{(*)} in 
\cref{pro:app}}
\label{euler1proof}\addcontentsline{toc}{subsection}{Proof of Claim (∗) in
	Proposition 6.1}%
\begin{apxclaim}
Let $(p_i)_{i\geq 1}$ be a sequence with $\sum_i p_i < \infty$ and 
$p_i\in\big[0,\sfrac{1}{2}\big)$. Then
\begin{equation*}
\prod_i\big(1-p_i\big) \geq \exp\big({{\sfrac{3}{2}}\textstyle\sum_i p_i}\big)
\end{equation*}
(where $\exp(x)=e^x$ for $x\in\RR$).
\end{apxclaim}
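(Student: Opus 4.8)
The plan is to derive the statement (whose right-hand side must of course read $\exp\bigl(-\tfrac32\sum_i p_i\bigr)$, as the displayed version with $+\tfrac32$ is false whenever some $p_i>0$) from the elementary one-variable inequality
\[
\ln(1-x)\;\geq\;-\tfrac32\,x\qquad\text{for all }x\in\bigl[0,\tfrac12\bigr].
\]
First I would prove this inequality by a short concavity argument. Set $g(x)\coloneqq\ln(1-x)+\tfrac32 x$ on $[0,\tfrac12]$. Then $g(0)=0$ and $g''(x)=-(1-x)^{-2}<0$, so $g$ is concave on $[0,\tfrac12]$; a concave function on an interval lies above the chord joining its endpoint values, hence above $\min$ of them. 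Thus it suffices to check $g(0)=0\ge 0$ and $g(\tfrac12)=\tfrac34-\ln 2>0$, which gives $g\ge 0$ on $[0,\tfrac12]$, i.e.\ the claimed inequality. (Equivalently one may compute $g'(x)=\tfrac{1-3x}{2(1-x)}$, observe that $g$ increases on $[0,\tfrac13]$ and decreases on $[\tfrac13,\tfrac12]$, and again only needs the two endpoint values to be nonnegative.)

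Next I would lift this pointwise bound to the infinite product. Since each $p_i\in[0,\tfrac12)$ we get $\ln(1-p_i)\ge-\tfrac32 p_i$, and since $p_i<1$ every partial product is positive, so taking logarithms is legitimate:
\[
\ln\prod_{i=1}^n(1-p_i)\;=\;\sum_{i=1}^n\ln(1-p_i)\;\ge\;-\tfrac32\sum_{i=1}^n p_i\;\ge\;-\tfrac32\sum_{i\ge 1} p_i ,
\]
the last step using $p_i\ge 0$ and $\sum_i p_i<\infty$. Exponentiating yields $\prod_{i=1}^n(1-p_i)\ge\exp\bigl(-\tfrac32\sum_i p_i\bigr)$ for every $n$. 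The partial products form a non-increasing sequence (each new factor $1-p_{n+1}\le 1$) bounded below by this positive constant, so they converge; their limit is the value of $\prod_i(1-p_i)$ (which is in fact absolutely convergent by \cref{fac:knopp}(1), since $\sum_i(-p_i)$ is absolutely convergent), and it inherits the bound $\prod_i(1-p_i)\ge\exp\bigl(-\tfrac32\sum_i p_i\bigr)$, as required.

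The only mildly delicate point is that the constant $\tfrac32$ must work on the whole interval $[0,\tfrac12)$ and not merely on $[0,\tfrac13]$ — on $[0,\tfrac13]$ the bound $\ln(1-x)\ge-\tfrac32 x$ already follows from comparing derivatives at $0$, but for $x\in(\tfrac13,\tfrac12)$ one genuinely needs the concavity (or monotonicity) argument together with the numerical fact $\tfrac34>\ln 2$. Everything else is routine bookkeeping about convergence of the product, handled by \cref{fac:knopp}.
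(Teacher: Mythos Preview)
Your proof is correct, including your observation that the displayed inequality needs a minus sign in the exponent (the paper's own proof in fact establishes $1-p_i\geq\exp(-\tfrac32 p_i)$, so the missing sign is a typo in the statement of the claim).

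Both the paper's proof and yours reduce to the same pointwise inequality $\ln(1-x)\geq -\tfrac32 x$ on $[0,\tfrac12]$, but the routes differ. The paper expands $\ln(1-p_i)=-\sum_{k\geq 1}p_i^k/k$, bounds the tail $\sum_{k\geq 3}p_i^k/k$ by $\tfrac12 p_i^2$ using the geometric-series bound $\sum_{k\geq 1}p_i^k\leq 1$ (valid for $p_i\leq\tfrac12$), and then uses $p_i^2\leq\tfrac12 p_i$ once more to collapse $-p_i-p_i^2$ to $-\tfrac32 p_i$. Your argument replaces this series manipulation by the concavity of $g(x)=\ln(1-x)+\tfrac32 x$ together with the numerical check $g(\tfrac12)=\tfrac34-\ln 2>0$; this is shorter and avoids any infinite-series bookkeeping. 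The paper's approach, on the other hand, makes the origin of the constant $\tfrac32$ visible (one $p_i$ from the linear term, one half from the quadratic term, and one half as a crude bound on everything beyond), whereas in your proof the constant appears only through the single endpoint evaluation. The final lift from the one-variable bound to the infinite product is handled the same way in both; you spell it out more carefully via monotone partial products and \cref{fac:knopp}, while the paper leaves this step implicit.
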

\begin{proof}
For $\lvert x\rvert < 1$, the Taylor series expansion of $\ln(1+x)$ is
\begin{equation*}
\ln(1+x)=\sum_{k\geq 1}\sfrac{(-1)^{k-1}x^k}{k}\text.
\end{equation*}
Hence, with $x\coloneqq -p_i < 0$ and $(-1)^{k-1}(-p_i)^k=-p_i^k$,
\begin{equation*}
1-p_i = \exp\big(-\sum\nolimits_{k\geq 1}\sfrac{p_i^k}{k}\big)\text.
\end{equation*}
Since $p_i\leq\sfrac{1}{2}$,
\begin{equation*}
1\geq \sum_{k\geq 1}p_i^k \geq \sum_{k\geq 1} \sfrac{2}{k+2} p_i^k\text,
\end{equation*}
by multiplying with $-p_i^2/2$, we have
\begin{equation}\label{eq:taylorcutoff}
-\sfrac{p_i^2}{2} \leq - \sum_{k\geq 1}\sfrac{p_i^{k+2}}{k+2}
\end{equation}
and thus
\begin{equation*}
     1-p_i
=    \exp\big(-\sum\nolimits_{k\geq 1}\sfrac{p_i^k}{k}\big)
\mathrel{\overset{\eqref{eq:taylorcutoff}}{\geq}}
     \exp\big(-p_i-\sfrac{p_i^2}{2}-\sfrac{p_i^2}{2}\big)
\geq \exp\big(-\sfrac{3}{2}p_i\big)
\end{equation*}
since $p_i<\sfrac{1}{2}$. The claim follows.
\end{proof}

\subsection*{Proof of \cref{pro:multiplicative}}\label{app:62}
\addcontentsline{toc}{subsection}{Proof of Proposition 6.2}%

\begin{apxclaim}
Let $\Sigma=\{0,1\}$ and $\tau=\{R,S\}$ for a unary relation symbols $R, S$.
Let $Q$ be the Boolean query $\exists x\colon R(x)$ in $\FO[\tau,\UU]$.
Furthermore, let $c\geq 1$. There is no algorithm $A$ that, given a Turing
machine $M$ representing a tuple-independent PDB over $\Sigma,\tau$ of weight
$1$, computes a number $p$ such that

\begin{equation}\label{eq:multapp}
      {\sfrac{1}{c}}\cdot\Pr\nolimits_{D\sim\dd_M}(D\models Q)
\leq p
\leq c\cdot\Pr\nolimits_{D\sim\dd_M}(D\models Q)\text.
\end{equation}

\end{apxclaim}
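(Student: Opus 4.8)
The plan is to reduce the halting problem to the existence of such an algorithm $A$. The key observation is that a \emph{multiplicative} $c$-approximation cannot blur the distinction between probability zero and positive probability: if the returned value $p$ satisfies $\sfrac{1}{c}\cdot\Pr_{D\sim\dd_M}(D\models Q)\leq p\leq c\cdot\Pr_{D\sim\dd_M}(D\models Q)$, then $p>0$ if and only if $\Pr_{D\sim\dd_M}(D\models Q)>0$; and since $Q=\exists x\colon R(x)$ and $\dd_M$ is tuple-independent, the latter holds if and only if $p_M(R(w))>0$ for at least one $w\in\Sigma^*$. Thus from $A$ we would obtain an algorithm deciding, for a given Turing machine representing a weight-$1$ tuple-independent PDB, whether it assigns positive probability to some $R$-fact.

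Next I would encode halting into such a machine. Fix a computable bijection $i\mapsto w_i$ from $\NN$ onto $\Sigma^*=\{0,1\}^*$ (with computable inverse), and set $a_i\coloneqq\sfrac{6}{\pi^2 i^2}$, so that $\sum_{i\geq 1}a_i=1$. Given a Turing machine $N$ whose halting on the empty input we wish to decide, I construct a machine $M$ that, on input a fact $f\in\Facts[\tau,\Sigma^*]$, first determines the index $i$ with $f=R(w_i)$ or $f=S(w_i)$, then simulates $N$ for $i$ steps, and outputs $a_i$ in the case that $f=R(w_i)$ and $N$ halted within $i$ steps, or that $f=S(w_i)$ and $N$ did \emph{not} halt within $i$ steps; in all remaining cases it outputs $0$. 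This $M$ is total, since it only ever simulates $N$ for finitely many steps, and it is obtained computably from $N$. For each $i$ exactly one of $R(w_i),S(w_i)$ receives probability $a_i$ and the other receives $0$, so $\dd_M$ has weight $\sum_i a_i=1$; and writing $t$ for the halting time of $N$ ($t=\infty$ if $N$ never halts), the fact $R(w_i)$ has positive probability precisely for $i\geq t$. Hence some $R$-fact has positive probability if and only if $N$ halts.

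Putting the two parts together finishes the argument: if the algorithm $A$ existed, then on input $M$ it would return a value $p$ that is positive exactly when $N$ halts, yielding a decision procedure for the halting problem, a contradiction. I expect the only genuinely delicate point to be ensuring that $\dd_M$ has weight exactly $1$ regardless of whether $N$ halts; the auxiliary relation $S$ is introduced precisely as a reservoir that absorbs the mass $a_i$ whenever it is withheld from $R(w_i)$, so that the total weight stays pinned at $1$ while the event $\exists x\colon R(x)$ remains entirely controlled by the behaviour of $N$.
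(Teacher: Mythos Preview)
Your argument follows essentially the same blueprint as the paper's: reduce an undecidable problem by encoding it into the $R$-facts while using $S$ as a mass reservoir to pin the total weight at~$1$, then exploit that a multiplicative approximation distinguishes zero from nonzero probability. The paper reduces from $\textsc{Empty}$ (Turing machines accepting the empty language, undecidable by Rice's Theorem) via a pairing $k=\langle n,t\rangle$ and weights $2^{-k}$; your reduction from halting on the empty input is slightly simpler since no pairing is needed.

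There is one technical slip: the paper's notion of a Turing machine \emph{representing} a PDB requires $p_M$ to take values in $\mathbb Q$, but your weights $a_i=6/(\pi^2 i^2)$ are irrational and hence cannot be output by $M$. Replacing $a_i$ by $2^{-i}$ (which also sums to $1$ over $i\geq 1$) fixes this without affecting anything else in your argument.
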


\begin{proof}
It will be convenient in the proof to identify $\Sigma^*$ with the set $\NN$ of
positive integers (the string $x\in\Sigma^*$ represents the integer with binary
representation $1x$). Moreover, we let $\langle\cdot,\cdot\rangle
\colon\NN^2\to\NN$ be a pairing function (such as $\smash{\langle m,n \rangle =
{\sfrac{1}{2}}\big(x+y-1\big)\big(x+y-2\big)+2}$).\par

For a Turing machine $N$ with input alphabet $\Sigma$, we let $L_N$ be the set
of all $n\in\NN$ accepted by $N$. By Rice's Theorem, the set \EMPTY{} of all
(encodings of) Turing machines $N$ with $L_N = \emptyset$ is undecidable.  For
every $t\in\NN$, let $L_{N,t}$ be the set of all $n\in\NN$ such that $N$
accepts $n$ in at most $t$ steps. Note that $L_{N,t}$ is decidable (even in
polynomial time) and that $L_N = \bigcup_{t\in\NN} L_{N,t}$.\par

We reduce \EMPTY{} to our query evaluation problem. Let $N$ be a Turing machine
with input alphabet $\Sigma$. We construct a Turing machine $M = M(N)$
representing a tuple-independent PDB over $\Sigma,\tau$ of weight $1$ that
works as follows: given a string $f\in(\Sigma\cup\tau\cup\{(,)\})^*$, it checks
whether $f\in\Facts[\tau,\Sigma]$. If this is not the case, it rejects.
Otherwise, $f$ is of the form $R(k)$ or $S(k)$ for some $k\in\NN$. Let $n,t \in
\NN$ such that $k = \langle n,t\rangle$. Then if $f=R(k)$ and $n\in L_{N,t}$
\emph{or} if $f=S(k)$ and $n\notin L_{N,t}$, the machine $M$ returns $p_M(f)
\coloneqq 2^{-k}$. Otherwise, $M$ returns $p_M(f) \coloneqq 0$.  Then
$\sum_{f\in\Facts[\tau,\Sigma*]}p_M(f) = \sum_{k\in\NN} 2^{-k} = 1$.  This
shows that $M$ represents a tuple-independent PDB over $\Sigma$ and $\tau$ of
weight $1$.\par

Moreover, $p_M(R(k)) = 0$ for all $k\in\NN$ if and only if $L_N=\emptyset$.
Thus, $\Pr_{D\sim\dd_M}(D\models Q) = 0$ if and only if $L_N = \emptyset$.\par

Now suppose we have an approximation algorithm $A$ satisfying
\eqref{eq:multapp} for some $c\geq 1$. Then $p = 0$ if and only if
$\Pr_{D\sim\dd_M}(D\models Q) = 0$. Hence we can use $A$ to decide \EMPTY.
\end{proof}


\begin{thebibliography}{10}

\bibitem{Abiteboul+2011}
Serge Abiteboul, Tsz-Hong~Hubert Chan, Evgeny Kharlamov, Werner Nutt, and
  Pierre Senellart.
\newblock {Capturing Continuous Data and Answering Aggregate Queries in
  Probabilistic XML}.
\newblock {\em ACM Transactions on Database Systems (TODS)}, 36(4):25:1--25:45,
  2011.

\bibitem{Abiteboul+1995}
Serge Abiteboul, Richard Hull, and Victor Vianu.
\newblock {\em {Foundations of Databases}}.
\newblock Addison-Wesley, Boston, MA, USA, 1st edition, 1995.

\bibitem{Abiteboul+2009}
Serge Abiteboul, Benny Kimelfeld, Yehoshua Sagiv, and Pierre Senellart.
\newblock {On the Expressiveness of Probabilistic XML Models}.
\newblock {\em The VLDB Journal - The International Journal on Very Large Data
  Bases}, 18(5), 2009.

\bibitem{Aggarwal+2009}
Charu~C. Aggarwal and Philip~S. Yu.
\newblock {A Survey of Uncertain Data Algorithms and Applications}.
\newblock {\em IEEE Transactions on Knowledge and Data Engineering (TKDE)},
  21(5):609--623, 2009.

\bibitem{Agrawal+2006}
Parag Agrawal, Omar Benjelloun, Anish~Das Sarma, Chris Hayworth, Shubha Nabar,
  Tomoe Sugihara, and Jennifer Widom.
\newblock {Trio: A System for Data, Uncertainty, and Lineage}.
\newblock In {\em Proceedings of the 32nd International Conference on Very
  Large Data Bases (VLDB '06)}, VLDB '06, pages 1151--1154. VLDB Endowment,
  2006.

\bibitem{Agrawal+2009}
Parag Agrawal and Jennifer Widom.
\newblock {Continuous Uncertainty in Trio}.
\newblock In {\em Proceedings of the 3rd VLDB workshop on Management of
  Uncertain Data (MUD '09)}, pages 17--32, Enschede, The Netherlands, 2009.
  Centre for Telematics and Information Technology (CTIT).

\bibitem{Barany+2017}
Vince B{\'a}r{\'a}ny, Balder Ten~Cate, Benny Kimelfeld, Dan Olteanu, and
  Zografoula Vagena.
\newblock {Declarative Probabilistic Programming with Datalog}.
\newblock {\em ACM Transactions on Database Systems (TODS)}, 42(4):22:1--22:35,
  2017.

\bibitem{Belle2017}
Vaishak Belle.
\newblock {Open-Universe Weighted Model Counting}.
\newblock In {\em Proceedings of the 31st AAAI Conference on Artificial
  Intelligence (AAAI '17)}, pages 3701--3708, Palo Alto, CA, USA, 2017. AAAI
  Press.

\bibitem{Benedikt+2010}
Michael Benedikt, Evgeny Kharlamov, Dan Olteanu, and Pierre Senellart.
\newblock {Probabilistic XML via Markov Chains}.
\newblock {\em Proceedings of the VLDB Endowment}, 3(1--2):770--781, 2010.

\bibitem{Bollobas2001}
B{\'e}la Bollob{\'a}s.
\newblock {\em {Random Graphs}}.
\newblock Cambridge Studies in Advanced Mathematics. Cambridge University
  Press, Cambridge, United Kingdom, 2nd edition, 2001.

\bibitem{Borgwardt+2017}
Stefan Borgwardt, {\.{I}}smail~{\.{I}}lkan Ceylan, and Thomas Lukasiewicz.
\newblock {Ontology-Mediated Queries for Probabilistic Databases}.
\newblock In {\em Proceedings of the 31st {AAAI} Conference on Artificial
  Intelligence (AAAI '17)}, pages 1063--1069, Palo Alto, CA, USA, 2017. AAAI
  Press.

\bibitem{Borgwardt+2018}
Stefan Borgwardt, {\.{I}}smail~{\.{I}}lkan Ceylan, and Thomas Lukasiewicz.
\newblock {Recent Advances in Querying Probabilisitc Knowledge Bases}.
\newblock In {\em Proceedings of the 27th International Joint Conference on
  Artificial Intelligence (IJCAI '18)}, pages 5420--5426. International Joint
  Conferences on Artificial Intelligence, 2018.

\bibitem{Boulos+2005}
Jihad Boulos, Nilesh Dalvi, Bhushan Mandhani, Shobhit Mathur, Chris R\'{e}, and
  Dan Suciu.
\newblock {MYSTIQ: A System for Finding more Answers by Using Probabilities}.
\newblock In {\em Proceedings of the 2005 ACM SIGMOD International Conference
  on Management of Data (SIGMOD '05)}, pages 891--893, New York, NY, USA, 2005.
  ACM.

\bibitem{Ceylan+2016}
\.{I}smail~\.{I}lkan Ceylan, Adnan Darwiche, and Guy Van~den Broeck.
\newblock {Open-World Probabilistic Databases}.
\newblock In {\em Proceedings of the Fifteenth International Conference on
  Principles of Knowledge Representation and Reasoning (KR '16)}, pages
  339--348, Palo Alto, CA, USA, 2016. AAAI Press.

\bibitem{Daley+2003}
Daryl~John Daley and David Vere-Jones.
\newblock {\em {An Introduction to the Theory of Point Processes, Volume I:
  Elementary Theory and Models}}.
\newblock Probability and its Applications. Springer, New York, NY, USA, 2nd
  edition, 2003.

\bibitem{DeRaedt+2016}
Luc De~Raedt, Kristian Kersting, Sriraam Natarajan, and David Poole.
\newblock {\em {Statistical Relational Artificial Intelligence: Logic,
  Probability, and Computation}}.
\newblock Synthesis Lectures on Artificial Intelligence and Machine Learning.
  Morgan \&{} Claypool, San Rafael, CA, USA, 2016.

\bibitem{Dong+2014}
Xin Dong, Evgeniy Gabrilovich, Geremy Heitz, Wilko Horn, Ni~Lao, Kevin Murphy,
  Thomas Strohmann, Shaohua Sun, and Wei Zhang.
\newblock {Knowledge Vault: A Web-scale Approach to Probabilistic Knowledge
  Fusion}.
\newblock In {\em Proceedings of the 20th ACM SIGKDD International Conference
  on Knowledge Discovery and Data Mining (KDD '14)}, pages 601--610, New York,
  NY, USA, 2014. ACM.

\bibitem{Fristedt+1997}
Bert~E. Fristedt and Lawrence~F. Gray.
\newblock {\em {A Modern Approach to Probabilitiy Theory}}.
\newblock Probability and its Applications. Birkh{\"a}user, Cambridge, MA, USA,
  1st edition, 1997.

\bibitem{Green2009}
Todd~J. Green.
\newblock {Models for Incomplete and Probabilistic Information}.
\newblock In Charu~C. Aggarwal, editor, {\em Managing and Mining Uncertain
  Data}, volume~35 of {\em Advances in Database Systems}, chapter~2, pages
  9--43. Springer, Boston, MA, USA, 2009.

\bibitem{Huang+2009}
Jiewen Huang, Lyublena Antova, Christoph Koch, and Dan Olteanu.
\newblock {MayBMS: A Probabilistic Database Management System}.
\newblock In {\em Proceedings of the 2009 ACM SIGMOD International Conference
  on Management of Data (SIGMOD '09)}, pages 1071--1074, New York, NY, USA,
  2009. ACM.

\bibitem{Imielinski+1984}
Tomasz Imieli\'{n}ski and Witold Lipski, Jr.
\newblock {Incomplete Information in Relational Databases}.
\newblock {\em Journal of the ACM (JACM)}, 31(4):761--701, 1984.

\bibitem{Jampani+2008}
Ravindranath Jampani, Fei Xu, Mingxi Wu, Luis~Leopoldo Perez,
  Christopher~Matthew Jermaine, and Peter~Jay Haas.
\newblock {MCDB: A Monte Carlo Approach to Managing Uncertain Data}.
\newblock In {\em Proceedings of the 2008 ACM SIGMOD International Conference
  on Management of Data (SIGMOD '08)}, pages 687--700, New York, NY, USA, 2008.
  ACM Press.

\bibitem{Kennedy+2010}
Oliver Kennedy and Christoph Koch.
\newblock {PIP: A Database System for Great and Small Expectations}.
\newblock In {\em Proceedings of the 26th International Conference on Data
  Engineering (ICDE '10)}, pages 157--168, Washington, DC, USA, 2010. IEEE.

\bibitem{Kimelfeld+2013}
Benny Kimelfeld and Pierre Senellart.
\newblock {Probabilistic XML: Models and Complexity}.
\newblock In {\em Advances in Probabilistic Databases for Uncertain Information
  Management}, volume 304 of {\em Studies in Fuzziness and Soft Computing},
  chapter~3, pages 39--66. Springer, Berlin and Heidelberg, Germany, 2013.

\bibitem{Knopp1996}
Konrad Knopp.
\newblock {\em {Theorie und Anwendung der unendlichen Reihen}}.
\newblock Springer, Berlin and Heidelberg, Germany, 6th edition, 1996.
\newblock An english translation of a previous edition is available under the
  title \emph{Theory and Application of Infinite Series}, 1990, published by
  Dover Publications, Mineola, NY, USA.

\bibitem{Libkin2018}
Leonid Libkin.
\newblock {Certain Answers Meet Zero-One Laws}.
\newblock In {\em Proceedings of the 37th ACM SIGMOD-SIGACT-SIGAI Symposium on
  Principles of Database Systems (PODS '18)}, pages 195--207, New York, NY,
  USA, 2018. ACM.

\bibitem{Lukasiewicz+2016}
Thomas Lukasiewicz and Dan Olteanu.
\newblock {Probabilistic Databases and Reasoning}.
\newblock
  \url{https://www.cs.ox.ac.uk/dan.olteanu/tutorials/olteanu-pdb-kr16.pdf},
  2016.
\newblock Tutorial at the Fifteenth International Conference on Principles of
  Knowledge Representation and Reasoning (KR '16).

\bibitem{Milch+2005}
Brian~Christopher Milch, Bhaskara Marthi, Stuart Russell, David Sontag,
  David~L. Ong, and Andrey Kolobov.
\newblock {BLOG: Probabilistic Models with Unknown Objects}.
\newblock In {\em Proceedings of the 19th International Joint Conference on
  Artificial Intelligence (IJCAI '05)}, pages 1352--1359, St. Louis, MO, USA,
  2005. Morgan Kaufmann.

\bibitem{Mitchell+2015}
Tom Mitchell, William Cohen, Estevam Hruschka, Partha Talukdar, Justin
  Betteridge, Andrew Carlson, Bhavana Dalvi~Mishra, Matthew Gardner, Bryan
  Kisiel, Jayant Krishnamurthy, Ni~Lao, Kathryn Mazaitis, Thahir Mohamed, Ndapa
  Nakashole, Emmanouil Platanios, Alan Ritter, Mehdi Samadi, Burr Settles,
  Richard Wang, Derry Wijaya, Abhinav Gupta, Xinlei Chen, Abulhair Saparov,
  Malcolm Greaves, and Joel Welling.
\newblock {Never-Ending Learning}.
\newblock In {\em Proceedings of the 29th AAAI Conference on Artificial
  Intelligence (AAAI '15)}, pages 2302--2310, Palo Alto, CA, USA, 2015. AAAI
  Press.

\bibitem{Niu+2012}
Feng Niu, Ce~Zhang, Christopher R{\'{e}}, and Jude~W. Shavlik.
\newblock {DeepDive: Web-scale Knowledge-Base Construction using Statistical
  Learning and Inference}.
\newblock In {\em Very Large Data Search: Proceedings of the Second
  International Workshop on Searching and Integrating New Web Data Sources
  (VLDS '12)}, pages 25--28, Aachen, Germany, 2012. CEUR Workshop Proceedings.

\bibitem{Reiter1978}
Raymond Reiter.
\newblock {On Closed World Data Bases}.
\newblock In Herve Gallaire and Jack Minker, editors, {\em Logic and Data
  Bases}, pages 55--76. Plenum Press, New York, NY, USA, 1st edition, 1978.

\bibitem{Renyi2007}
Alfred R\'enyi.
\newblock {\em {Foundations of Probability}}.
\newblock Dover Publications, Mineola, NY, USA, reprint edition, 2007.

\bibitem{Simmons2013}
David Simmons.
\newblock {Infinite Distributive Law}.
\newblock Mathematics Stack Exchange, 2013.
\newblock \url{https://math.stackexchange.com/q/509318}.

\bibitem{Singh+2008}
Sarvjeet Singh, Chris Mayfield, Rahul Shah, Sunil Prabhakar, Susanne Hambrusch,
  Jennifer Neville, and Reynold Cheng.
\newblock {Database Support for Probabilistic Attributes and Tuples}.
\newblock In {\em 2008 {IEEE} 24th International Conference on Data Engineering
  (ICDE '08)}, pages 1053--1061, Washington, DC, USA, 2008. IEEE Computer
  Society.

\bibitem{Singla+2007}
Parag Singla and Pedro Domingos.
\newblock Markov logic in infinite domains.
\newblock In {\em Proceedings of the Twenty-Third Conference on Uncertainty in
  Artificial Intelligence (UAI '07)}, pages 368--375, Arlington, VA, USA, 2007.
  AUAI Press.

\bibitem{Spencer2001}
Joel Spencer.
\newblock {\em {The Strange Logic of Random Graphs}}, volume~22 of {\em
  Algorithms and Combinatorics}.
\newblock Springer, Berlin and Heidelberg, Germany, 2001.

\bibitem{Suciu+2011}
Dan Suciu, Dan Olteanu, Christopher R\'{e}, and Christoph Koch.
\newblock {\em {Probabilistic Databases}}, volume~16 of {\em Synthesis Lectures
  on Data Management}.
\newblock Morgan \&{} Claypool, San Rafael, CA, USA, 1st edition, 2011.

\bibitem{VandenBroeck+2017}
Guy Van~den Broeck and Dan Suciu.
\newblock {Query Processing on Probabilistic Data: A Survey}.
\newblock {\em Foundations and Trends\textsuperscript{\textregistered{}} in
  Databases}, 7(3--4):197--341, 2017.

\bibitem{vanderMeyden1998}
Ron van~der Meyden.
\newblock {Logical Approaches to Incomplete Information: A Survey}.
\newblock In Jan Chomicki and Gunter Saake, editors, {\em Logics for Databases
  and Information Systems}, pages 307--356. Kluwer Academic Publishers,
  Norwell, MA, USA, 1998.

\end{thebibliography}
\end{document}